\newcommand{\DP}[1]{%
  \underset{#1}{\mathrel{\stretchto{\approx}{2mm}}}
}
\newcommand{\mynote}[2]{{\textcolor{#1}{ #2}}}
\definecolor{gray}{gray}{0.4}
\newcommand{\gray}[1]{\mynote{gray}{{\footnotesize #1}}}
\newcommand{\Lap}{\operatorname{\rm Lap}}
\newcommand{\Affspan}{\mathtt{Affspan}}
\newcommand{\NNN}{\mathcal N}
\theoremstyle{plain}
\newtheorem{theorem}{\protect\theoremname}
\newtheorem*{theorem*}{\protect\theoremname}
\newtheorem*{prop*}{\protect\theoremname}
\newtheorem{definition}{\protect\definitionname}
\theoremstyle{definition}
\theoremstyle{plain}
\newtheorem{lem}[definition]{\protect\lemmaname}
\newtheorem{claim}[definition]{\protect\claimname}
\newtheorem{cor}[definition]{\protect\corollaryname}
\newtheorem*{cor*}{\protect\corollaryname}
\newtheorem{remark}[definition]{\protect\remarkname}
\newtheorem{prop}[definition]{\protect\propname}
\newtheorem*{question*}{\protect\questionname}
\newtheorem*{assumption*}{\protect\assumptionname}
\newtheorem{obs}[definition]{\protect\observationname}
\newenvironment{customthm}[1]
  {\innercustomthm}
  {\endinnercustomthm}
\providecommand{\questionname}{Question}
\providecommand{\assumptionname}{Assumption}
\providecommand{\observationname}{Observation}
\providecommand{\corollaryname}{Corollary}
\providecommand{\definitionname}{Definition}
\providecommand{\lemmaname}{Lemma}
\providecommand{\claimname}{Claim}
\providecommand{\theoremname}{Theorem}
\providecommand{\exercisename}{Exercise}
\providecommand{\examplename}{Example}
\providecommand{\remarkname}{Remark}
\providecommand{\factname}{Fact}
\providecommand{\propname}{Proposition}
\newcommand{\remove}[1]{}
\newcommand{\ignore}[1]{}
\newcommand{\R}{\mathbb{R}}
\newcommand{\F}{\mathbb{F}}
\newcommand{\E}{\mathbb{E}}
\newcommand{\I}{\mathcal{I}}
\renewcommand{\O}{\mathcal{O}}
\newcommand{\A}{\mathcal{A}}
\newcommand{\eps}{\epsilon}
\newcommand{\Conv}{\mathtt{Conv}}
\newcommand{\Vol}{\mathtt{Vol}}
\newcommand{\diam}{\mathtt{diam}}
\newcommand{\x}{\boldsymbol{x}}
\newcommand{\N}{\mathbb{N}}
\newcommand{\Z}{\mathbb{Z}}
\renewcommand{\span}{\mathtt{span}}
\newcommand{\XXX}{{\cal X}}
\newcommand{\YYY}{{\cal Y}}
\newcommand{\EEE}{{\cal E}}
\newcommand{\JJJ}{{\cal J}}
\author{
Haim Kaplan\footnote{School of computer science, Tel Aviv University, and Google Research. Supported in part by ISF grant 1156/23 and the Blavatnik Research Foundation.}
\and
Yishay Mansour\footnote{School of computer science, Tel Aviv University, and Google Research. This project has received funding from the European Research Council (ERC) under the European Union’s Horizon 2020 research and innovation program (grant agreement No. 882396), by the Israel Science Foundation, the Yandex Initiative for Machine Learning at Tel Aviv University and a grant from the Tel Aviv University Center for AI and Data Science (TAD).}
\and
Shay Moran\footnote{Departments of Mathematics, Computer Science, and Data and Decision Sciences, Technion and Google Research. Robert J.\ Shillman Fellow; supported by ISF grant 1225/20, by BSF grant 2018385, by an Azrieli Faculty Fellowship, by Israel PBC-VATAT, by the Technion Center for Machine Learning and Intelligent Systems (MLIS), and by the European Union (ERC, GENERALIZATION, 101039692). Views and opinions expressed are however those of the author(s) only and do not necessarily reflect those of the European Union or the European Research Council Executive Agency. Neither the European Union nor the granting authority can be held responsible for them.}
\and
Uri Stemmer\footnote{School of computer science, Tel Aviv University, and Google Research. Supported in part by ISF grant 1419/24 and the Blavatnik Research Foundation.}
\and
Nitzan Tur\footnote{Work done while at Google Research.}
}
\title{On Differentially Private Linear Algebra}
\begin{document}
\date{November 5, 2024}
\maketitle

\begin{abstract}
We introduce efficient differentially private (DP) algorithms for several linear algebraic tasks, including solving linear equalities over arbitrary fields, linear inequalities over the reals, and computing affine spans and convex hulls. As an application, we obtain efficient DP algorithms for learning halfspaces and affine subspaces. Our algorithms addressing equalities are strongly polynomial, whereas those addressing inequalities are weakly polynomial. Furthermore, this distinction is inevitable: no DP algorithm for linear programming can be strongly polynomial-time efficient. 
\end{abstract}

\section{Introduction}

Consider a constraint satisfaction problem in which each constraint is treated as a sensitive data point containing private information (e.g., transaction details in a business, financial data of individuals, health metrics of patients, etc.). An important class of such problems arises in linear algebra, with the most basic example being a system of linear equations:
\begin{equation*}
\left\{
\begin{aligned}
    &a_{11}x_1 + a_{12}x_2 + \cdots + a_{1d}x_d = b_1 \\
    &a_{21}x_1 + a_{22}x_2 + \cdots + a_{2d}x_d = b_2 \\
    &\phantom{a_{m1}x_1 + a_{m2}x_2 + \cdots} \vdots &\phantom{+a_{md}x_d} \phantom{=}\\
    &a_{m1}x_1 + a_{m2}x_2 + \cdots + a_{md}x_d = b_m,
\end{aligned}
\right.
\end{equation*}
where the coefficients $a_{ij},b_i\in\mathbb{F}$ and $\mathbb{F}$ is a field. Another important example is linear programming, in which each constraint is a linear inequality and the field is $\mathbb{R}$.
    \begin{framed}
     \vspace{-2mm}
    \begin{center}
\emph{Can such tasks be solved while preserving differential privacy (DP)?} 
    \end{center}
    \vspace{-2mm}
    \end{framed}
The answer is, regrettably, no; the reason lies in the inherent instability of the existence of a solution. Indeed, consider two neighboring systems of linear equations (differing by only a single equation) such that one system is solvable, while the other is not. The presence of such neighboring systems indicates that the property of having a solution is not stable across minor changes to the set of equations, making it inherently impossible for an algorithm to both maintain differential privacy and produce solutions to feasible systems.

Thus, in order to maintain privacy, we must compromise on the solution quality. One natural compromise, which is applicable over any field \(\F\), is to produce a solution that satisfies \emph{many} equations. That is, given a (satisfiable) system of \(m\) equations, produce, in a differentially private manner, a solution to as many equations in the system as possible. Can this be performed by a DP algorithm? What is the largest number of equations the private solution can satisfy? We consider similar questions for other linear algebra tasks such as finding the solution space for a system of linear equalities, linear programming (here the field \(\F\) is assumed to be \(\F=\R\)), and other related tasks.

\subsection*{Problem Descriptions}
We now turn to formally define the linear algebraic tasks studied in this manuscript. Our focus is on constraint satisfaction problems where the solution space is either an affine subspace or a convex polyhedron.\footnote{Recall that the affine hull of a set of points consists of all linear combinations of these points with coefficients that sum to \(1\) (not necessarily positive). For example, the affine hull of two vectors is the line passing through them.} The goal is to devise a DP algorithm that outputs an approximation of an element (solution) within this space.

Affine subspaces and polyhedra can each be described in two dual ways. An affine subspace may be represented as either (i) the solution space of a given set of linear equations, or (ii) the affine hull of a given set of vectors. Similarly, a (compact) polyhedron may be represented as either (i) the solution space of a set of linear inequalities, or (ii) the convex hull of a given set of vectors. In the absence of privacy constraints and when working with representations (ii) — where we are given a set of vectors as input — identifying a point within their convex or affine hull is straightforward; we can simply output one of the input vectors. However, with privacy considerations, the landscape changes. Interestingly, if one requires privacy then the problems become non-trivial even under the second formulation.\footnote{That is, treating each point in the input sequence as sensitive data, the objective is to output a point within the convex or affine hull while preserving differential privacy.} In fact, as we will see there is a certain algorithmic 
equivalence between formulations (i) and (ii), allowing for an efficient transformation of DP algorithms suitable to one form to work on the other.

\subsubsection*{Linear Equalities and Affine Hulls}
\noindent
\begin{minipage}{0.6\textwidth}
    \begin{framed}
    \begin{center}
    \underline{\textbf{Primal (i)}}
    \end{center}
    \noindent\underline{\textbf{Input:}} system of linear equations \(\{\boldsymbol{a_i}\cdot \x = b_i\}_{i=1}^m \).
    
    \vspace{2mm}
    \noindent\underline{\textbf{Output:}} If the system is satisfiable,
    output a solution satisfying many equations (it is impossible to satisfy all equations while maintaining differential privacy).
    \end{framed}
\end{minipage}
\begin{minipage}{0.4\textwidth}
    \begin{framed}
    \begin{center}
    \underline{\textbf{Dual (ii)}}
    \end{center}
    \noindent\underline{\textbf{Input:}} a set of vectors \(\boldsymbol{v_1},\ldots \boldsymbol{v_m}\).
    
    \vspace{10mm}
    \noindent\underline{\textbf{Output:}} a point in their affine hull.
    \end{framed}
\end{minipage}%

\subsubsection*{Linear Inequalities and Convex Hulls}
\noindent
\begin{minipage}{0.6\textwidth}
    \begin{framed}
    \begin{center}
    \underline{\textbf{Primal (i)}}
    \end{center}
    \noindent\underline{\textbf{Input:}} system of linear inequalities \(\{\boldsymbol{a_i}\cdot \x \leq b_i \}_{i=1}^m\).
    
    \vspace{2mm}
    \noindent\underline{\textbf{Output:}} If the system is satisfiable,
    output a solution satisfying many inequalities (it is impossible to satisfy all equations while maintaining differential privacy).
    \end{framed}
\end{minipage}
\hfill
\begin{minipage}{0.4\textwidth}
    \begin{framed}
    \begin{center}
    \underline{\textbf{Dual (ii)}}
    \end{center}
    \noindent\underline{\textbf{Input:}} a set of vectors \(\boldsymbol{v_1},\ldots \boldsymbol{v_m}\).
    
    \vspace{10mm}
    \noindent\underline{\textbf{Output:}} a point in their convex hull.
    \end{framed}
\end{minipage}%

\section{Overview of Main Results}

In this section, we provide an overview of our main results, along with some remarks on key techniques and a discussion of corollaries related to algorithmic differential privacy.

\subsection{Linear Equalities (\Cref{sec:equalitiesmain})}
We introduce the first efficient differentially private (DP) algorithm for solving linear equalities and related tasks over arbitrary fields. For example, in \Cref{prop:linspan}, we design an \((\eps,\delta)\)-DP algorithm that, given an input set of vectors \(\{u_i\}\) in a \(d\)-dimensional vector space, outputs a set of vectors \(\{v_j\}\) in a differentially private manner such that the space \(V_{\text{output}} = \mathtt{span}(\{\boldsymbol{v_j}\})\) \emph{approximates} the space \(V_{\text{input}} = \mathtt{span}(\{\boldsymbol{u_i}\})\) in the following sense: \(V_{\text{output}} \subseteq V_{\text{input}}\) and contains all but at most 
\begin{equation}\label{eq:1}
k_{1}(d,\eps,\delta) = O\left(\frac{d^2}{\eps} \log\left(\frac{d}{\delta}\right)\right)
\end{equation}
of the input vectors. Additionally, we design algorithms for approximating affine hulls and solving systems of linear equations with similar guarantees: in each case, the bound \(k_1(d,\eps,\delta)\) applies to the number of unsatisfied equations or the points not in the output affine space; see \Cref{t:privateaffspan} and \Cref{t:privatelineq} for details.
We stress that these algorithms not only address fundamental tasks in linear algebra but also play an essential role in our approach to solving linear programs.

Previous algorithms for private subspace approximation either (i) assumed that the data comes from a multivariate Gaussian distribution \citep{SinghalS21,ashtiani2022private}, or (ii) only guaranteed utility for ``easy'' instances that adhere to certain ``niceness'' assumptions \citep{Tsfadia24}, or (iii) were computationally inefficient~\citep{SinghalS21}.\footnote{We remark that some of these works on private subspace approximation operate in a somewhat different model, where the goal is to find a subspace that {\em approximately} contains most of the data points.}

\subsubsection*{Highlight: Peeling Independent Sets (\Cref{alg:vector_partition})}
In the algorithms introduced above for approximating linear subspaces and affine hulls, we use a key technique of \textit{peeling independent sets}. Let \(\boldsymbol{u_1}, \ldots, \boldsymbol{u_n}\) be an input sequence of vectors in a linear space. The process involves peeling independent subsets from this sequence in a greedy manner: while the sequence is not empty, we scan it from left to right, accumulating an independent subset by adding each vector that remains independent of those selected so far, and then remove this subset from the sequence.

This simple independent-set peeling process forms the backbone of our DP algorithms for handling linear equations and subspaces. Essentially, it is stable in the sense that
 adding or removing a vector from the input sequence does not substantially alter the spans of the independent sets it produces (see \Cref{prop:stabind} for details). We believe this technique has potential beyond our current applications; for instance, it can be applied to any matroid with similar guarantees.

\subsection{Linear Programming and Convex Combinations (\Cref{sec:inequalitiesmain})}

In \Cref{thm:dp-lp}, we present the first efficient DP algorithm for solving general linear programs. Specifically, given a linear program, our algorithm outputs a solution that satisfies all but at most
\begin{equation}\label{eq:2}
k_{2}(d,\eps,\delta,U) = {\rm poly}(d,1/\eps,\log1/\delta,\log U)
\end{equation}
of the constraints, where \( U \) is an upper bound on the input integers defining the linear program, and \(\eps\) and \(\delta\) are the privacy parameters.
Our approach to linear programming builds on and extends an algorithm by \cite*{DS08}, which is based on the perceptron method. Specifically, in \Cref{thm:vempala}, we first analyze a DP version of \cite{DS08}'s algorithm, and then in \Cref{thm:dp-lp}, we extend it to handle arbitrary systems, including those with zero margin (whereas the original algorithm applies only to systems with a positive margin). Notably, addressing general systems relies on the linear system solvers discussed in the previous section. 

Previous work on this problem provided LP solvers that either (i) only \textit{approximately} satisfy \textit{most} constraints, meaning that most constraints may not be fully satisfied~\citep*{HsuRRU14}, or (ii) \textit{exactly} satisfy \textit{most} constraints but are computationally inefficient~\citep*{KaplanMST20}.  We note that it is possible to adjust the approximation parameter in the algorithm by \cite{HsuRRU14}, based on \(U\), to achieve an exact satisfaction of most constraints, but this would render their algorithm’s running time exponential. In contrast, our algorithm is both efficient and \textit{exactly} satisfies most constraints (as discussed in the introduction, it is impossible to satisfy all constraints, even approximately).

Another related work by \citet*{MSVV21} considered a similar model but assumed that only the vector of biases is private (i.e., only the scalars on the right-hand side of each inequality in the linear program). Under this assumption, they devised efficient DP algorithms. Our algorithms, however, apply in the more challenging setting where the entire constraint is treated as a private data point, thus addressing an open question left by \cite{MSVV21} for future research.

\medskip

We also consider the dual problem of privately finding a point in the convex hull. This problem was introduced by \cite*{BeimelMNS19} and was then studied by several followup papers including \cite{KaplanSS20}, \cite{GaoS21}, and \cite{BenEliezerMZ22}. 
All these works presented {\em computationally inefficient} algorithms for the problem.
Our techniques allow us to obtain the first computationally efficient algorithm, as stated in \Cref{thm:pich}. Our algorithm returns a point in the convex hull provided that the number of input points is at least
\begin{equation}\label{eq:3}
k_{3}(d,\eps,\delta,U) = {\rm poly}(d,1/\eps,\log1/\delta,\log U).
\end{equation}

\subsubsection*{Highlight: DP Linear Programming = DP Convex Combinations}
Our algorithm for finding a point in the convex hull, which privately outputs a convex combination of its input points, is derived using a reduction to the DP linear programming solver, treating it as a black box. Interestingly, there is also a reduction in the reverse direction: given black-box access to a DP algorithm that outputs a point in the convex hull, it is possible to construct a DP linear program solver. Thus, in the context of differential privacy, we achieve an \textit{equivalence} between these problems. In contrast, without privacy constraints, outputting a point in the convex hull is trivial, whereas solving LPs is far from trivial and has inspired a vast body of research and algorithms.

The equivalence (i.e., the reductions between finding a point in the convex hull and solving LPs) relies on an adaptation of the ellipsoid method:

\begin{itemize}
    \item \textbf{DP Convex Combinations \(\rightarrow\) DP Linear Programming}: In this reduction, we follow the ellipsoid method, maintaining an ellipsoid that contains the solution space. At each iteration, we check whether the center of the ellipsoid satisfies most of the constraints; if not, we apply the DP point-in-convex-hull algorithm to obtain a convex combination of the unsatisfied constraints, which itself forms an unsatisfied constraint. We then use this combination to bisect the ellipsoid and proceed to the next iteration.

    \item \textbf{DP Linear Programming \(\rightarrow\) DP Convex Combinations}: Here, we again follow the ellipsoid method, maintaining an ellipsoid that contains all input points. At each step, we check whether the center of the ellipsoid is a convex combination of the input points; if it is, we output it, and if it is not, we use the DP LP solver to find a hyperplane separating the ellipsoid’s center from the input points. We then bisect the ellipsoid and proceed to the next iteration.
\end{itemize}

\subsubsection*{Highlight: Weak vs.\ Strong Polynomial Time}
A longstanding open problem in optimization is whether there exists a linear programming (LP) solver that operates in strong polynomial time—that is, one whose runtime depends only on the number of variables and constraints, not on the bit-length of input values (assuming unit cost arithmetic) (see e.g.,~\cite*{grotschel2012geometric}). Perhaps surprisingly, in the realm of differentially private (DP) algorithms, this question is resolved: we show that weakly polynomial-time DP LP solvers exist, but prior results imply that strongly polynomial-time DP LP solvers do not, even in 1D. Indeed, in 1D, an LP reduces to finding a point in the intersection of rays. An even easier problem is the \emph{interior point problem}: given \( n \) input points on a line, find a point between the maximum and minimum values. It is known that any DP algorithm for this problem requires \( n \geq \log^\star(U) \) input points, where \( U \) bounds the absolute value of the input (integer) values~\citep*{BNSV15}. This implies that even for this simple case, a strongly polynomial-time algorithm cannot exist.

\subsection{Learning}
As applications, our results yield efficient PAC learning algorithms for subspaces (over arbitrary vector spaces) and for half-spaces (over Euclidean spaces). This follows from the fact that finding a consistent subspace reduces to solving a system of linear equalities, while finding a consistent half-space reduces to finding a point in the convex hull~\citep*{BeimelMNS19}. In \Cref{sec:pacsubspace} we provide a detailed description of the results concerning learning subspaces.

\subsection{Open Questions}
Our work naturally raises some open questions. First, it would be interesting to obtain tighter bounds on the number of unsatisfied constraints in systems of linear equalities (\Cref{eq:1}) and in linear programs (\Cref{eq:2}), as well as on the minimum input size required for finding a point in the convex hull (\Cref{eq:3}).\footnote{The exact power of $d$ in these equations is at least $5$ in all cases, even if the linear program has positive margin/roundness. Specifically for finding a point in the convex hull it is $17$. Our intent with this open question is regarding {\em computationally efficient} algorithms, without any distributional/data assumptions.} Notably, a linear dependence on the dimension \(d\) for these problems is necessary. This lower bound follows from a basic learning-theoretic consideration, leveraging the intrinsic connection between differential privacy and learning.

To illustrate, suppose for contradiction that the dependence on \(d\) in the number of unsatisfied points in \Cref{t:privateaffspan} could be improved to, say, \(d^{0.99}\). Via \Cref{t:privgen} and reasoning similar to that in the proof of \Cref{t:privsubspacelearn}, this would imply a PAC learning algorithm for subspaces in \(\F^d\) with an error rate scaling as \(\tilde{O}(d^{0.99}/m)\).  However, this would contradict the lower bound of \(\Omega(d/m)\) on the error rate for this task, which holds even without privacy constraints, as \(d\) is the VC dimension of \(d\)-dimensional subspaces. Similar arguments imply that a linear dependence on \(d\) is necessary also in the other tasks, such as those in \Cref{prop:linspan} and \Cref{t:privatelineq}. Closing the gap between these basic lower bounds and our upper bounds remains an open problem.

Additionally, it would be interesting to find more direct algorithms for the point-in-convex-hull problem. Our current approach relies on an LP solver, combined with a reduction from point-in-convex hull to LP. A direct solution for this problem would be valuable and could potentially offer new insights into linear programming through the reduction in the opposite direction.

Our presentation assumes numeric operations with arbitrary precision. We use continuous Laplace noises, and the ellipsoid algorithm and the LP algorithm of \cite{DS08}, which we augment, take square roots. We believe that it is possible to implement our algorithms using finite precision, while keeping the same running time, by applying standard techniques. This requires an appropriate discretization of the noises (as in \citep{BalcerV19}), an appropriate perturbation of the Ellipsoids (as described in \citep{grotschel2012geometric}), and an appropriate discretization of the algorithm of \cite{DS08}.

\section{Subspaces and Linear Equalities}\label{sec:equalitiesmain}
In this section, we develop and analyze differentially private algorithms for solving linear algebraic problems over arbitrary fields \(\F\). In Section~\ref{sec:equalities}, we address tasks such as solving systems of linear equations and computing linear and affine spans. An application of these results, achieving efficient differentially private learning algorithms for linear and affine subspaces, is presented in Appendix~\ref{sec:pacsubspace}.%

\subsection{Synthetic Linear Equalities and Affine Hulls Generation}\label{sec:equalities}

We solve a more general task of differentially private sanitization. The objective is to generate and publish synthetic output data that closely approximates the original input. For instance, in the context of solving linear equations, our sanitizer will receive a system of linear equations as input and, while maintaining differential privacy, will produce a corresponding output system of equations. This output system satisfies that its solution space approximates the solution space of the original input system.

\begin{remark}\label{rem:effectivedim}
    In all the results presented in this section, the parameter \(d\) which represents the dimension, can be substituted with the effective dimension of the space spanned by the input vectors. For instance, our differentially private algorithms for handling systems of linear equations retain their guarantees even if the number of variables is substantially large, provided that the space spanned by the coefficients of the equations is low dimensional.

\end{remark}

\subsubsection*{Sanitization of Linear Systems and Subspaces}
\noindent
\begin{minipage}{0.5\textwidth}
    \begin{framed}
    \begin{center}
    \underline{\textbf{Primal}}
    \end{center}
    \noindent\underline{\textbf{Input:}} linear system \(\mathcal{I}=\{\boldsymbol{a_i}\cdot \x = b_i \}_{i=1}^m\).
    
    \vspace{2mm}

    \noindent\underline{\textbf{Output:}}
    Generate in a differentially private manner a linear system \(\mathcal{O}=\{\boldsymbol{c_j}\cdot \mathbf{x} = d_j \}_{j=1}^k\) that approximates the input linear system \(\mathcal{I}\) (see \Cref{t:privatelineq}).

    \end{framed}
\end{minipage}
\hfill
\begin{minipage}{0.5\textwidth}
    \begin{framed}
    \begin{center}
    \underline{\textbf{Dual}}
    \end{center}
    \noindent\underline{\textbf{Input:}} a sequence \(\{\boldsymbol{u_i}\}_{i=1}^m\subseteq \F^d\).
    
    \vspace{2mm}
    \noindent\underline{\textbf{Output:}} Generate in a differentially private manner a sequence \(\{\boldsymbol{v_j}\}_{j=1}^k\subseteq \F^d\) such that \(\mathtt{Affine}(\{v_j\})\) approximates \(\mathtt{Affine}(\{\boldsymbol{u_i}\})\) (see \Cref{t:privateaffspan}).
    \end{framed}
\end{minipage} 

\subsubsection{Stable Partitions}
The following procedure is a central key component in all our algorithms. It gets a sequence of vectors as input and produces and outputs a partition of the input vectors into multiple sets, each of which is linearly independent. A key property of the procedure is its stability, a feature that is crucial for obtaining the desired privacy guarantees in our final algorithms.

\begin{algorithm}
\caption{Stable Partition to Linearly Independent Sets}
\label{alg:vector_partition}
\textbf{Input:} A sequence of vectors $\{\boldsymbol{v_i}\}_{i=1}^m\in\mathbb{F}^d$, where $\boldsymbol{v_i}\neq \boldsymbol{0}$ for all $i$.\\
\textbf{Output:} A partition of $\{\boldsymbol{v_i}\}$ into sets $\{A_j\}$ each of which is linearly independent. %

\begin{algorithmic}[1]
\State Initialize an empty list $\mathcal{P}$ to store the independent sets.
\State Initialize $i = 1$.
\While{input sequence is not empty}
    \State Initialize an empty set $A_i$.
    \For{each vector $\boldsymbol{v_j}$ in the input sequence}. \Comment{vectors are traversed in their original order}
        \If{$A_i\cup\{\boldsymbol{v_j}\}$ is independent}.
            \State Add $\boldsymbol{v_j}$ to $A_i$.
        \EndIf
    \EndFor
    \State Add $A_i$ to the list of independent sets $\mathcal{P}$.
    \State Remove vectors in $A_i$ from the input sequence.
    \State $i = i + 1$.
\EndWhile
\State \textbf{Output} $\mathcal{P}$.
\end{algorithmic}
\end{algorithm}

\begin{prop}\label{prop:stabind}
Given an input sequence $\{\boldsymbol{v_i}\}_{i=1}^m \subseteq \F^d$ to \Cref{alg:vector_partition}, let $\{A_t\}_{t=1}^k$ be the output partition. Then, there exist at most $d$ nested subspaces $V_1\subseteq \ldots \subseteq V_d$ such that each $\mathtt{span}(A_j)$ is one of these subspaces.
Furthermore, for a subspace $V$ define the basis-count $n(V)$ as the number of sets $A_i$ whose span equals a subspace $V$. 
Then, the mapping \(\{\boldsymbol{v_i}\} \mapsto n(\cdot)\)  has \(\ell_\infty\)-sensitivity $1$ and $\ell_1$-sensitivity $2$;
that is, for any neighboring input sequence $\{\boldsymbol{v_i}\}_{i=1}^m\setminus\{\boldsymbol{v_j}\}$ with basis-count $n_{-j}(\cdot)$, we have 
\[\max_{V}\lvert n(V) -n_{-j}(V)  \rvert\leq 1 \quad\text{and}\quad \sum_V\lvert n(V) -n_{-j}(V)  \rvert\leq 2.\]

\end{prop}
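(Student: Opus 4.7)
The plan is to analyze both claims through the sequence of residual sets $R_j = \{\boldsymbol{v_i}\}_{i=1}^m \setminus \bigcup_{t<j} A_t$. The key structural fact is that each $A_j$ is a greedy maximal independent subset of $R_j$, so $\mathtt{span}(A_j) = \mathtt{span}(R_j)$. Since $R_{j+1} \subseteq R_j$, the spans $\mathtt{span}(A_j)$ form a weakly decreasing chain in $\F^d$; as the input vectors are nonzero each span has dimension at least one, so the distinct values give a chain of at most $d$ nested subspaces, proving the first assertion.

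For the sensitivity claim, let $\boldsymbol{v_\ast}$ be the removed vector and let $k$ be the iteration in which the original run selects it. The first $k-1$ iterations are identical in the two runs: in each, $\boldsymbol{v_\ast}$ is scanned but rejected (being in the span of previously-picked vectors), so omitting it outright does not change the selection. Beginning at iteration $k$, I maintain the cascade invariant that at each $j \geq k$, either $R_j = R_j'$ (the runs are synchronized and $A_{j'} = A_{j'}'$ for all $j' \geq j$), or there is a single distinguished residue $u_j \in R_j$ with $R_j' = R_j \setminus \{u_j\}$, starting from $u_k = \boldsymbol{v_\ast}$. The inductive step splits according to whether $u_j \in A_j$: if not, the greedy scans on $R_j$ and on $R_j \setminus \{u_j\}$ yield $A_j' = A_j$ and the cascade continues with $u_{j+1} = u_j$; if yes, a matroid-exchange argument shows that either some $\boldsymbol{v_{p^\ast}} \in R_j \setminus A_j$ replaces $u_j$, giving $A_j' = (A_j \setminus \{u_j\}) \cup \{\boldsymbol{v_{p^\ast}}\}$ with $\mathtt{span}(A_j') = \mathtt{span}(A_j)$ and $u_{j+1} = \boldsymbol{v_{p^\ast}}$, or no replacement exists, $A_j' = A_j \setminus \{u_j\}$, $\mathtt{span}(A_j')$ is a codimension-one subspace of $\mathtt{span}(A_j)$, and $R_{j+1}' = R_{j+1}$ so the cascade terminates.

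The delicate piece is the matroid-exchange step, which I would prove by tracking the greedy scan position by position: before $u_j$'s position the two runs agree, and just after $u_j$'s position one gets the secondary invariant $A_j'^{<p} = A_j^{<p} \setminus \{u_j\}$; this persists for each subsequent position until one where the candidate $\boldsymbol{v_p}$ lies in $\mathtt{span}(A_j^{<p}) \setminus \mathtt{span}(A_j^{<p} \setminus \{u_j\})$, at which moment the linear exchange property forces $\mathtt{span}(A_j'^{\leq p}) = \mathtt{span}(A_j^{\leq p})$ and all subsequent greedy decisions in iteration $j$ agree. Finiteness of the input ensures the cascade must terminate via the no-replacement case (the very last iteration of the original run, where $R_j = A_j$, always forces this if reached). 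Letting $j^\ast$ be the terminating iteration, all iterations other than $j^\ast$ contribute identically to $n$ and $n'$, while iteration $j^\ast$ contributes $+1$ to $n$ at $\mathtt{span}(A_{j^\ast})$ and $+1$ to $n'$ at $\mathtt{span}(A_{j^\ast}')$ (or nothing, if $A_{j^\ast}' = \emptyset$), yielding $\ell_\infty$-sensitivity at most $1$ and $\ell_1$-sensitivity at most $2$.
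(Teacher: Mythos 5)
Your proof is correct and uses essentially the same approach as the paper: the first part relies on $A_j$ being a basis of the residual set's span, giving a decreasing chain of at most $d$ subspaces; the second part tracks the single-element discrepancy between the two residual sets across iterations, showing via the exchange property that at most one iteration (the terminating one) has $\span(A_j) \neq \span(A_j')$, yielding the $\ell_\infty\leq 1$ and $\ell_1\leq 2$ bounds. The paper packages this as an induction on the number of iterations (with its ``case 1'' corresponding to your no-replacement/terminating case and ``case 2'' to your replacement case), whereas you unroll it into an explicit cascade, but the two arguments are equivalent.
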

\begin{proof}
We start with first part. Let $U_t$ denote the subspace spanned by the vectors that remain in the input sequence after iteration $t-1$.
Observe that $U_1\supseteq U_2\supseteq\ldots$ and that $A_t$ is a basis for $U_t$.\footnote{That $A_t$ is a basis for $U_t$ is a consequence of the matroid property of independent sets: if $I_1,I_2$ are two linearly independent sets with $\lvert I_2\rvert > \lvert I_1\rvert$, then there exists $\boldsymbol{v}\in I_2$ such that $I_1\cup\{\boldsymbol{v}\}$ remains linearly independent.} 
Further observe that $U_t\supsetneq U_{t+1}$ if and only if $\dim(U_t) > \dim(U_{t+1})$. Thus, there are at most $d$ distinct subspaces among the $U_t$'s.

We now prove the second part by induction on the number of iterations in the ``while'' loop (lines 3-13); that is, the number of independent sets in the partition outputted by the algorithm.
Imagine we run two parallel applications of \Cref{alg:vector_partition}, once on the input sequence \(\{\boldsymbol{v_i}\}_{i=1}^m\) and the other on the sequence $\{\boldsymbol{v_i}\}_{i=1}^m\setminus\{\boldsymbol{v_j}\}$.
Both runs are identical up to the iteration where~$\boldsymbol{v_j}$ is selected to the independent set in the first run; {note that such an iteration exists because in each iteration of the while loop at least one vector is removed from the input sequence, and thus it is eventually empty.}
Let the index of this iteration be denoted by \(t\).
Let $A_t,A_t^{-j}$ denote the linearly independent sets produced in iteration~$t$ by the first and second runs of the algorithm respectively.
We distinguish between two cases:
\begin{enumerate}
    \item $A_t=A_t^{-j}\cup\{v_j\}$,
    \item $A_t\neq A_t^{-j}\cup\{v_j\}$.
\end{enumerate}
In the first case, the two runs are identical from iteration $t+1$ and onwards. Thus, $n(\cdot)$ and $n_{-j}(\cdot)$ differ only on the two subspaces $\mathtt{span}(A_t)$ and $\mathtt{span}(A_t^{-j})$, and the difference equals $1$ as required.

In the second case, there must be a vector $\boldsymbol{v_\ell}$ with $\ell>j$ such that $\boldsymbol{v}_\ell$ is added to $A_t^{-j}$ but not to $A_t$.
Taking the smallest index $\ell$ of such a vector, let $B=A_t\cap\{\boldsymbol{v_i} : i<\ell\}$, \(B_{-j}=A_t^{-j}\cap\{\boldsymbol{v_i} : i<\ell\}\) be the independent sets accumulated in the two runs of the algorithm when reaching the vector $\boldsymbol{v_\ell}$. Thus, the set $B\cup\{\boldsymbol{v}_\ell\}$ is linearly \underline{dependent}, the set $B_{-j}\cup\{\boldsymbol{v}_\ell\}$ is linearly \underline{independent}, and $\mathtt{span}(B)=\mathtt{span}(B_{-j}\cup\{v_\ell\})$.
Hence, the remainder of the $t$'th iteration is identical in the two runs and we have $\span(A_t)=\span(A_t^{-j})$. Therefore, by the end of the $t$'th iteration the counts $n(\cdot)$ and $n_{-j}(\cdot)$ match, and the remaining input sequences differ in a single entry -- the vector $v_\ell$. In other words, we arrive at an identical state as prior to the $t$'th iteration, and so applying induction on the subsequent iterations completes the proof.
\end{proof}

\subsubsection{Synthetic Linear Spans}
The following algorithm provides a differentially private (DP) approximation of the linear span of a given sequence of input vectors. It is a key subroutine in our subsequent algorithms for affine spans and solving linear equations.

\begin{algorithm}
\caption{Private Linear Span}\label{alg:linspan}
\textbf{Input: }a sequence \(\{\boldsymbol{u_i}\}_{i=1}^\ell\subseteq \F^d\).\\
\textbf{Output:} a sequence \(\{\boldsymbol{v_j}\}_{j=1}^k\subseteq \F^d\)
     such that \(\mathtt{span}(\{\boldsymbol{v_j}\})\) approximates \(\mathtt{span}(\{\boldsymbol{u_i}\}) \) (see \Cref{prop:linspan})
\begin{algorithmic}[1]
\State Apply \Cref{alg:vector_partition} on the sequence of vectors $\{\boldsymbol{u_i}\}_{i=1}^\ell$.
\State Let $\mathcal{P}$ denote its output partition, and for $k\leq d$ let $m(k)$ denote the number of sets in $\mathcal{P}$ whose size is $k$.
\State Initialize $\theta = \frac{16}{\varepsilon}\ln(100d/\delta) +  \mathtt{Lap}(2/\eps)$.
\For{\(k=d\ldots 1\)}. \Comment{higher-dimensional subspaces are traversed first}
    \If{$m(k) + \mathtt{Lap}(4/\eps) > \theta$}      
    \State Exit the for loop.
    \EndIf
\EndFor
\State If $m(k)>0$ then set $U$ to be the subspace spanned by the sets in $\mathcal{P}$ of size $k$; else set $U=\{\boldsymbol{0}\}$.
\State {\bf Output} a canonical basis of the space $U$ (see Remark~\ref{rem:canonB}).
\end{algorithmic}
\end{algorithm}

\begin{remark}\label{rem:canonB}
A canonical basis of $U$ of dimension $q$ can be derived by initially selecting any basis $B$ of $U$ and constructing a $q\times d$ matrix $M$, with the vectors of~$B$ as its rows. Let $A$ be the $q\times q$ sub-matrix of $M$ containing the $q$ left-most independent columns of $M$ in lexicographical order (call these columns $c_1,\dots,c_q$). Let $T=A^{-1}\, M$, and note that columns $c_1,\dots,c_q$ of $T$ form the identity $q\times q$ matrix. 
The rows of $T$ are a basis for $U$, as they are independent (as they contain $I$ as a sub-matrix) and were obtained by row operations on the independent vector set $B$ (and so span the same space as $B$).
Furthermore, this basis is canonical. That is, this process yields the same matrix $T$, regardless of the initial choice of basis $B$ for $U$.
To see this, note that the only vector in the span of the rows of $T$ that is zero in all coordinates $c_1,\dots,c_q$ except a single coordinate which is 1, is the corresponding row itself.
\end{remark}

\begin{customthm}{A}[Private Linear Span]\label{prop:linspan}
    \Cref{alg:linspan} is $(\eps,\delta)$-DP and satisfies the following with probability at least $1-\delta$.
Let $V_{input}=\mathtt{span}(\{\boldsymbol{u_i}\})$ denote the span of the input sequence and let $V_{output}=\mathtt{span}(\{\boldsymbol{v_j}\})$ denote the span of the output sequence. Then,
\begin{enumerate}
\item $V_{output}\subseteq V_{input}$.
\item All but $O(\frac{d^2}{\eps}\log(d/\delta))$ of the input vectors $\boldsymbol{u_i}$ satisfy $\boldsymbol{u_i}\in V_{output}$.
\end{enumerate}
\end{customthm}
\begin{proof}
By \Cref{prop:stabind}, the 
sensitivity of each of the counts in the vector $m(\cdot)$,
(computed by an application of \Cref{alg:vector_partition} to $S$) is $1$.
Therefore, we can think about the for loop as an application of the Sparse Vector Technique (SVT) (see \cite{DR14}) to this sequence of counts (checking, in turn, if they are above threshold $\theta$).
This SVT application outputs the index $k$.

Let \(S=\{\boldsymbol{u_i}\}_{i=1}^\ell\) and \(S_{-j}=\{\boldsymbol{u_i}\}_{i=1}^\ell\setminus\{\boldsymbol{u_j}\}\) be two neighboring input sequences.
Let $U$ be a canonical description of some subspace. Let $n(U)$ the number of 
occurrences of $U$ in the partition computed by \Cref{alg:vector_partition} on $S$ and by $n_{-j}(U)$ the number of occurrences of $U$ in the partition computed by \Cref{alg:vector_partition} on $S_{-j}$.

If  $n(U)=0$ and $n_{-j}(U)=0$ then both on $S$ and on $S_{-j}$ we get $U$ as the output with probability $0$.

If $n(U)>0$ and $n_{-j}(U)>0$
we get $U$ as the output on $S$ iff 
$k=\mathtt{dim}(U)$ when the input is $S$, and we get 
$U$ as the output on $S_{-j}$ iff 
$k=\mathtt{dim}(U)$ when  the input is $S_{-j}$.
Since SVT is $(\eps,0)$-DP then  we conclude that
$p(U \mid S)\DP{\eps,0} p(U \mid S_{-j})$.\footnote{
For an event $E$, we say that $p(E \mid S)\DP{\eps,\delta} p(E \mid S_{-j})$
iff  $p(E \mid S)\leq \exp(\eps)\cdot p(E \mid S_{-j}) + \delta$ and
$p(E \mid S_{-j})\leq \exp(\eps)\cdot p(E \mid S)+ \delta$. }

If $n(U)>0$ and $n_{-j}(U)=0$ then 
by the sensitivity property in \Cref{prop:stabind}, it must be the case that  $n(U)=1$ (and therefore $m(\mathtt{dim}(U))=1$). 
As before, to get $U$ as the output on $S$, the SVT (on $S$) has to output $k=\mathtt{dim}(U)$.
We get $k=\mathtt{dim}(U)$ as the output
of the SVT on $S$ with probability
at most
{\small
\begin{equation} \label{eq:deltaover4}
\begin{aligned}
 \Pr\Bigl[1 + \mathtt{Lap}\Bigl(\frac{4}{\eps}\Bigr) \ge \mathtt{Lap}\Bigl(\frac{2}{\eps}\Bigr) + \frac{16}{\varepsilon}\ln\frac{100d}{\delta}\Bigr]
 &\leq 
 1-\Pr\Bigl[\Bigl\lvert \mathtt{Lap}\Bigl(\frac{4}{\eps}\Bigr)\Bigr\rvert <  \frac{8}{\varepsilon}\ln\frac{100d}{\delta} - 1\Bigr]\cdot \Pr\Bigl[\Bigl\lvert \mathtt{Lap}\Bigl(\frac{2}{\eps}\Bigr)\Bigr\rvert <  \frac{8}{\varepsilon}\ln\frac{100d}{\delta} - 1 \Bigr]
 \\ %
    &\leq 1-\Pr\Bigl[\Bigl\lvert \mathtt{Lap}\Bigl(\frac{4}{\eps}\Bigr)\Bigr\rvert <  \frac{4}{\varepsilon}\ln\frac{100}{\delta}\Bigr]\cdot \Pr\Bigl[\Bigl\lvert \mathtt{Lap}\Bigl(\frac{2}{\eps}\Bigr)\Bigr\rvert <  \frac{4}{\varepsilon}\ln\frac{100}{\delta}\Bigr] \\
    &\leq 1- \Bigl(1-\frac{\delta}{100}\Bigr)^2 \leq \frac{\delta}{4},
\end{aligned}
\end{equation}
}

where we used above the following concentration inequality for Laplace distributions: 
\begin{equation}\label{eq:conlap}
  (\forall b, \Delta>0):\Pr[\lvert \mathtt{Lap}(b)\rvert > \Delta]\leq \exp(-\Delta/b).
\end{equation}

The probability of getting $U$ when the input is $S_{-j}$ is zero.
So we get that
$p(U \mid S)\DP{0,\delta/4} p(U \mid S_{-j})$.

If $n(U)=0$ and $n_{-j}(U)>0$ then 
the probability of getting $U$  
when the input is $S$ is $0$. A similar reasoning as above shows that this probability is at most $\delta/4$ when the input is $S_{-j}$. So we also get that 
$p(U \mid S)\DP{0,\delta/4} p(U \mid S_{-j})$.

Finally, consider the probability that the output is $U=\{\boldsymbol{0}\}$. For this to happen (either on $S$ or on $S_{-j}$) the SVT should output an index $k$ such that
$m(k)=0$. By \Cref{prop:stabind}
the number of indices $k$ for which
$m(k)=0$ and $m_{-j}(k)=1$ or vice versa is at most $2$. Therefore,
a calculation similar to the one in \Cref{eq:deltaover4} shows
that 
$p(\{\boldsymbol{0}\} \mid S)\DP{0,\delta/2} p(\{\boldsymbol{0}\} \mid S_{-j})$.\footnote{This could be established formally by induction on the queries of the SVT, using the bound on the $\ell_1$ sensitivity of the counts.}

Since by \Cref{prop:stabind} there are at most two subspaces $U$ for which $n(U)=0$ and $n_{-j}(U)=1$ or vice versa, we conclude, by a union bound, that for any event $E$ (which is just a union of subspaces), 
$p(E \mid S)\DP{\eps,\delta} p(E \mid S_{-j})$. This finishes the first part of the proof.

\smallskip

We now prove that Items 1 and 2 in the theorem statement are satisfied with high probability.
First, observe that the output space $U$ is either $U=\{\boldsymbol{0}\}$ or it is spanned by some of the input vectors. Consequently, Item 1 is satisfied with probability 1. Regarding Item 2, we can assume, without loss of generality, that 
the number of vectors
$\ell=\Omega\left(\frac{d^2}{\eps}\log\left(\frac{d}{\delta}\right)\right)$; if this is not the case, Item 2 is trivially satisfied. Specifically, we assume \(\ell \geq \frac{100d^2}{\eps}\ln\frac{100d}{\delta}\).
For each $i\leq d$ let $\hat m(i) = m(i) + \mathtt{Lap}(4/\eps)$ be the noisy subspace count as defined in Line 5 of \Cref{alg:linspan}. By \Cref{eq:conlap} and a union bound we have that with probability at least $1-\delta/2$:
\begin{equation}\label{eq:mhatm}
(\forall i\leq d): \bigl\lvert m(i) - \hat m(i) \bigr\rvert \leq \frac{4}{\eps}\ln\frac{2d}{\delta}.    
\end{equation}
Another application of \Cref{eq:conlap} yields that with probability at least $1-\delta/2$
\begin{equation}\label{eq:4}
\Bigr\rvert\theta - \frac{16}{\eps}\ln\frac{100d}{\delta}\Bigl\lvert\leq  \frac{2}{\eps}\ln\frac{2}{\delta},%
\end{equation}
where $\theta$ is the cutoff defined in Line 4 of \Cref{alg:linspan}.
Thus, with probability at least $1-\delta$ the inequalities in \Cref{eq:mhatm,eq:4} are satisfied.
It suffices to show that if these inequalities are satisfied then Item 2 is also satisfied: Since (i) $\ell \geq \frac{100d^2}{\eps}\ln\frac{100d}{\delta}$, (ii) there are $d$ possible dimensions of subspaces, and (iii) each set in the partition contains at most $d$ independent vectors, then there exists
$i$ such that $m(i)\geq \frac{\ell}{d^2} \geq \frac{100}{\eps}\ln\frac{100d}{\delta}$. Hence in this event we have,
\[\hat m(i) \geq \frac{100}{\eps}\ln\frac{100d}{\delta} -  \frac{4}{\eps}\ln\frac{2d}{\delta} \geq \frac{20}{\eps}\ln\frac{100d}{\delta} \geq \theta, \]
and therefore the output subspace $U$ satisfies $\mathtt{dim}(U) = j$ for some $j\geq i$.  
Moreover, \Cref{eq:mhatm,eq:4} imply that if $m(j)=0$ then $\hat m(j) <\theta$, which means that such $j$'s are not outputted. In other words the output space $U \not= \{\bf 0\}$. 
Thus, the only input vectors that are possibly outside $U$ are those that belong to an independent set spanning subspaces of
dimension in 
$[j+1,d]$.
 Since $\hat m(\ell) < \theta$ for these subspaces, we get that the number of such vectors is at most
\[d^2\cdot\Bigl( \frac{20}{\eps}\ln\frac{100}{\delta} + \frac{4}{\eps}\ln\frac{2d}{\delta}\Bigr) = O\Bigl(\frac{d^2}{\eps}\log\frac{d}{\delta}\Bigr).\]
\end{proof}

\subsubsection{Synthetic Affine Spans}
The following algorithm uses \Cref{alg:linspan} to privately approximate affine spans. It will serve as a key component of our differentially private (DP) linear programming solver.

\begin{algorithm}
\caption{Synthetic Affine Span}
\label{alg:affine_hulls}
\textbf{Input: }a sequence \(\{\boldsymbol{u_i}\}_{i=1}^m\subseteq \F^d\).\\
\textbf{Output:} a sequence \(\{\boldsymbol{v_j}\}_{j=1}^k\subseteq \F^d\)
     such that \(\mathtt{Affspan}(\{\boldsymbol{v_j}\})\) approximates \(\mathtt{Affspan}(\{\boldsymbol{u_i}\})\) (see \Cref{t:privateaffspan}).
\begin{algorithmic}[1]
\State Apply \Cref{alg:linspan} to the sequence $\{\boldsymbol{u_i'}\}_{i=1}^m$, where $\boldsymbol{u_i'}=(\boldsymbol{u_i}, 1)\in \F^{d}\times \F$. 
\State If \Cref{alg:linspan} outputs \(\{\boldsymbol{0}\}\) then output \(\emptyset\).
\State Else, let $\{\boldsymbol{\boldsymbol{v_i'}}\}_{i=1}^k$ be the sequence of vectors outputted by \Cref{alg:linspan}.
\State Set $\boldsymbol{v_i''} = \frac{1}{\boldsymbol{v_i'}(d+1)}\cdot \boldsymbol{v_i'}$, where \(\boldsymbol{v_i'}=(\boldsymbol{v_i'}(1),\ldots, \boldsymbol{v_i'}(d+1))\).
\State \textbf{Output} the sequence \(\left\{\boldsymbol{v_j}\right\}_{j=1}^k\), where $\boldsymbol{v_j''}=(\boldsymbol{v_j},1)\in\F^d\times \F$.
\end{algorithmic}
\end{algorithm}

\begin{customthm}{B}[Private Affine Span]\label{t:privateaffspan}
\Cref{alg:affine_hulls} is $(\eps,\delta)$-DP and satisfies the following with probability at least~$1-\delta$.
Let $V_{input}=\mathtt{Affspan}(\{\boldsymbol{v_j}\})$ denote the affine span of the input sequence and let $V_{output}=\mathtt{Affspan}(\{\boldsymbol{u_i}\})$
denote the affine span of the output sequence. (In the special case when the output is $\emptyset$, let $V_{output}=\emptyset$ as well.) Then,
\begin{enumerate}
\item $V_{output}\subseteq V_{input}$.
\item All but $O(\frac{d^2}{\eps}\log(d/\delta))$ of the input vectors $\boldsymbol{u_i}$ satisfy $\boldsymbol{u_i}\in V_{output}$.
\end{enumerate}
\end{customthm}
\begin{proof}
\Cref{alg:affine_hulls} inherits the \((\eps,\delta)\)-privacy guarantees of \Cref{alg:linspan},
as follows by the post-processing property of differentially private (DP) algorithms.

We now prove the utility guarantees. We rely on the following observation.
\begin{obs}
Let $\boldsymbol{Z_i'}\in\mathbb{F}^{d+1}$ for $i=0,\ldots, n$ be vectors of the form
\(\boldsymbol{Z_i'} = (\boldsymbol{z_i},1)\), where \(\boldsymbol{z_i}\in \F^d\).
Then, the following statements are equivalent:
\begin{enumerate}\label{obs:linaff}
    \item \(\boldsymbol{z_0'}\) is a linear combination of the \(\boldsymbol{z_i'}\)'s, i.e.,
    \(\boldsymbol{z_0'}\in\mathtt{span}(\{\boldsymbol{z_i'}\})\).%
    \item \(\boldsymbol{z_0'}\) is an affine combination of the \(\boldsymbol{z_i'}\)'s, i.e.,
    \(\boldsymbol{z_0'}\in\mathtt{Affspan}(\{\boldsymbol{z_i'}\})\).
    \item \(\boldsymbol{z_0}\) is an affine combination of the \(\boldsymbol{z_i}\)'s, i.e.,
    \(\boldsymbol{z_0}\in\mathtt{Affspan}(\{\boldsymbol{z_i}\})\).
\end{enumerate}
\end{obs}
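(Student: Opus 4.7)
The plan is to establish the three-way equivalence by proving the cyclic chain of implications $(2) \Rightarrow (1) \Rightarrow (3) \Rightarrow (2)$, leveraging the fact that the last coordinate of every lifted vector $\boldsymbol{z_i'}$ is the constant $1$. This last-coordinate trick is the entire content of the argument; nothing deeper is happening.

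For $(2) \Rightarrow (1)$, I would simply note that any affine combination is in particular a linear combination (affine combinations are those linear combinations whose coefficients sum to $1$), so the inclusion $\mathtt{Affspan}(\{\boldsymbol{z_i'}\})\subseteq \mathtt{span}(\{\boldsymbol{z_i'}\})$ is immediate. For $(1) \Rightarrow (3)$, suppose $\boldsymbol{z_0'} = \sum_{i=1}^{n} \beta_i \boldsymbol{z_i'}$ for some scalars $\beta_i\in\mathbb{F}$. Reading off the last coordinate of this equality yields $1 = \sum_{i=1}^{n} \beta_i \cdot 1 = \sum_{i=1}^{n} \beta_i$, so the coefficients sum to $1$. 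Reading off the first $d$ coordinates yields $\boldsymbol{z_0} = \sum_{i=1}^{n} \beta_i \boldsymbol{z_i}$, which together with $\sum_i \beta_i = 1$ exhibits $\boldsymbol{z_0}$ as an affine combination of the $\boldsymbol{z_i}$'s, giving (3).

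Finally, for $(3) \Rightarrow (2)$, suppose $\boldsymbol{z_0} = \sum_{i=1}^{n} \alpha_i \boldsymbol{z_i}$ with $\sum_{i=1}^n \alpha_i = 1$. Then coordinate-wise $\sum_{i=1}^{n} \alpha_i \boldsymbol{z_i'} = \bigl(\sum_{i=1}^n \alpha_i \boldsymbol{z_i},\, \sum_{i=1}^n \alpha_i\bigr) = (\boldsymbol{z_0},1) = \boldsymbol{z_0'}$, which is an affine combination of the $\boldsymbol{z_i'}$'s, establishing (2). This closes the cycle.

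There is no real obstacle here; the observation is essentially the standard homogenization trick that identifies affine geometry in $\F^d$ with the slice $\{x_{d+1}=1\}$ of linear geometry in $\F^{d+1}$. The only thing to be careful about is that the statement as phrased concerns scalars in an arbitrary field $\F$ (not necessarily $\R$), but since the argument only uses the field axioms and the equality $\sum_i \beta_i = 1$ forced by the last coordinate, it goes through verbatim over any field.
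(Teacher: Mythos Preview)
Your proof is correct and complete. The paper itself does not supply a proof of this observation (it is stated as an \emph{obs} inside the proof of \Cref{t:privateaffspan} and then used without further justification), so your cyclic argument $(2)\Rightarrow(1)\Rightarrow(3)\Rightarrow(2)$ via the last-coordinate trick is exactly the standard verification the paper tacitly assumes.
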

By \Cref{prop:linspan} we have that with probability at least~$1-\delta$, $\mathtt{span}(\{\boldsymbol{v_i'}\})\subseteq \mathtt{span}(\{\boldsymbol{u_i'}\})$, and \(\mathtt{span}(\{\boldsymbol{v_i'}\})\) contains all but $O(\frac{d^2}{\eps}\log(d/\delta))$ of the $\boldsymbol{u_i'}$'s.
In the remainder of the proof we assume this event holds. 
First, we show that the $\boldsymbol{v_j''}$'s are well defined. Indeed, every nonzero vector $\boldsymbol{v}\in \mathtt{span}(\{\boldsymbol{u_i'}\})$ satisfies $\boldsymbol{v}(d+1)\neq 0$, and therefore $\boldsymbol{v_i'}(d+1)\neq 0$ for all $\boldsymbol{v_i'}$.

We now prove Item 1: let $\boldsymbol{v}\in V_{output}$; thus, $\boldsymbol{v}$
is an affine combination of the $\boldsymbol{v}_i$'s.
Hence, by~\Cref{obs:linaff} $(\boldsymbol{v},1)$ is a linear combination of the~$\boldsymbol{v_i'}$'s.
Since $\mathtt{span}(\{\boldsymbol{v_i'}\})\subseteq \mathtt{span}(\{\boldsymbol{u_i'}\})$
it follows that $(\boldsymbol{v},1)$ is also a linear combination of the \(\boldsymbol{u_i'}\)'s;
finally, since $\boldsymbol{u}'_i=(\boldsymbol{u_i},1)$ it follows that $\boldsymbol{v}$ is an affine combination of the \(\boldsymbol{u_i}\)'s and hence $\boldsymbol{v}\in V_{input}$ as required.

For Item 2, let $\boldsymbol{u_i}$ such that \(\boldsymbol{u}'_i\in \mathtt{span}(\{\boldsymbol{v_i'}\})\)
(recall that this is satisfied by all but $O(\frac{d^2}{\eps}\log(d/\delta))$ of the $\boldsymbol{u_i}$'s). Thus, $\boldsymbol{u_i'}\in \mathtt{span}(\{\boldsymbol{v_j''}\})$,
and since $\boldsymbol{u_i'}=(\boldsymbol{u_i},1)$, $\boldsymbol{v_j''}=(\boldsymbol{v_j},1)$,
it follows that \(\boldsymbol{u_i}\) is an affine combination of the \(\boldsymbol{v_j}\)'s as required.

\end{proof}

\subsubsection{Synthetic Systems of Linear Equations}

\begin{algorithm}
\caption{Synthetic Systems of Linear Equations}
\label{alg:lin_equation}
\textbf{Input:} linear system \(\mathcal{I}=\left\{\boldsymbol{a_i}\cdot x = b_i\right\}_{i=1}^m\).\\
\textbf{Output:} linear system \(\mathcal{O}=\{\boldsymbol{c_j}\cdot x = d_j \}_{j=1}^k\) that approximates $\I$ (see \Cref{t:privatelineq})
\begin{algorithmic}[1]
\State Apply \Cref{alg:linspan} on the sequence $\{\boldsymbol{v}_i\}_{i=1}^m$, where $\boldsymbol{v}_i=(\boldsymbol{a}_i, -b_i)\in \F^{d}\times \F$.
\State Let $\{\boldsymbol{u_i}\}_{i=1}^k$ be the sequence of vectors that \Cref{alg:linspan} outputs.
\State \textbf{Output} the linear system \(\left\{\boldsymbol{c_j}\cdot \boldsymbol{x} = d_j\right\}_{i=1}^k\), where $\boldsymbol{u_j}=(\boldsymbol{c_j},-d_j)\in\F^d\times \F$.
\end{algorithmic}
\end{algorithm}

\begin{customthm}{C}[Private Synthetic Linear Equations]\label{t:privatelineq}
\Cref{alg:lin_equation} is $(\eps,\delta)$-DP and satisfies the following with probability at least $1-\delta$.
Let $\mathtt{sol(\I)}$ be the solution space to the input system and let $\mathtt{sol(\O)}$ be the solution space to the output system.
Then,
\begin{enumerate}
\item $\mathtt{sol(\I)}\subseteq\mathtt{sol(\O)}$.
\item If $\mathtt{sol(\O)}\neq\emptyset$ then each $x\in \mathtt{sol(\O)}$ satisfies all but $O(\frac{d^2}{\eps}\log(d/\delta))$ of the equations in $\mathcal{I}$.
\end{enumerate}
\end{customthm}
\begin{proof}
    Due to the post-processing property of differentially private (DP) algorithms, \Cref{alg:lin_equation} inherits the \((\eps,\delta)\)-privacy guarantees of \Cref{alg:linspan}.
We now prove the utility guarantees. We use the following basic fact: 
\begin{obs}\label{f:aff-lin}
Let $\boldsymbol{w_0},\ldots \boldsymbol{w_k}$
be $d+1$-dimensional vectors each of the form $\boldsymbol{w_i} = (\boldsymbol{r_i},-t_i)$, where $\boldsymbol{r_i}$ is a $d$-dimensional vector and $t_i$ is a scalar. Then, the following statements are equivalent:
\begin{enumerate}
    \item $\boldsymbol{w_0}\in\mathtt{span}(\{\boldsymbol{w_i}: i>0\})$.
    \item Any solution $\boldsymbol{x}\in\mathbb{F}^d$
satisfying $\boldsymbol{x}\cdot \boldsymbol{r_i}=t_i$ for all $i>0$, also satisfies $\boldsymbol{x}\cdot \boldsymbol{r_0}=t_0$.
\end{enumerate} 
\end{obs}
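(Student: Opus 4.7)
The plan is to prove the two directions of the equivalence separately. Direction $(1) \Rightarrow (2)$ is a routine coordinate-reading argument, while $(2) \Rightarrow (1)$ requires a linear-algebra duality argument with a case split.

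For $(1) \Rightarrow (2)$, I would write $\boldsymbol{w_0} = \sum_{i>0} \alpha_i \boldsymbol{w_i}$ for some scalars $\alpha_i \in \F$, then split the equation into the first $d$ coordinates and the $(d+1)$-th coordinate separately. This yields $\boldsymbol{r_0} = \sum_{i>0} \alpha_i \boldsymbol{r_i}$ and $t_0 = \sum_{i>0} \alpha_i t_i$. Dotting the first relation with any $\boldsymbol{x}$ satisfying $\boldsymbol{x}\cdot\boldsymbol{r_i}=t_i$ for $i>0$ then gives $\boldsymbol{x}\cdot\boldsymbol{r_0} = \sum_{i>0} \alpha_i t_i = t_0$.

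For $(2) \Rightarrow (1)$, I would argue the contrapositive via the standard duality that a vector lies outside a subspace of $\F^{d+1}$ iff there is a linear functional that kills the subspace but not the vector. Assuming $\boldsymbol{w_0} \notin \mathtt{span}(\{\boldsymbol{w_i}: i>0\})$, pick such a functional and write it as $\phi(\boldsymbol{v}) = (\boldsymbol{y}, s)\cdot \boldsymbol{v}$ with $\boldsymbol{y}\in\F^d$ and $s \in \F$. The vanishing condition and the non-vanishing on $\boldsymbol{w_0}$ translate to $\boldsymbol{y}\cdot\boldsymbol{r_i} = s t_i$ for $i>0$ and $\boldsymbol{y}\cdot\boldsymbol{r_0} \neq s t_0$. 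I would then split on whether $s = 0$: if $s \neq 0$, the rescaled vector $\boldsymbol{x}:=\boldsymbol{y}/s$ is an explicit solution of the $i>0$ subsystem that fails the $i=0$ equation, contradicting (2); if $s = 0$, take any particular solution $\boldsymbol{x}^*$ of the $i>0$ subsystem and observe that $\boldsymbol{x}^*+\boldsymbol{y}$ is another solution, while the two differ in their inner product with $\boldsymbol{r_0}$ (since $\boldsymbol{y}\cdot\boldsymbol{r_0}\neq 0$), so at least one of them violates (2).

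The main delicate point is the $s=0$ case, which implicitly relies on the existence of a solution $\boldsymbol{x}^*$ to the $i>0$ subsystem — a feasibility hypothesis not explicit in the statement. This is benign in the intended application inside the proof of \Cref{t:privatelineq}, where the observation is invoked with respect to systems known to be satisfiable (the given system $\I$, or the output system $\O$ conditioned on $\mathtt{sol}(\O)\neq\emptyset$); in the actual usage only the direction $(1)\Rightarrow(2)$ is needed, which holds unconditionally. I would therefore note this caveat explicitly and present both directions as above.
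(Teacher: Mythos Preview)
The paper states this observation as a ``basic fact'' and gives no proof, so there is no argument in the paper to compare against. Your proof of $(1)\Rightarrow(2)$ is correct and is in fact the only direction the paper ever invokes (both Item~1 and Item~2 in the proof of \Cref{t:privatelineq} use it).

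Your diagnosis of the $(2)\Rightarrow(1)$ direction is also accurate, and worth stating a bit more forcefully: as written, the equivalence is \emph{false} without a feasibility hypothesis on the $i>0$ system. A concrete counterexample in $\F^2$ ($d=2$): take $\boldsymbol{w_1}=(1,0,0)$, $\boldsymbol{w_2}=(1,0,-1)$ (the inconsistent pair $x_1=0$, $x_1=1$) and $\boldsymbol{w_0}=(0,1,0)$; then (2) holds vacuously while $\boldsymbol{w_0}\notin\mathtt{span}(\boldsymbol{w_1},\boldsymbol{w_2})$. Your argument for $(2)\Rightarrow(1)$ is the standard one and is correct once feasibility is assumed; your remark that this is harmless in the paper's application is right for the stronger reason you give second, namely that only $(1)\Rightarrow(2)$ is actually used.
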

By \Cref{prop:linspan} we have that with probability at least~$1-\delta$, $\mathtt{span}(\{\boldsymbol{u_i}\})\subseteq \mathtt{span}(\{\boldsymbol{v_i}\})$, and \(\mathtt{span}(\{\boldsymbol{u_i}\})\) contains all but $O(\frac{d^2}{\eps}\log(d/\delta))$ of the $\boldsymbol{v_i}$'s.
Assume this event holds. 
\Cref{f:aff-lin} therefore yields that $\mathtt{sol(\I)}\subseteq\mathtt{sol(\O)}$ as stated in Item 1. 
For Item 2, let \(\boldsymbol{x}\in\mathtt{sol(\O)}\); \Cref{f:aff-lin} implies that $\boldsymbol{x}$ 
satisfies each input equation $\boldsymbol{x}\cdot \boldsymbol{a_i}=b_i$ such that $\boldsymbol{v_i}\in \mathtt{span}(\{\boldsymbol{u_j}\})$.
Item 2 therefore follows, because this is satisfied by all but at at most $O(\frac{d^2}{\eps}\log(d/\delta))$ of the $\boldsymbol{v_i}$'s.
\end{proof}

\section{Polyhedra and Linear Inequalities}\label{sec:inequalitiesmain}

 \cite{DS08} gave a perceptron-based algorithm to solve linear programming. 
They consider LP in the form: Given a matrix $A$, find a nonzero vector $x\in \mathcal{R}^d$, such that $Ax\ge 0$. (The standard LP formulation can be reduced to this setting.) 
They define 
the {\em roundness} of the feasible region of $A$ to be 
$$
\rho(A)=\max_{x:\|x\|_2=1, Ax\geq0} \min_{a_i}  \langle \bar{a_i},x \rangle,
$$
where $\bar{a_i}$ is the unit vector along $a_i$.
At a high level, in every iteration they transform the linear system to a system with 
a larger $\rho$. When 
the roundness becomes $\Omega(1/d)$ then in $O(d^2)$ additional steps the standard perceptron algorithm find the required feasible vector.
Their algorithm is polynomial in the size of $A$ and $\log(1/\rho)$.\footnote{Their algorithm takes square roots in order to normalize vectors. \cite{DS08} do not address the question of how to implement the algorithm using finite precision so that it still runs in polynomial time.}

In Appendix \ref{appendix:vempala} we work out the details of a differentially private version of this algorithm (which we call the DP-DV-algorithm).
The crucial idea that allows one to do this is the fact that we can make the perceptron steps (and perceptron-like steps that are used to improve the roundness) by averaging many ``violated constraints'' rather than using a single one. Then by carefully privatizing these averages we can make the algorithm differentially private. The resulting algorithm satisfies all the constraints except for a polynomially (in $d$) many and is computationally efficient.

This result is summarized in the following theorem which we prove in Appendix \ref{appendix:vempala}.

\begin{theorem} \label{thm:vempala}
Denote $T=\Theta(d\cdot\ln(1/\rho_0)+\ln(1/\beta))$. 
There exists a computationally efficient $(\epsilon,\delta)$-differentially private algorithm whose input is an $n\times d$ matrix $A$ and its output is a vector $x^*\in\R^d$ such that the following holds. Let $a_1,\ldots,a_n$ denote the rows of $A$. If $\rho(A)\geq\rho_0$, then with probability at least $1-\beta$ the returned vector $x^*$ satisfies 
\begin{eqnarray*}
\left|\left\{i: \langle \bar{a}_i,\bar{x}^* \rangle<\frac{2^{-T}}{12000}\right\}\right|&\leq& \Gamma = O\left(
\frac{d^5}{\epsilon}
 \ln^{1.5}(d) \ln^2\left(\frac{d\ln(\frac{1}{\rho_0})}{\beta}\right)  \ln^{1.5}\left(\frac{1}{\rho_0}\right) \ln\left(\frac{d\ln(\frac{1}{\rho_0})}{\beta\delta}\right)
 \sqrt{\ln\frac{1}{\delta}}
 \right)\\
 &=&\frac{d^5}{\epsilon}\cdot{\rm polylog}\left(d,\frac{1}{\beta},\frac{1}{\delta},\frac{1}{\rho_0}\right).
\end{eqnarray*}
\end{theorem}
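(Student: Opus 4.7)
The plan is to construct a differentially private variant of \cite{DS08}'s perceptron-based LP algorithm (DV) by exploiting the fact that every per-iteration update of DV is a linear aggregate over constraint rows, and is therefore amenable to additive noise. I would carry out the proof in three stages.

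First, I would recast DV for privatization. DV alternates a perceptron phase (in which $x \leftarrow x + a_i$ for a single violating constraint $a_i$) with a rescaling phase that approximately doubles the roundness $\rho$; it terminates after $T = \Theta(d\log(1/\rho_0) + \log(1/\beta))$ outer iterations. Since a single-constraint update has high sensitivity, I would replace each step by an aggregate $x \leftarrow x + \sum_{i \in S} a_i$ over the set $S$ of constraints whose inner product with the current $x$ lies below an appropriate threshold $\tau$. With the $a_i$ normalized to unit $\ell_2$ norm the aggregate has $\ell_2$-sensitivity at most $1$, and the rescaling step is recast analogously.

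Second, I would calibrate privacy and argue utility together. To each aggregate I would add Gaussian (or coordinate-wise Laplace) noise with scale $\sigma_0 = \tilde{\Theta}(\sqrt{T \log(1/\delta)}/\epsilon)$, handling any auxiliary count queries such as $|S|$ through the same calibration or via the Sparse Vector Technique; advanced composition over $T$ iterations then yields end-to-end $(\epsilon,\delta)$-DP. By Laplace/Gaussian tail bounds and a union bound, the noise vector in each iteration has $\ell_2$ norm at most $\tilde{O}(\sqrt{d}/\sigma_0^{-1})$ with probability at least $1-\beta$. Whenever $|S|$ exceeds a threshold of order $\Gamma/T$ that is comfortably above the noise magnitude, the noisy aggregate direction agrees with the true one up to a negligible angle, and the DV analysis of per-iteration progress in roundness or perceptron objective transfers essentially verbatim. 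Constraints that are violated but never appear in a sufficiently large $S$ are simply left unpooled; summing over the $T$ iterations bounds the count of unsatisfied constraints at termination by $\Gamma$. Cascading the $d$-factors from perceptron length ($d^2$), rescaling count ($d$), sensitivity normalization, and the $\sqrt{\log(1/\delta)}$ from composition yields the stated $\Gamma = (d^5/\epsilon)\cdot\mathrm{polylog}$. The margin threshold $2^{-T}/12000$ in the output is inherited directly from the geometric contraction of roundness scales across the $T$ rescalings.

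I expect the delicate step to be the rescaling phase. Perceptron updates are robust to small directional noise, but rescaling relies on a carefully chosen direction to amplify progress, and small directional errors can interact with the evolving geometry of the transformed system in subtle ways. The analysis will need a careful induction showing that the noise added in each rescaling does not cause the doubled roundness to drift below its ideal value, so that the margin achievable at the end is $\Omega(2^{-T})$ rather than exponentially worse; this is where the remark in the paper about \emph{carefully} privatizing the averages earns its keep.
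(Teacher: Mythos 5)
Your proposal matches the paper's approach essentially step for step: the paper also privatizes Dunagan--Vempala by replacing each single-constraint update (in both the perceptron and rescaling phases) with a noisy average over the currently-violated constraints, uses Laplace-noised counts to decide when to break out of each phase, bounds the number of constraints discarded along the way, and composes over the $T = \Theta(d\ln(1/\rho_0) + \ln(1/\beta))$ outer rescaling rounds to get the final $(\epsilon,\delta)$-DP and $\Gamma$ bounds. The only cosmetic difference is that you phrase the update as a noisy \emph{sum} rather than the noisy \emph{average} the paper uses (via its \texttt{NoisyAVG} subroutine, which needs the denominator $|S|$ estimated privately as well); either can be made to work, but the average is what keeps the update bounded by a unit vector so that the DV potential argument transfers with minimal modification.
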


\begin{remark}
Since $2^{-T}/12000$ is positive, it follows, in 
 particular that with probability at least $1-\beta$ the DP-DV-algorithm returns a non-zero vector $x^*$ satisfying
$$
\left|\left\{i: \langle \bar{a}_i,\bar{x}^* \rangle<0\right\}\right|
 \leq\frac{d^5}{\epsilon}\cdot{\rm polylog}\left(d,\frac{1}{\beta},\frac{1}{\delta},\frac{1}{\rho_0}\right).
$$
\end{remark}

\cite{DS08} (Section 4) show how to transform an LP in standard form (say, find $x$ s.t.\ $Ax\le b$, $x\ge 0$) to their homogenous form (find $x\not= 0$ s.t.\ $Ax\ge 0$). This transformation combines the non-negativity constraints $x >0$ with an homogenized version of the $Ax\le b$ constraints. They do this by extending the matrix $A$ with an additional column corresponding to $b$ (and a new variable $x_0$) and additional rows which form the identity matrix. 
In a differentially private setting we want to be private with respect to the constraints 
$Ax\le b$, but the constraints 
$x \ge 0$ are public.
This gives rise to an instance of the problem: find $x\not= 0$ such that $Ax\ge 0$, in which we want to be private with respect to some of the constraints, but other constraints are public.

To support this setting we extend 
the DP-DV-algorithm to get as input an additional set of constraints of the form $Bx\ge 0$ which are not private and have to be satisfied.
An easy way to do this is by duplicating each of these constraints more times than the upper bound on the number of constraints that are allowed to be violated, as specified in \Cref{thm:vempala}, and  treating these duplicates as any of the other constraints. We can also modify the algorithm such that it treats these constraints specially (we omit the details).
Combining this with the
 reduction in Section 4 of \cite{DS08} we
 obtain the following theorem.

\begin{theorem}\label{thm:dp-lp1}
Let \(\mathcal{LP}=\left\{Ax\le b, x\ge 0\right\}\) be 
a system of linear inequalities, where $A$ is an $m\times d$ matrix, $b\in \mathcal{R}^d$, and the feasible region of $A$ contains a ball of radius $\rho_0$.
Let 
 $\epsilon, \delta$ be privacy parameters and $\beta$ a confidence parameter.
There exists a computationally efficient $(\epsilon,\delta)$-differentially private algorithm that with probability $1-\beta$
returns a vector $x^*\geq0$ that satisfies 
 all but
$\frac{d^5}{\epsilon}{\rm polylog}(d,1/\delta,1/\beta,1/\rho_0)$ of the constraints $Ax\le b$.
\end{theorem}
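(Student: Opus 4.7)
The approach is to reduce \(\mathcal{LP}\) to an instance of the homogeneous feasibility problem handled by \Cref{thm:vempala}, via the standard transformation from Section~4 of \cite{DS08}. Introduce an auxiliary variable \(x_0\) and consider the augmented vector \(y = (x_0, x) \in \R^{d+1}\); each original constraint \(\boldsymbol{a_i}\cdot \boldsymbol{x}\le b_i\) is homogenized into \(b_i\, x_0 - \boldsymbol{a_i}\cdot \boldsymbol{x}\ge 0\), and we append the non-negativity constraints \(x_0\ge 0\) and \(x_j\ge 0\) for \(j=1,\ldots,d\). Collecting the coefficients into a matrix \(A'\), the problem becomes: find a nonzero \(y\ge 0\) with \(A'y\ge 0\); given such a \(y^* = (x_0^*, x^*)\) with \(x_0^* > 0\), the vector \(x^*/x_0^*\) is our candidate output for \(\mathcal{LP}\).

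The private data consists only of the \(m\) homogenized inequalities coming from \((A,b)\); the \(d+1\) non-negativity constraints on \(y\) are public. Following the scheme outlined in the excerpt preceding the theorem, I would invoke the variant of the DP-DV-algorithm that accepts public constraints, implemented by duplicating each of the \(d+1\) public constraints more than \(\Gamma\) times, where \(\Gamma\) is the violation bound from \Cref{thm:vempala}. Since the duplicates are data-independent, privacy is unaffected, yet any violation of a public constraint would cost more than \(\Gamma\) violations in total, contradicting the utility guarantee; hence in the returned solution, all public constraints are satisfied and at most \(\Gamma\) of the private constraints are violated. The assumption that the feasible region of \(\mathcal{LP}\) contains a ball of radius \(\rho_0\) translates, via the reduction in \cite{DS08}, into a quantitative lower bound \(\rho(A')\ge \mathrm{poly}(\rho_0, 1/d)\), so \Cref{thm:vempala} applies with \(\rho_0\) replaced by this polynomial quantity, affecting only \(\mathrm{polylog}\) factors in the final bound.

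Privacy is then immediate from post-processing: the extended DP-DV-algorithm is \((\epsilon,\delta)\)-DP with respect to the rows of the private block, and both the construction of \(A'\) (including the public duplicates) and the final rescaling \(x^*/x_0^*\) depend on the private data only through the algorithm's output. The main obstacle is the quantitative bookkeeping in the reduction: I need to verify that the ball-radius hypothesis on \(\mathcal{LP}\) indeed yields a \(\mathrm{poly}(\rho_0)\) lower bound on \(\rho(A')\), and that the returned \(y^*\) has \(x_0^* > 0\) rather than merely \(x_0^* \ge 0\), so that the rescaling is well-defined. Both facts are established in \cite{DS08}'s non-private analysis of the reduction, and since the DP modification only perturbs the perceptron-style averages by noise while preserving the structural invariants, these guarantees transfer to the DP version; the final violation bound is \(\Gamma = \frac{d^5}{\epsilon}\,\mathrm{polylog}(d, 1/\delta, 1/\beta, 1/\rho_0)\), as claimed.
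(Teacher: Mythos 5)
Your proposal follows essentially the same approach as the paper's (informal) argument: the paper likewise reduces via the homogenization from Section~4 of \cite{DS08}, treats the non-negativity constraints as public, handles them by duplicating each more than $\Gamma$ times before invoking the DP-DV-algorithm of \Cref{thm:vempala}, and relies on the roundness bookkeeping and $x_0^*>0$ from \cite{DS08}'s analysis. The details you flag as needing verification (that the ball-radius hypothesis yields $\rho(A')\ge\mathrm{poly}(\rho_0,1/d)$, and that the positive margin $2^{-T}/12000$ in \Cref{thm:vempala} forces $x_0^*>0$ when that constraint is duplicated) are exactly the points the paper delegates to \cite{DS08}'s reduction without spelling out.
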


\subsection{When the feasible region is not fully dimensional} \label{sec:no-margin}

The Algorithm of \cite{DS08}
cannot solve LP's with roundness $0$.\footnote{Note that an LP can have roundness $0$ even when the coefficients of the constraints lie on a discrete grid. This happens whenever the feasible region is a flat of dimension smaller than $d$.} 
A standard way to overcome this hurdle (that arises also when using the non-private Ellipsoid algorithm) is to slightly perturb the linear program such that the perturbed version is feasible iff the original version is feasible, and furthermore the feasible region of the perturbed LP has positive volume.
The following lemma specifies this perturbation.
 
\begin{lem} \label{lem:Peps}
The linear program
\begin{eqnarray*}
 Ax & \le & b  \;\;\; (P) \\
 x & \ge & 0  
\end{eqnarray*}
is feasible if and only if the linear program 
\begin{eqnarray*}
 Ax & \le & b +\eta {\bf 1} \;\;\; (P^\eta) \\
 x & \ge & 0  
\end{eqnarray*}
is feasible.
Here $A$ is an $m \times d$ integer matrix, 
$b$ is an integer vector of length $m$, and 
$\eta=\frac{1}{2(d+1)((d+1)U)^{(d+1)}}$, where 
$U$ is an upper bound on the absolute value of $b_i$ and $A_{ij}$ for all $i\in [m]$ and  $j\in [d]$. We assume that $m\gg d$.
\end{lem}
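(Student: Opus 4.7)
The plan is to handle the two directions separately. The forward direction ($(P)$ feasible $\Rightarrow$ $(P^\eta)$ feasible) is immediate, since any $x\ge 0$ with $Ax\le b$ automatically satisfies $Ax\le b+\eta\mathbf{1}$ because $\eta>0$. So I would focus on the contrapositive of the reverse direction: assuming $(P)$ is infeasible, I want to show that $(P^\eta)$ is infeasible as well.

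For this I would apply Farkas' lemma in the form stating that $\{x\in\R^d : Ax\le b,\ x\ge 0\}$ is infeasible if and only if there exists $y\in\R^m$ with $y\ge 0$, $A^T y\ge 0$, and $b^T y<0$. The goal is then to produce such a witness $y$ that is quantitatively small, and to verify that the very same $y$ also certifies infeasibility of $(P^\eta)$. To obtain a small witness I would first normalize $y$ so that $b^T y=-1$ (by scaling) and then take $y$ to be a vertex of the polyhedron $\{y\in\R^m : y\ge 0,\ A^T y\ge 0,\ b^T y=-1\}$, which is nonempty by hypothesis. A standard polyhedral argument then shows that such a vertex has at most $d+1$ nonzero coordinates: if $S\subseteq[m]$ is the support of $y$ and $T\subseteq[d]$ indexes the $A^T y\ge 0$ constraints that are tight at $y$, then the linear independence of the active constraints forces $|S|=|T|+1\le d+1$, and $y_S$ is the unique solution of an invertible $|S|\times|S|$ integer linear system $M\,y_S = (0,\dots,0,-1)^T$ whose entries have absolute value at most $U$.

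Cramer's rule then expresses each coordinate of $y_S$ as $\pm\det(M')/\det(M)$ for a suitable $(|S|-1)\times(|S|-1)$ minor $M'$ of $M$. Hadamard's inequality gives $|\det(M')|\le (U\sqrt{d})^{d}$, while $|\det(M)|\ge 1$ because $M$ is a nonsingular integer matrix. Multiplying $y$ by $|\det(M)|$ therefore yields an integer vector $y'\ge 0$ supported on at most $d+1$ coordinates, each of absolute value at most $(U\sqrt{d})^{d}$, and satisfying $A^T y'\ge 0$ together with $b^T y'\le -1$.

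The final step is to evaluate $y'$ against the perturbed right-hand side:
\[ y'^T(b+\eta\mathbf{1}) \;=\; b^T y' + \eta\sum_i y'_i \;\le\; -1 + \eta\,(d+1)\,(U\sqrt{d})^{d}. \]
A short calculation shows that the choice $\eta = \frac{1}{2(d+1)((d+1)U)^{d+1}}$ makes the second summand strictly smaller than $1$, so $y'^T(b+\eta\mathbf{1})<0$. Applying Farkas' lemma once more, the existence of such $y'$ certifies that $(P^\eta)$ is infeasible, contradicting the hypothesis. The main delicate step is the vertex/Cramer analysis that extracts a witness $y'$ with a controlled $\ell_\infty$-norm; once that is in hand, the exponent appearing in $\eta$ is calibrated precisely so that the perturbation cannot flip the sign of $b^T y'$.
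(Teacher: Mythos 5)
Your proposal is correct and follows essentially the same path as the paper: both reduce to exhibiting a small certificate of infeasibility for $(P)$ (you via Farkas' lemma, the paper via LP duality on an artificial zero objective, which are equivalent), both take a vertex/minimum-dimensional face of the dual polyhedron $\{y\ge 0,\ A^Ty\ge 0,\ b^Ty=-1\}$ to get a witness with at most $d+1$ nonzero coordinates of bounded size via Cramer's rule, and both conclude that the perturbation $\eta\mathbf{1}$ is too small to flip the sign of the inner product. The only cosmetic differences are that the paper isolates the extreme-point/Cramer argument into a separate lemma (its Lemma~\ref{lem:cramer}) and bounds the determinant by $(d+1)!\,U^{d+1}\le ((d+1)U)^{d+1}$ rather than by Hadamard's $(U\sqrt d)^d$, both of which are compatible with the stated $\eta$.
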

\begin{proof}
Clearly if $(P)$ is feasible then so is 
$(P^\eta)$.
For the converse assume that $(P)$ is infeasible.
Add to (P) an artificial objective 
\[\max \;\; 0\cdot x  \]
The dual to $(P)$ is
\begin{eqnarray*}
 \lefteqn{ \hspace*{-0.2in} \min \;  b^T y}  \\
 A^T y & \ge & 0  \;\;\;  (D) \\
 y & \ge & 0  
\end{eqnarray*}

Clearly $(D)$ is feasible ($y=0$ is a solution), and therefore since we assume that $(P)$ is infeasible, then by LP duality $(D)$ must be unbounded.
It follows (by convexity) that the polyhedron defined by
\begin{eqnarray*}
  b^T y & = & -1  \\
 A^T y & \ge & 0  \;\;\;  (D')  \\
 y & \ge & 0  
\end{eqnarray*}
is not empty.

By Lemma \ref{lem:cramer} (in Appendix \ref{app:fulld}), there is a point $z$ on the boundary of the feasible region of $D'$, s.t.\ (1) at most $d+1$ of the coordinates of $z$ are non-zero and (2) $|z_i|\le ((d+1)U)^{d+1}$ for all $i\in [m]$. So from our definition of $\eta$ follows that
\[
(b + \eta {\bf 1})^T z =
b^T z + 1/2 \le -1/2 \ .
\]
This implies that
\[
(b + \eta {\bf 1})^T (c\cdot z) =
c(b^T z + 1/2) \le -c/2 \ .
\]
 for any constant $c\ge 0$.
So we conclude that the dual 
$D^\eta$ of $P^\eta$ (with a zero objective added)
given by
\begin{eqnarray*}
 \lefteqn{ \hspace*{-0.2in} \min \;  (b+\eta {\bf 1})^T y}  \\
 A^T y & \ge & 0  \;\;\;  (D^\eta) \\
 y & \ge & 0  
\end{eqnarray*}
is also unbounded.
Thus $P^\eta$ is infeasible. 
\end{proof}

It is not hard to show that if
$P$ is feasible then the roundness of $P^\eta$ is at least $\eta/(d\cdot U)$.\footnote{If $P$ is feasible then $P^\eta$ contains a cube of side length  $\eta/dU$.}
It follows from Lemma~\ref{lem:Peps},  that if we are only interested to privately decide if the system of linear constraints 
\begin{eqnarray*}
 Ax & \le & b  \;\;\; (P) \\
 x & \ge & 0  
\end{eqnarray*}
is feasible then we can apply the DP-DV-algorithm
to the system $(P^\eta)$.
It is still not clear, however, how to obtain a feasible point to $(P)$ as in Theorem \ref{thm:vempala}. We show how to obtain such a feasible point next. 

By applying the DP-DV-algorithm to $P^\eta$ we obtain a point $x^{*}$ that satisfies most of the constraints of $P^\eta$. Let 
$\JJJ$ be indices of the constraints  of $P^\eta$ that $x^{*}$ satisfies, and let $A_{\JJJ}x \le b$ be the (original) unperturbed version of these constraints. We can partition $\JJJ$ into two subsets. One, say $\JJJ_1$, consists of the indices of the constraints that are satisfied by $x^{*}$, and the other, say $\JJJ_2 = \JJJ\setminus \JJJ_1$, consists of the indices of the constraints that are not satisfied by $x^*$,
that is $b_{\JJJ_2}\le A_{\JJJ_2}x^* \le b_{\JJJ_2} + \eta {\bf 1}_{\JJJ_2}$.

Lemma \ref{lem:Peta2} in Appendix \ref{app:fulld} proves that the system in which we turn all the inequalities in $A_{\JJJ}x \le b$  of an index in $\JJJ_2$ into equalities must also be feasible. Its proof  is very similar to the proof of Lemma \ref{lem:Peps}.
Based on Lemma \ref{lem:Peps} and Lemma \ref{lem:Peta2} we  now describe the following algorithm for DP solving general linear programs.
The idea is to apply the DP-DV-algorithm to the perturbed system. Then identify the set of constraints that we can turn to equalities, privatize these equalities using \Cref{alg:lin_equation},  use their sanitized version to reduce the dimension of the linear program, and iterate with the new LP.

\begin{algorithm}[h!]
\caption{DP-LP \label{alg:dp-lp}}
\textbf{Input:} A system of linear inequalities \(\mathcal{LP}=\left\{Ax\le b, x\ge 0\right\}\), $A$ is $m\times d$, $b\in \mathcal{R}^d$, all entries of $A$ and $b$ are integers bounded by $U$ in absolute value. Privacy parameters $\epsilon, \delta$, confidence parameter $\beta$.\\
\textbf{Output:} A feasible solution $x^*$, s.t.\ $x^*_i\ge 0$, $i\in [d]$ and $x^*$ satisfies all but
$\frac{{\rm poly}(d)}{\eps}{\rm polylog}(1/\delta,1/\beta,U)$ of the constraints defined by $A$ and $b$ of $\mathcal{LP}$.
\begin{algorithmic}[1]
\State
Define the roundness parameter
$\rho=\eta/(d\cdot U)$, where 
$\eta=\frac{1}{2(d+1)((d+1)U)^{(d+1)}}$.

\State Apply the DP-DV-algorithm to 
the linear program 
$\mathcal{LP}^\eta$ defined  in \Cref{lem:Peps},
with roundness parameter
$\rho$.

\noindent
 Let $x^*$ be the
solution that the DP-DV-algorithm returns, and let $\mathcal{J}$ be the indices of the constraints among 
$Ax\le b+ \eta{\bf 1}$ that $x^*$ satisfies.
Let $\mathcal{J}_1$ be the subset of the constraints among
$A_{\mathcal{J}}x \le b_{\mathcal{J}}$ that $x^*$ 
satisfies and let 
 $\mathcal{J}_2=\JJJ-\JJJ_1$.
\State
If $|\mathcal{J}_2| =O(\frac{d^2}{\eps}\log(d/\delta))$ (with the appropriate constant as implied from the proof of \Cref{prop:linspan}.) then return $x^*$.
\State Apply \Cref{alg:lin_equation}
to sanitize the system of equations 
$A_{\mathcal{J}_2}x = b_{\mathcal{J}_2}$. Let
$Cx=d$ be the sanitized system.
Denote by $t$ the number of independent rows in $C$ (Note that we must have $t\le d$).
If $t=d$ or $|\JJJ_1| < \Gamma$ (where $\Gamma$ is as defined in Theorem \ref{thm:vempala} with $\rho=\eta/(d\cdot U)$) then return a solution to $Cx=d$.
\State  
Consider the system 
$\{A_{\JJJ_1}x \le b_{\JJJ_1}, x\ge 0 \}$. Reduce this system to dimension $d-t$ using the $t$ independent rows in $Cx=d$.
Replace $\mathcal{LP}$ with this new system, Replace $U$ by $U^t (\le U^d)$ and go back to Step 1. \newline
\quad\gray{\% Note that when we isolate a variable from an equation of $Cx=d$, and eliminate it from the inequalities  $A_{\JJJ_1}x \le b_{\JJJ_1}, x\ge 0$ then the coefficients in the resulting system may not be integers  anymore. To make the substitution and keep the coefficients integers, we have to multiply the inequalities by the coefficient of the variable which we isolated. Since we isolate $t$ variables, this blows up the upper bound $U$ on the absolute value of the coefficients. Clearly however $U^t$ is a valid new upper bound.}
\end{algorithmic}
\end{algorithm}

\newpage

The following theorem follows using standard composition theorems for DP-algorithms.

\begin{customthm}{D}[Differantially-Private Linear Programming]\label{thm:dp-lp}
Let \(\mathcal{LP}=\left\{Ax\le b, x\ge 0\right\}\) be 
a feasible system of linear inequalities, where $A$ is an $m\times d$ matrix, $b\in \mathcal{R}^d$, and all entries of $A$ and $b$ are integers bounded by $U$ in absolute value.
Let 
 $\epsilon, \delta$ be privacy parameters and $\beta$ a confidence parameter. Algorithm \ref{alg:dp-lp}
is a computationally efficient $(\epsilon,\delta)$-differentially private such that with probability $1-\beta$
returns a vector $x^*\geq0$ that satisfies 
 all but
$\frac{{\rm poly}(d)}{\eps}{\rm polylog}(1/\delta,1/\beta,U)$ of the constraints $Ax\le b$.

\end{customthm}

\subsection{Point in the Convex Hull via Linear Programming and Affine Spans}

In this section, we present our private algorithm for finding a point in the convex hull of a given input set $S\subseteq \XXX^d$, where $\XXX^d\subseteq\R^d$ is a finite $d$-dimensional grid. Intuitively, the algorithm mimics the classical ellipsoid algorithm: We start with a large ellipsoid $\EEE$ that bounds our (finite) input domain and iteratively refine this ellipsoid to approach the convex hull of $S$. Specifically, in each iteration, we formulate an LP whose solution (if it exists) is a hyperplane that separates the center of our current ellipsoid, $c$, from $S$. This LP is feasible if and only if the center $c$ lies outside the convex hull of $S$. We then execute the private LP algorithm from Theorem~\ref{thm:vempala} to attempt to solve this LP. If it fails, then we know that $c$ is a valid solution. Otherwise, we obtain a separating hyperplane, which we use to ``cut'' the ellipsoid. This iterative process reduces the ellipsoid's volume. We prove that if the volume of $\Conv(S)$ remains positive,\footnote{For technical reasons, throughout the execution we sometimes delete points from $S$. Thus, it is possible that the volume of $\Conv(S)$ is initially positive but drops to zero at some point during the execution.} the center of the ellipsoid must enter $\Conv(S)$ after at most $T\approx d^2$ iterations. Therefore, after $T$ iterations, we either find a valid solution or discover that $\Conv(S)$ has zero volume, in which case we restart the algorithm in a lower-dimensional subspace. To this end, we use our techniques from Section~\ref{sec:equalitiesmain} to privately identify such a suitable subspace.

A technical issue that we would like to avoid here is that in order to ``restart the algorithm in a lower-dimensional subspace'', our algorithm must be able to operate not only in $\R^d$, but also in lower-dimensional affine subspaces of it, which would complicate the algorithm. To avoid this, we begin by presenting a (non-private) algorithm, \texttt{Aff2Lin}, for computing a convexity-preserving transformation from a $k$-dimensional affine subspace to $\R^k$. This allows us to remain within a Euclidean space even after discovering that $\Conv(S)$ has zero volume, thus simplifying the algorithm.

\begin{algorithm}[H]
\caption{\texttt{Aff2Lin}}\label{alg:aff2Lin}

{\bf Input:} Affinely independent vectors $u_1,\dots,u_k\in\R^q$ for $k\leq q$.
\begin{enumerate}[rightmargin=10pt,itemsep=1pt]
\item For $i\in[k]$ denote $u'_i=(u_i,1)\in\R^{q+1}$.

\item Let $M$ be the $k\times (q+1)$ matrix whose rows are $u'_1,\dots,u'_k$.

\item Let $c_1,\dots,c_k\in[q+1]$ be the indices of $k$ independent columns in $M$ such that $c_k=q+1$, and let $A$ be the $k\times k$ sub-matrix of $M$ containing these columns. \quad\gray{\% A simple calculation shows that $\{u_i\}$ are affinely independent iff $\{u'_i\}$ are linearly independent. Hence, there are $k$ independent columns in $M$ and this step is well-defined.}

\item Let $T=A^{-1}\,M$, and let $t_1,\dots,t_k$ denote the rows of $T$. \quad\gray{\% $T$ is a $k{\times}(q+1)$ matrix containing the $k{\times}k$ identity matrix as a sub-matrix in columns $c_1{,}...{,}c_k$.}

\item Return the following two functions:
\begin{itemize}
    \item ${\rm Project}:\R^q\rightarrow\R^{k-1}$. This function takes a vector $v\in\R^q$ and returns the vector $(v[c_1],v[c_2],\dots,v[c_{k-1}])\in\R^{k-1}$.

    \item ${\rm GoUp}:\R^{k-1}\rightarrow\R^q$. This function takes a vector $v\in\R^{k-1}$, computes $v''=v[1]t_1+v[2]t_2+\dots+v[k-1]t_{k-1}+1\cdot t_k\in\R^{q+1}$, and returns the vector $v'''\in\R^q$ obtained from $v''$ by omitting its last coordinate.
\end{itemize}
    
\end{enumerate}
\end{algorithm}

\begin{remark}
We stress that \texttt{Aff2Lin} preserves the dimension of the {\em affine span} of the input vectors $u_1,\dots,u_k$, rather than the dimension of their (linear) span. In particular, when $k=1$, then $\Affspan(u_1)$ has dimension 0 while $\span(u_1)$ has dimension 1. In this case, running $\texttt{Aff2Lin}(u_1)$ results in a projection onto $\R^0$, which is a one-point space that contains only the empty tuple.
\end{remark}

\begin{lem}\label{lem:aff2lin}
$S=\{x_1,\dots,x_n\}\subseteq\R^q$ be a multiset and let $\{u_i\}_{i=1}^k\subseteq \R^q$ be affinely independent vectors such that  $S\subseteq\Affspan(u_1,\dots,u_k)$. Let $\left( {\rm Project}, {\rm GoUp} \right)\leftarrow\texttt{Aff2Lin}(u_1,\dots,u_k)$, and let $\hat{S}=\{{\rm Project}(x_i)\,:\,i\in [n]\}\subseteq\R^{k-1}$. Then, for every point $v\in\Conv(\hat{S})$ it holds that ${\rm GoUp}(v)\in\Conv(S)$.
\end{lem}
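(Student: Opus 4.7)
The plan is to exhibit $\mathrm{GoUp}$ as an affine map that acts as a two-sided inverse of $\mathrm{Project}$ on $\Affspan(u_1,\dots,u_k)$, which immediately transports convex combinations on the $\hat{S}$ side back to convex combinations on the $S$ side.

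First I would record the structural fact about $T$: since $T = A^{-1}M$ has the $k\times k$ identity in columns $c_1,\dots,c_k$, the rows $t_1,\dots,t_k$ satisfy $t_i[c_j] = \delta_{ij}$. The row space of $T$ equals the row space of $M$, which equals $\span(u'_1,\dots,u'_k)$. Now fix any $x \in \Affspan(u_1,\dots,u_k)$ and write $x = \sum_{i} \lambda_i u_i$ with $\sum_i \lambda_i = 1$. Lifting to $\R^{q+1}$ gives $(x,1) = \sum_i \lambda_i (u_i,1) = \sum_i \lambda_i u'_i \in \span(u'_1,\dots,u'_k) = \span(t_1,\dots,t_k)$. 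Hence there exist unique scalars $\mu_1,\dots,\mu_k$ with $(x,1) = \sum_{j=1}^{k} \mu_j t_j$. Reading off coordinate $c_j$ on both sides and using $t_i[c_j]=\delta_{ij}$ gives $\mu_j = (x,1)[c_j]$. Since $c_k = q+1$, we get $\mu_k = 1$ and $\mu_j = x[c_j]$ for $j < k$. Thus
\begin{equation*}
(x,1) \;=\; \sum_{j=1}^{k-1} x[c_j]\, t_j \;+\; t_k.
\end{equation*}

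The right-hand side is exactly the vector $v''$ that $\mathrm{GoUp}$ constructs when its input is $\mathrm{Project}(x) = (x[c_1],\dots,x[c_{k-1}])$. Dropping the last coordinate recovers $x$, so $\mathrm{GoUp}(\mathrm{Project}(x)) = x$ for every $x \in \Affspan(u_1,\dots,u_k)$; in particular this holds for each $x_i \in S$.

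Finally I would observe that $\mathrm{GoUp}\colon \R^{k-1}\to \R^q$ is an affine map (linear in $v[1],\dots,v[k-1]$ plus the constant contribution $t_k$, with the last-coordinate deletion also affine). Given $v \in \Conv(\hat{S})$, write $v = \sum_{i=1}^n \alpha_i \hat{x}_i$ with $\alpha_i \geq 0$ and $\sum_i \alpha_i = 1$. Affinity together with the identity established above yields
\begin{equation*}
\mathrm{GoUp}(v) \;=\; \sum_{i=1}^n \alpha_i\, \mathrm{GoUp}(\hat{x}_i) \;=\; \sum_{i=1}^n \alpha_i\, x_i \;\in\; \Conv(S),
\end{equation*}
which is the required conclusion. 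The only subtlety is the bookkeeping that turns the formal sum $\sum_j v[j] t_j + t_k$ into $\sum_i \alpha_i (x_i,1)$; this is precisely where using $\sum_i \alpha_i = 1$ (so that the extra $t_k$ summand is absorbed) is essential, and it is the one place where one must be careful that convex, and not merely linear, combinations are preserved.
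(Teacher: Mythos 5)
Your proof is correct and follows essentially the same route as the paper's: both hinge on establishing the identity $(x,1) = \sum_{j<k} x[c_j]\,t_j + t_k$ for every $x \in \Affspan(u_1,\dots,u_k)$ (via $(x,1)\in\span(t_1,\dots,t_k)$ and reading off the $c_j$-coordinates), and both use $\sum_i\alpha_i=1$ to absorb the constant $t_k$ term. Your packaging of the final step as ``$\mathrm{GoUp}$ is affine and is a left inverse of $\mathrm{Project}$ on the affine span'' is a cleaner way to phrase what the paper does by direct reorganization of the sum, but it is not a different argument.
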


\begin{proof}
As in the statement of the lemma, we denote
$$\hat{S}=\{{\rm Project}(x_i)\;:\;i\in [n]\}=\{(x_i[c_1],\dots,x_i[c_{k-1}])\;:\;i\in [n]\}\subseteq\R^{k-1}.$$
Now let $v\in\R^{k-1}$, and suppose that $v$ is a convex combination of the points in $\hat{S}$, say
$$
v=
\sum_{i\in[n]}\alpha_i\begin{pmatrix}  x_i[c_1] \\ \vdots \\ x_i[c_{k-1}]  \end{pmatrix}
=
\begin{pmatrix}  \sum_i \alpha_i x_i[c_1] \\ \vdots \\ \sum_i \alpha_i x_i[c_{k-1}] \end{pmatrix}
\qquad\qquad\text{\gray{\% For $\alpha_i\geq 0$ satisfying $\sum_i\alpha_i=1$.}}
$$
We need to show that ${\rm GoUp}(v)\in\Conv(S)$. To this end, let us consider the vector $v''\in\R^{q+1}$ defined in the execution of ${\rm GoUp}(v)$:
\begin{eqnarray}
v'' &=& v[1]t_1+v[2]t_2+\dots+v[k-1]t_{k-1}+1\cdot t_k\nonumber\\
&=&\left(\sum_{i\in[n]}\alpha_i\,x_i[c_1]\right)t_1+\dots+\left(\sum_{i\in[n]}\alpha_i\,x_i[c_{k-1}]\right)t_{k-1} + 1\cdot t_k\nonumber\\
&=&
\left(\sum_{i\in[n]}\alpha_i\,x_i[c_1]\right)t_1+\dots+\left(\sum_{i\in[n]}\alpha_i\,x_i[c_{k-1}]\right)t_{k-1} + \left(\sum_{i\in[n]}\alpha_i\right)\cdot t_k\nonumber\\
&=&
\sum_{i\in[n]}
\alpha_i \Big( x_i[c_1]t_1+\dots+x_i[c_{k-1}]t_{k-1}+t_k  \Big)\label{eq:GoUp1}.
%
\end{eqnarray}
We will show that for every vector $x\in\Affspan(u_1,\dots,u_k)$ it holds that
\begin{eqnarray}\label{eq:GoUp2}
(x,1)=x[c_1]t_1+x[c_2]t_2+\dots+x[c_{k-1}]t_{k-1}+1\cdot t_k.    
\end{eqnarray}
Thus, continuing from Equality~(\ref{eq:GoUp1}), we have that
$$
v''=\sum_{i\in[n]}\alpha_i \cdot (x_i,1),
$$
This shows that $v''$ is a convex combination of $\{(x_i,1)\}_{i\in[n]}$. Thus ${\rm GoUp}(v)$, which is obtained from $v''$ by omitting its last coordinate, must be a convex combination of $S=\{x_i\}_{i\in[n]}$.

\medskip

It remains to prove Equality~(\ref{eq:GoUp2}). 
To this end, denote $U=\span(u'_1,\dots,u'_k)=\span\big((u_1,1),\dots,(u_k,1)\big)$, and observe that, as in Remark~\ref{rem:canonB}, the rows of the matrix $T$, denoted as $t_1,\dots,t_k$, are a basis for $U$. Thus, every vector in $y\in U$ has a unique representation as a linear combination of $t_1,\dots,t_k$. Recall that the matrix $T$ contains the $k{\times}k$ identify matrix as a sub-matrix in columns $c_1,\dots,c_k$. Therefore, for every vector $y\in U$, the coefficients of the linear combination of $t_1,\dots,t_k$ that produce $y$ are exactly $y[c_1],y[c_2],\dots,y[c_k]$. That is, for every $y\in U$ we have
\begin{eqnarray}\label{eq:GoUp}
y=y[c_1]t_1+y[c_2]t_2+\dots+y[c_k]t_k.    
\end{eqnarray}
Thus, to prove Equality~(\ref{eq:GoUp2}), it suffices to show that for every $x\in\Affspan(u_1,\dots,u_k)$ it holds that $y=(x,1)\in\span(u'_1,\dots,u'_k)=U$. This is straightforward. Indeed, as $x\in\Affspan(u_1,\dots,u_k)$, for appropriate coefficients $\gamma_1\dots,\gamma_k$ (whose sum is 1) we have $x=\sum_{i\in [k]}\gamma_i u_i$, and thus
$$
\sum_{i\in [k]}\gamma_i u'_i
=
\sum_{i\in [k]}\gamma_i\begin{pmatrix}  u_i \vspace{-5pt}\\ \color{gray} \rule{0.05cm}{0.05mm} \,\rule{0.05cm}{0.05mm} \,\rule{0.05cm}{0.05mm} \\ 1 \end{pmatrix}
=
\begin{pmatrix}  \sum\gamma_i u_i \vspace{-3pt}\\\vspace{2pt} \color{gray} \rule{0.05cm}{0.05mm} \,\rule{0.05cm}{0.05mm} \,\rule{0.05cm}{0.05mm}\,\rule{0.05cm}{0.05mm}\,\rule{0.05cm}{0.05mm}\,\rule{0.05cm}{0.05mm}\,\rule{0.05cm}{0.05mm} \\ \sum\gamma_i \end{pmatrix}
=
\begin{pmatrix}  x \vspace{-5pt}\\ \color{gray} \rule{0.05cm}{0.05mm} \,\rule{0.05cm}{0.05mm} \,\rule{0.05cm}{0.05mm} \\ 1 \end{pmatrix},
$$
showing that $(x,1)\in\span(u'_1,\dots,u'_k)$.
\end{proof}

\begin{definition}
We write $\EEE\left(c,\{v_i\}_{i\in[d]},\{\lambda_i\}_{i\in[d]}\right)$ to denote the ellipsoid in $\R^d$ with center $c\in\R^d$, principal axes $v_1\dots,v_d\in\R^d$ (where these form an orthonormal basis of $\R^d$), and principal radii $\lambda_1,\dots,\lambda_d>0$. That is,
$$\EEE\left(c,\{v_i\}_{i\in[d]},\{\lambda_i\}_{i\in[d]}\right)=\left\{
c + \sum_{i\in[d]} \alpha_i v_i \;:\; (\alpha_1,\dots,\alpha_d)\in\R^d \text{ such that } \sum_{i\in[d]}\left(\frac{\alpha_i}{\lambda_i}\right)^2\leq 1
\right\}.$$ 
\end{definition}

\begin{algorithm}[H]
\caption{\texttt{PinHull}}
{\bf Notations:} Let $X,d\in\N$ be parameters, let $\XXX=\Big\{\frac{a}{X} \;:\; a\in[-X,X]\cap\Z\Big\}$, and let $\XXX^d$ be a $d$-dimensional grid.

\smallskip
{\bf Input:} Multiset $S$ of points in $\XXX^d$ and privacy parameters $\epsilon,\delta$.

\smallskip
{\bf Tools used:} An $(\epsilon,\delta)$-DP algorithm for solving linear programs of the form $Ax\geq0$, that w.p.\ $(1-\beta)$ returns a solution that violates at most $\Gamma$ constraints (see Theorem~\ref{thm:vempala}). 
An $(\epsilon,\delta)$-DP algorithm for approximating affine spans (see Theorem~\ref{t:privateaffspan}). A non-private algorithm, \texttt{Aff2Lin}, for computing a convexity-preserving transformation from a $(k-1)$-dimensional affine subspace to $\R^{k-1}$ (see Lemma~\ref{lem:aff2lin}).  

\begin{enumerate}[rightmargin=10pt,itemsep=1pt]

\item\label{step:refinedGrid} Let $Y=\Theta\left( d^{3.5}  d!\,X^d \, d^{d/2} \, (2T)^{dT} \right)$ be a multiple of $X$, let $\YYY=\Big\{\frac{a}{Y} \;:\; a\in[-Y,Y]\cap\Z\Big\}$, and let $\YYY^d$ be a $d$-dimensional refined grid. Note that $\XXX^d\subseteq\YYY^d$. \qquad\gray{\% Even though all input points come from the grid $\XXX^d$, some steps in our algorithm require points with higher precision, such as those from the refined grid $\YYY^d$.}

\item Denote $q=d$ and let ${\rm GoAllUp}:\R^q\rightarrow\R^d$ denote the identify function. \qquad\gray{\% Throughout the execution, we may discover that the convex hull of our input points is not full-dimensional. When this happens, we restart the algorithm in a subspace of smaller dimension $q<d$ and maintain an appropriate ``reverse'' function, ${\rm GoAllUp}$, which allows us to translate a solution from the lower dimensional subspace back to a solution in the original space.}

\item\label{step:base} If $q=0$ then let $c=()\in\R^0$ be the empty vector 
and return ${\rm GoAllUp}(c)$. \qquad\gray{\% $\R^0$ is a one-point space that contains only the empty tuple. This is the base case of the algorithm.}

\item Let $\EEE$ be the $q$-dimensional ball of diameter $\sqrt{q}$ around the origin of $\XXX^q$. \quad\gray{\% Note that $\EEE$ contains all of $\XXX^q$ and in particular contains $\Conv(S)$.}

\item\label{step:ElipLoop} For at most $T=\Theta(d^2\ln(dX))$ rounds do: 
\qquad \gray{\% See Claim~\ref{claim:T} for the analysis that bounds the number of rounds.}
\begin{enumerate}

    \item\label{step:callDPLP} Let $c$ be the center of the current ellipsoid $\EEE$ and let $A$ be the $|S|\times q$ matrix whose rows are $(z-c)\in\R^q$ for every $z\in S$. Run the private algorithm for LP (see \Cref{thm:vempala}) to find an approximate solution $x^*\neq0$ to the system $A x\geq 0$. Use roundness parameter $\rho=\frac{1}{d \, (d!)^{2d+2} \, (2Y)^{2d^2 + 2d} \, Y^d}$. \newline \gray{\% We run the private LP algorithm to try and separate $c$ (the center of the current ellipsoid) from the points in $S$. By the discrete separation theorem (see Lemma~\ref{lem:Separation}), if $c$ is not in the convex hull of $S$ then this LP is feasible with roundness $\rho$, and the private LP algorithm should succeed in finding a solution $x^*$.}

    \item\label{step:verifyDPLP} Let $\#_{\rm violated}(A,x^*)$ denote the number of constraints in $A$ that $x^*$ violates, i.e., the number of points $z\in S$ such that $\langle x^*,z-c\rangle<0$. Let 
    $\hat{\#}_{\rm violated}\leftarrow
    \#_{\rm violated}(A,x^*)+\Lap(\frac{1}{\epsilon})$.
    If $\hat{\#}_{\rm violated}>\Gamma+\frac{1}{\epsilon}\ln(\frac{1}{\beta})$ then halt and return ${\rm GoAllUp}(c)$. \qquad \gray{\% If there are too many violated constraints, then the private LP algorithm failed, meaning that the LP we defined was not feasible, meaning that $c\in\Conv(S)$.}

    \item\label{step:deleteLP} Delete from $S$ all points $z\in S$ such that $\langle x^*,z-c\rangle<0$. \qquad \gray{\% The private LP algorithm is allowed to violate a ``small'' number of constraints even when the LP is feasible. We delete them from the rest of the execution.}

    \item\label{step:cutElipso} Let $h$ be the halfspace $\{z\in\R^q\,:\, \langle x^*,z-c\rangle\geq0\rangle\}$. 
    Set $K\leftarrow\EEE\cap h$. Let $\hat{\EEE}\subseteq\R^q$ be the minimal (in volume) ellipsoid enclosing $K$. Let $\hat{c},\{v_i\}_{i\in[q]},\{\lambda_i\}_{i\in[q]}$ denote the new center, principal axes, and radii.

    \item Let $c\in\XXX^q$ be obtained from $\hat{c}\in\R^q$ by rounding it to the grid $\YYY^q$, and denote $\gamma=\frac{1}{4d^2}$. Set
    $\EEE\leftarrow\EEE\left(c,\{v_i\}_{i\in[d]},\{(1+\gamma)\cdot\lambda_i\}_{i\in[d]}\right)$. \qquad \gray{\% We snap the ellipsoid's center to the grid and slightly inflate its radii s.t.\ it contains the previous ellipsoid. The center of the new ellipsoid participates in the LP in the next iteration (in Step~\ref{step:callDPLP}), and so this rounding allows us to bound the precision of the constraints in this LP.}
\end{enumerate}
\item\label{step:findSubSpace} Run Algorithm~\ref{alg:affine_hulls} to privately approximate the affine span of $S$. Let $u_1,\dots,u_k\in\R^q$ denote the returned collection of affinely independent vectors (where $k\leq q$). \qquad\gray{\% If we have reached this step, then the ellipsoid method did not converge, meaning that $\Conv(S)$ is not full dimensional, and we need to restart in a lower subspace.}

\item\label{step:deleteAff} Delete from $S$ all points that do not belong to the affine span of $u_1,\dots,u_k$.

\item Run $\texttt{Aff2Lin}$ on $(u_1,\dots,u_k)$ to obtain the functions 
${\rm Project}:\R^q\rightarrow\R^{k-1}$ and ${\rm GoUp}:\R^{k-1}\rightarrow\R^q$. 

\item Set $q\leftarrow k-1$, 
replace every $x\in S$ with ${\rm Project}(x)$, 
set ${\rm GoAllUp}(\cdot)\leftarrow{\rm GoAllUp}\left({\rm GoUp}(\cdot)\right)$, and GOTO Step 2. \quad\gray{\% We compose the current function with ${\rm GoUp}$.}

\end{enumerate}
\end{algorithm}

\begin{theorem}[\cite{grunbaum2003convex}]\label{thm:polytope}
Any polytope may be defined as the convex hull of a finite set of points
(known as a {\em V-representation}), or as a bounded intersection of finitely many closed
halfspaces (known as an {\em H-representation}). In the case of a full-dimensional polytope, the
minimal such descriptions are in fact unique: The minimal V-representation
of a polytope is given by the vertices, while the minimal H-representation
consists of the facet-defining halfspaces.
\end{theorem}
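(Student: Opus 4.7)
The plan is to decompose the statement into two parts: the Weyl--Minkowski equivalence between V- and H-representations, and the uniqueness of minimal representations in the full-dimensional case. Both halves are classical and reduce to two standard tools: Fourier--Motzkin elimination (projection-closure of polyhedra) and Minkowski's theorem that a compact convex set is the convex hull of its extreme points.

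For V $\to$ H: given $P=\Conv(v_1,\dots,v_m)$, lift to the polyhedron $Q=\{(x,\lambda)\in\R^{d+m}:x=\sum_i\lambda_i v_i,\;\lambda_i\geq 0,\;\sum_i\lambda_i=1\}$, which is manifestly an H-representation in $\R^{d+m}$, and observe that $P$ is the image of $Q$ under the projection onto the $x$-coordinates. Applying Fourier--Motzkin elimination to project out each $\lambda_i$ in turn yields an H-representation of $P$ in $\R^d$; boundedness is inherited from the V-representation. For H $\to$ V: given a bounded $P=\{x:Ax\leq b\}$, call $v\in P$ extreme if it cannot be written as a strict convex combination of two other points of $P$. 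A standard linear-algebra argument shows $v$ is extreme iff the tight rows $\{a_i:a_i^\top v=b_i\}$ span $\R^d$, so $P$ has at most $\binom{m}{d}$ extreme points. Minkowski's theorem (or induction on dimension via Carathéodory) then gives $P=\Conv(\text{extreme points})$, the desired finite V-representation.

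For uniqueness, fix a full-dimensional $P$ and exploit its face lattice. Every extreme point (vertex) must lie in any V-representation, since removing it strictly shrinks the convex hull; the previous paragraph shows the vertex set alone already suffices, proving it is the unique minimal V-representation. For the H side, call an inequality $a^\top x\leq\beta$ facet-defining if the face $F=P\cap\{a^\top x=\beta\}$ has affine dimension $d-1$. Full-dimensionality forces the supporting hyperplane of each facet to be unique up to positive scaling. Any H-representation of $P$ must contain an inequality valid on $P$ and tight on every point of $F$, and since $F$ is $(d-1)$-dimensional this inequality must coincide (up to positive scaling) with the facet-defining one. Conversely, omitting a facet-defining inequality exposes a point just beyond $F$ that still satisfies all remaining inequalities, so the truncated system strictly contains $P$. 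Hence the set of facet-defining halfspaces is the unique minimal H-representation.

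The main obstacle is justifying the two black boxes: projection-closure of polyhedra via Fourier--Motzkin elimination, and the Krein--Milman / Minkowski extreme-point theorem in the polytopal setting. Both are standard (see, e.g., Schrijver's \emph{Theory of Linear and Integer Programming} or Ziegler's \emph{Lectures on Polytopes}) but their full proofs constitute the bulk of the work; once they are in hand, the uniqueness assertions in the full-dimensional regime reduce to short face-lattice arguments using nothing beyond the definitions of vertex, facet, and supporting hyperplane.
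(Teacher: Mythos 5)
The paper does not prove Theorem~\ref{thm:polytope}; it is cited as a classical result from Gr\"unbaum's \emph{Convex Polytopes} and used as a black box (specifically in the proof of Lemma~\ref{lem:Separation}). So there is no internal proof to compare against.

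That said, your proposal is a correct sketch of the standard textbook proof. The decomposition into the Weyl--Minkowski equivalence (V~$\leftrightarrow$~H via Fourier--Motzkin for one direction, the extreme-point characterization plus Minkowski/Krein--Milman for the other) and the face-lattice uniqueness arguments in the full-dimensional case is exactly how this is usually established (e.g., in Ziegler or Schrijver). Your characterization of extreme points via tight rows spanning $\R^d$, the forced inclusion of facet-defining halfspaces up to positive scaling (taking a relative interior point of a facet and observing that any valid inequality tight there is tight on the whole facet, hence its hyperplane equals the facet's affine hull), and the ``perturb outward past a facet'' argument for necessity are all sound. The only thing worth flagging is that you correctly identify the load-bearing black boxes (Fourier--Motzkin closure and the polytopal Minkowski theorem) without expanding them; since the paper itself takes the whole theorem on faith, this level of detail is entirely appropriate.
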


\begin{lem}[Discrete separation theorem]\label{lem:Separation}
Let $Y,d\in\N$ be  parameters, let $\YYY=\Big\{\frac{a}{Y} \;:\; a\in[-Y,Y]\cap\Z\Big\}$, and let $\YYY^d$ be a $d$-dimensional grid. Let $W\subseteq\YYY^d$ be a finite set of points in $\YYY^d$. If $\vec{0}\notin\Conv(W)$ then there exists a unit vector $\bar{y}\in\R^d$ such that for every $v\in W$ we have
$$
\langle \bar{y},\bar{v} \rangle\geq 
\frac{1}{d \, (d!)^{2d+2} \, (2Y)^{2d^2 + 2d} \, Y^d}
$$
\end{lem}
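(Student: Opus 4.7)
The plan is to take $\bar{y}$ to be the unit vector pointing from the origin to the closest point of $\Conv(W)$. Since $\Conv(W)$ is a non-empty closed convex set not containing $\vec{0}$, the Euclidean projection $v^* := \arg\min_{v\in\Conv(W)}\|v\|_2$ exists uniquely and is nonzero; set $\bar{y}:=v^*/\|v^*\|_2$. The first-order optimality condition for the closest-point projection yields $\langle v-v^*,v^*\rangle\geq 0$ for every $v\in\Conv(W)$, which rearranges to $\langle \bar{y},v\rangle\geq\|v^*\|_2$ for every $v\in W$. Combined with $\|v\|_2\leq\sqrt{d}$ (since every coordinate of $v\in\YYY^d$ lies in $[-1,1]$), this yields $\langle\bar{y},\bar{v}\rangle\geq\|v^*\|_2/\sqrt{d}$, reducing the task to a lower bound on $\|v^*\|_2$.

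The next step is to express $\|v^*\|_2^2$ via Cramer's rule. Because $v^*$ lies in the relative interior of some face $F$ of $\Conv(W)$, a standard first-order argument shows that it coincides with the orthogonal projection of $\vec{0}$ onto $\Affspan(F)$; picking an affinely independent subset $\{w_1,\dots,w_k\}\subseteq W$ of the vertices of $F$ that spans $\Affspan(F)$ (so $k\leq d$) yields $v^* = \mathrm{proj}_{\Affspan(w_1,\dots,w_k)}(\vec{0})$. Writing $v^* = \sum_i \alpha_i w_i$ with $\sum_i\alpha_i = 1$, the orthogonality conditions $v^*\perp(w_i-w_j)$ imply that $v^*\cdot w_i$ is constant in $i$; calling this common value $c$, we get $c = v^*\cdot v^* = \|v^*\|_2^2$. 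Stacking these equations together with $\sum_i\alpha_i = 1$ gives the $(k+1)\times(k+1)$ linear system
\[
\begin{pmatrix} G & -\vec{1}\\ \vec{1}^\top & 0\end{pmatrix}
\begin{pmatrix}\alpha\\ c\end{pmatrix}
=\begin{pmatrix}\vec{0}\\ 1\end{pmatrix},
\qquad G_{ij}:=w_i\cdot w_j,
\]
and Cramer's rule then expresses $c=\|v^*\|_2^2$ as a ratio of two $(k+1)\times(k+1)$ determinants.

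The last step is to clear denominators and apply Hadamard's inequality. Each coordinate of $w_i$ has the form $a/Y$ with integer $a\in[-Y,Y]$, so scaling the first $k$ rows of each of the two Cramer matrices by $Y^2$ produces integer matrices whose entries are bounded by $dY^2$ in absolute value. Hadamard upper-bounds the denominator determinant by a polynomial in $d$ times $Y^{2d}$. Expanding the corresponding numerator along its last column gives $\det(Y^2G)$, the Gram determinant of the integer vectors $Yw_1,\dots,Yw_k$, which is a non-negative integer; crucially it is \emph{strictly} positive because the $w_i$'s are linearly independent (affine independence together with $\vec{0}\notin\Affspan(w_1,\dots,w_k)$ — the latter holding because $v^*\neq\vec{0}$ — forces linear independence), so it is at least $1$. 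Combining these two bounds yields $\|v^*\|_2^2\geq 1/(\mathrm{poly}(d)\cdot Y^{2d})$, from which the desired $\langle\bar{y},\bar{v}\rangle \geq \tfrac{1}{d\cdot(d!)^{2d+2}\cdot(2Y)^{2d^2+2d}\cdot Y^d}$ follows after a routine (and very lossy) bookkeeping of the constants. The main obstacle is the non-singularity of the bordered Gram matrix, which hinges on the affine-vs-linear independence equivalence just sketched; matching the explicit constant stated in the lemma is tedious but mechanical once the Hadamard estimate is in place.
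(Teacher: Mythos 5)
Your proof is correct, but it takes a genuinely different route from the paper's. The paper sets $\bar{y}$ to be the normal of a facet-defining halfspace of $\Conv(W)$ that excludes the origin, recovers that normal as the solution to the linear system $\langle p_d-p_i,y\rangle=0$ (for $d$ affinely independent facet vertices) plus an ad~hoc scaling equation, bounds its coordinates via Cramer's rule, and then separately patches the case where $\Conv(W)$ is not full-dimensional by artificially adjoining grid points. You instead take $\bar{y}$ to be the direction from $\vec{0}$ to the nearest point $v^*\in\Conv(W)$, reduce the task to lower-bounding $\|v^*\|_2$, and express $\|v^*\|_2^2$ via Cramer's rule on a bordered Gram system whose numerator determinant, after clearing denominators, is the Gram determinant of $k\le d$ linearly independent integer vectors and hence is at least $1$. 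This handles polytopes of any affine dimension uniformly (no case split, no extra grid points) and in fact yields a stronger lower bound of roughly $1/\bigl(\mathrm{poly}(d)\cdot Y^{2d}\bigr)$ on $\langle\bar{y},\bar{v}\rangle$, comfortably implying the paper's much smaller $\Theta\bigl(Y^{-2d^2-3d}\bigr)$ constant. Two points you compress are worth writing out if this were to be made fully rigorous: nonsingularity of the bordered matrix $M=\left(\begin{smallmatrix}G & -\vec{1}\\ \vec{1}^\top & 0\end{smallmatrix}\right)$ follows from $G\succ0$ via the Schur complement, $\det M=\det(G)\cdot\vec{1}^\top G^{-1}\vec{1}>0$; and the chain ``$v^*\neq\vec{0}$ $\Rightarrow$ $\vec{0}\notin\Affspan(F)$ $\Rightarrow$ affine independence of the $w_i$ upgrades to linear independence'' is exactly right but deserves a sentence, since it is what makes the Gram determinant strictly positive rather than merely nonnegative.
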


\begin{proof}
We begin with the case where $\Conv(W)$ is full-dimensional.
Let $H$ be the finite collection of closed
halfspaces corresponding to the facets of $\Conv(W)$.
By Theorem~\ref{thm:polytope}, the intersection of all these halfspaces is exactly the convex hull of $W$. Hence, since $\vec{0}\notin\Conv(W)$, there must exist a halfspace $h\in H$ such that $\vec{0}\notin h$ (otherwise $\vec{0}$ belongs to all these halfspaces and hence belongs to their intersection which is the convex hull of $W$). Now let $\bar{y}$ denote the normal unit vector of $h$ and let $b$ denote its bias, so that $h=h_{\bar{y},b}=\{z\in\R^d\;:\;\langle \bar{y},z \rangle\geq b\}$. Then for every $v\in W$ we have $\langle \bar{y},v \rangle\geq b>\langle \bar{y},\vec{0} \rangle=0$. Thus, in order to show that  $\bar{y}$ satisfies the properties stated in the lemma, we next lower bound $b$.

Recall that $h$ is a halfspace that defines one of the facets of $\Conv(W)$. Thus, it can be expressed as the affine span of $d$ affinely independent points from $W$, call them $p_1,\dots,p_d\in W$. The vector $\bar{y}$ is then obtained by normalizing a solution $y$ to the set of linear equations $\langle (p_d-p_i),y \rangle=0$ for $i\in[d-1]$ (this gives $d-1$ equations with $d$ variables, and we can add the equation $\langle e_1,y \rangle=1$ as a scaling condition\footnote{technically, $\langle e_1,y \rangle=1$ might not be linearly independent with the previous equations, in which case we replace $e_1$ with some $e_j\neq e_1$ (at least one of these $d$ possible scaling equations must be independent of the other $d-1$ equations)}). Using Cramer’s rule we can express the $k$th coordinate of the vector $y$ as
$$
y[k] = \frac{{\rm det}(A_k)}{{\rm det}(A)},
$$
where $A$ is the matrix whose rows are $(p_d-p_1),\dots,(p_d-p_{d-1}),e_1$, and $A_k$ is the same as $A$ except that the $k$th column is replaced with $(0,\dots,0,1)$. Note that the entries of both $A$ and $A_k$ are numbers of the form $\frac{a}{Y}$ for integers $-2Y\leq a\leq2Y$. Thus, 
$$
{\rm det}(A),{\rm det}(A_k)\in
\Big\{\frac{a}{Y^d} \;:\; a\in[-d! (2Y)^d,d! (2Y)^d]\cap\Z\Big\}.
$$
Hence,
$$
y[k]\in
\Big\{\frac{a}{b} \;:\; a,b\in[-(d!)^2 (2Y)^{2d},(d!)^2 (2Y)^{2d}]\cap\Z\Big\}.
$$
Now let $v\in W$. We have established that $\langle \bar{y},v \rangle>0$. Thus, by our bounds on the minimal and maximal possible coordinates in $y,v$ we have
\begin{eqnarray*}
\langle \bar{y},\bar{v} \rangle &=& 
\frac{\langle y,v \rangle}{\|y\|_2\cdot\|v\|_2}\\
&\geq&\frac{\|y\|_2\cdot\|v\|_2}{\sqrt{d}(d!)^2 (2Y)^{2d}\cdot \sqrt{d}\frac{Y}{Y}}\\
&\geq&\frac{\left(\frac{1}{(d!)^2 (2Y)^{2d}}\cdot\frac{1}{Y}\right)^d}{\sqrt{d}(d!)^2 (2Y)^{2d}\cdot \sqrt{d}\frac{Y}{Y}}\\
&=&\frac{1}{d \, (d!)^{2d+2} \, (2Y)^{2d^2 + 2d} \, Y^d}\;\;,
\end{eqnarray*}
where the second inequality is because $\langle y,v \rangle$ is the sum of $d$ terms whose common denominator is at least $\left((d!)^2 (2Y)^{2d}\cdot Y\right)^d$.

\medskip
It remains to deal with the case where $\Conv(W)$ is not full-dimensional. It is still true (by Theorem~\ref{thm:polytope}) that there exists a finite collection $H$ of closed halfspaces such that their intersection is exactly $\Conv(W)$ (but now these halfspaces might not exactly correspond to the facets $\Conv(W)$). Still, as before, there must exists a halfspace $h\in H$ such that $\vec{0}\notin h$. Now we simply add points from $\YYY\cap h$ to $W$ such its convex hull becomes full-dimensional, and we are back to the previous case.
\end{proof}

\begin{lem} \label{lem:simplex}
Let $X,d\in\N$ be parameters, let $\XXX=\Big\{\frac{a}{X} \,:\, a\in[-X,X]\cap\Z\Big\}$, and let $\XXX^d$ be a $d$-dimensional grid. Let $S\subseteq\XXX^d$ be a set of $d+1$ affinely independent (i.e.\ $\Vol(\Conv(S)) >0$) points in $\XXX^d$.
Then $\Vol(\Conv(S)) \ge \frac{1}{d! X^d}$.
\end{lem}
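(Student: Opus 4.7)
The plan is to use the standard determinantal formula for the volume of a simplex together with the observation that a nonzero integer determinant has absolute value at least $1$. Specifically, enumerate the points of $S$ as $v_0, v_1, \ldots, v_d$, and recall that
\[
\Vol(\Conv(S)) = \frac{1}{d!}\bigl\lvert\det M\bigr\rvert,
\]
where $M$ is the $d\times d$ matrix whose rows are the difference vectors $v_1 - v_0, \ldots, v_d - v_0$. Affine independence of $S$ guarantees $\det M \neq 0$.

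Next I would clear denominators. Since each $v_i \in \XXX^d$, we can write $v_i = w_i / X$ for some $w_i \in \Z^d$. Then each row of $M$ has the form $(w_i - w_0)/X$, so $M = \frac{1}{X} N$ where $N$ is the $d\times d$ integer matrix with rows $w_i - w_0$. Consequently
\[
\bigl\lvert\det M\bigr\rvert = \frac{\lvert\det N\rvert}{X^d}.
\]

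Since $\det N$ is a nonzero integer, $\lvert\det N\rvert \geq 1$. Substituting back yields
\[
\Vol(\Conv(S)) \;=\; \frac{\lvert\det N\rvert}{d!\,X^d} \;\geq\; \frac{1}{d!\,X^d},
\]
which is the desired bound. There is no real obstacle here; the only thing to be careful about is keeping track of the factor of $X$ when switching between the grid representation and the integer lattice, and noting that affine independence is exactly the hypothesis needed to conclude $\det N \neq 0$.
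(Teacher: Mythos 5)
Your proof is correct and follows essentially the same strategy as the paper: express $\Vol(\Conv(S))$ via the determinantal simplex-volume formula, clear the common denominator $X$ to obtain a nonzero integer determinant, and conclude the bound $1/(d!\,X^d)$. The only cosmetic difference is that you use the $d\times d$ matrix of difference vectors $v_i - v_0$ while the paper uses the equivalent $(d+1)\times(d+1)$ matrix with a column of ones; your version makes the ``nonzero integer $\geq 1$'' step slightly more explicit.
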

\begin{proof}
Let $S=\{v_1,\ldots,v_{d+1} \}$.
    The volume of  $\Conv(S)$ is given by the formula
\[
    \frac{1}{d!}
    \begin{vmatrix}
    1 & v_{1,1} & \cdots & v_{1,d} \\
    1 & v_{2,1} & \cdots & v_{2,d} \\
    & & \vdots & \\
    1 & v_{d+1,1} & \cdots & v_{d+1,d}
    \end{vmatrix} \ .
    \]
    Since $S$ are affinely independent this determinant is not zero. Furthermore, since the denominator of each $v_{i,j}$ which is not zero, it is $X$, the product of the elements of each non-zero product in the expansion of the determinant is $X^d$.
\end{proof}

The following lemma shows that if our data set is full-dimensional then any ellipsoid of bounded principal radii cannot have a principal radii which is too small.

\begin{lem} \label{lem:UB2LBlambda}
Let $X,d\in\N$ be  parameters, let $\XXX=\Big\{\frac{a}{X} \;:\; a\in[-X,X]\cap\Z\Big\}$, and let $\XXX^d$ be a $d$-dimensional grid. Let $S\subseteq\XXX^d$ be a set of $d+1$ affinely independent points in $\XXX^d$.
Let $\EEE = \EEE\left(c,\{v_i\}_{i\in [d]},\{\lambda_i\}_{i\in [d]}\right)$ be an ellipsoid containing $S$,
where $\max_{i\in[d]}\lambda_i \le \Lambda$. Then 
$\min_{i\in[d]}\lambda_i \ge \frac{1}{6\,d!\,X^d \, \Lambda^{d-1}}$.
\end{lem}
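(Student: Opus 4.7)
The plan is to relate the volume of the ellipsoid to the product of its principal radii, and then use \Cref{lem:simplex} as a lower bound on the volume that the ellipsoid must contain. Specifically, I would use the standard fact that
\[
\Vol(\EEE) = C_d \prod_{i=1}^{d} \lambda_i,
\]
where $C_d = \pi^{d/2}/\Gamma(d/2+1)$ is the volume of the unit ball in $\R^d$. A direct calculation shows that $C_d$ attains its maximum at $d=5$ with $C_5 \approx 5.26$, and in particular $C_d \le 6$ for every $d \ge 1$, which I would simply state as a known fact.

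Since $\EEE$ contains the affinely independent set $S$, and $\EEE$ is convex, it also contains $\Conv(S)$, so
\[
C_d \prod_{i=1}^{d} \lambda_i \;=\; \Vol(\EEE) \;\ge\; \Vol(\Conv(S)) \;\ge\; \frac{1}{d!\, X^d},
\]
where the last inequality is precisely \Cref{lem:simplex}.

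Next, I would use the upper bound $\max_i \lambda_i \le \Lambda$ to write
\[
\prod_{i=1}^{d} \lambda_i \;\le\; \Lambda^{d-1} \cdot \min_{i \in [d]} \lambda_i.
\]
Combining the two inequalities gives
\[
\min_{i \in [d]} \lambda_i \;\ge\; \frac{1}{C_d \cdot d!\, X^d \cdot \Lambda^{d-1}} \;\ge\; \frac{1}{6\, d!\, X^d \cdot \Lambda^{d-1}},
\]
as required. There is essentially no hard step here; the only thing to be careful about is the bound $C_d \le 6$, which one could either cite or verify by noting that $C_d$ is monotonically decreasing for $d \ge 5$ and checking the small cases $d=1,\dots,5$ by hand.
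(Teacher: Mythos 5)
Your proof is correct and follows essentially the same route as the paper's: express $\Vol(\EEE)$ as $\Vol(B_d)\prod_i\lambda_i$, lower-bound it by $\Vol(\Conv(S))\ge\frac{1}{d!\,X^d}$ via Lemma~\ref{lem:simplex}, use $\Vol(B_d)\le 6$, and bound all but one radius by $\Lambda$. The only cosmetic difference is that you spell out why $\Vol(B_d)\le 6$ (maximum near $d=5$, monotone decay afterwards), which the paper states without elaboration.
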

\begin{proof}
The volume of an ellipsoid $\EEE = \EEE\left(c,\{v_i\}_{i\in [d},\{\lambda_i\}_{i\in [d}\right)$ equals to
\[\Vol(B_d) \lambda_1,\lambda_2\cdots\lambda_d \ , \]
where $B_d$ is a $d$-dimensional unit ball.
Since $\EEE$ contains $\Conv(S)$ we get by Lemma \ref{lem:simplex} that 
\[\Vol(B_d) \lambda_1,\lambda_2\cdots\lambda_d \ge \frac{1}{d!\,X^d } \ . \]
Rearranging we get that
\[\lambda_i\ge \frac{1}{d!\, X^d\, \Vol(B_d)
\lambda_1\cdots\lambda_{i-1}\lambda_{i+1}\cdots \lambda_d } \ . \]
The bound in the statement now follows 
by using the assumption that $\max_{i\in[d]}\{\lambda_i\} \le \Lambda$, together with the 
fact that $\Vol(B_d)\leq6$ for all $d$ (the volume of a $d$-dimensional ball vanishes as $d$ grows).

\end{proof}

We use the following theorem of \cite{john1948extremum} to prove by induction that all our Ellipsoids have bounded principal radii.

\begin{theorem}[\cite{john1948extremum}]\label{thm:john}
For every convex body $K\subseteq\R^d$ there exists a unique ellipsoid $\EEE$
of minimal volume containing $K$. Moreover, $K$ contains the ellipsoid obtained
from $\EEE$ by shrinking it from its center by a factor of $d$.
\end{theorem}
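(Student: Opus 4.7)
The plan is to prove existence-uniqueness and the factor-$d$ shrinking property separately. I parameterize ellipsoids by a center $c\in\R^d$ and a symmetric positive-definite matrix $M$, writing $E_{c,M}=\{x\in\R^d : \|M(x-c)\|_2\le 1\}$, whose volume equals $V_d/\det M$ for $V_d$ the unit-ball volume. The minimum-volume enclosing ellipsoid is then a maximizer of the concave function $\log\det M$ subject to the convex constraints $\|M(u-c)\|_2\le 1$ for every $u\in K$.

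For \textbf{existence}, the feasible set is non-empty (any sufficiently large ball works); standard compactness arguments (bounding $c$ inside $\Conv(K)$ and the singular values of $M$ from above and below) reduce the problem to a compact set, on which continuity of $\log\det M$ yields a maximizer. For \textbf{uniqueness}, I invoke strict concavity of $\log\det$ on the positive-definite cone, equivalently Minkowski's determinant inequality. Given two maximizers $(c_1,M_1),(c_2,M_2)$, a short convexity argument shows that the averaged pair $\bigl(\tfrac{1}{2}(c_1+c_2),\tfrac{1}{2}(M_1+M_2)\bigr)$ is still feasible, and strict concavity forces a strict increase in $\log\det$ unless $M_1=M_2$ and $c_1=c_2$, contradicting optimality of both.

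For the \textbf{factor-$d$ shrinking}, by affine invariance I assume $\EEE$ is the unit ball centered at the origin. The first-order (KKT) conditions for the convex program above yield \emph{John's decomposition of the identity}: there exist contact points $u_1,\dots,u_m\in K\cap S^{d-1}$ and positive weights $\lambda_1,\dots,\lambda_m$ satisfying
\[
\sum_i \lambda_i\, u_i u_i^{\top}=I,\qquad \sum_i \lambda_i u_i=0,\qquad \sum_i \lambda_i=d,
\]
where the matrix identity comes from varying $M$, the vector identity from varying $c$, and the scalar identity by taking traces. To conclude that the ball of radius $1/d$ lies inside $K$, suppose for contradiction that some $x$ with $\|x\|<1/d$ is outside $K$; a separating hyperplane gives a unit vector $w$ and scalar $t<1/d$ with $\langle w,y\rangle\le t$ for every $y\in K$, and in particular $c_i:=\langle w,u_i\rangle\le t$. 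The moment identities then force $\sum_i\lambda_i c_i=0$ and $\sum_i\lambda_i c_i^2=\|w\|^2=1$. Under the constraints $-1\le c_i\le t$, $\sum_i\lambda_i c_i=0$, $\sum_i\lambda_i=d$, the quantity $\sum_i\lambda_i c_i^2$ is maximized (by convexity of $c\mapsto c^2$) by placing all mass at $c_i\in\{-1,t\}$; a one-variable calculation then gives the upper bound $td$. Combined with $\sum_i\lambda_i c_i^2=1$ this forces $t\ge 1/d$, contradicting $t<1/d$.

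\textbf{Main obstacle.} The hardest step is extracting John's decomposition of the identity from the optimality conditions: one must reduce to finitely many active contact constraints (by a Carath\'eodory-style argument on extreme points), apply KKT with the corresponding Lagrange multipliers $\lambda_i$, and separate the resulting stationarity equation into its matrix part (yielding $\sum_i\lambda_i u_iu_i^\top=I$) and its vector part (yielding $\sum_i\lambda_i u_i=0$, arising precisely from the freedom to translate the center $c$). Once the decomposition is in hand, the separation-plus-extremal-optimization step delivering the sharp factor $d$ is comparatively short.
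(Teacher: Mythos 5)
The paper does not prove this statement; it is quoted from \citet{john1948extremum} and used as a black box in the analysis of the ellipsoid steps (Claim~\ref{claim:UBlambda}). So there is no ``paper proof'' to compare against, and you should be judged on whether your argument is a valid standalone proof of the classical Löwner--John theorem. It is: your outline is the standard modern proof via John's decomposition of the identity. Two remarks on the details.

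For uniqueness, the naive averaging $\bigl(\tfrac12(c_1+c_2),\tfrac12(M_1+M_2)\bigr)$ is not quite what the convexity of the feasible set delivers. The constraint $\|M(u-c)\|_2\le1$ is convex in the variables $(M,v)$ with $v=Mc$, not in $(M,c)$ directly; averaging two feasible $(M_i,v_i)$ gives center $(\tfrac12(M_1+M_2))^{-1}\cdot\tfrac12(M_1c_1+M_2c_2)$ rather than $\tfrac12(c_1+c_2)$. This does not affect the conclusion, but the reparameterization should be made explicit. Also, strict concavity of $\log\det$ only forces $M_1=M_2$; the case $M_1=M_2$, $c_1\neq c_2$ needs a separate (easy, but distinct) argument.

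For the factor-$d$ bound, your extremal step is correct, and there is an especially clean way to close it: from $c_i\in[-1,t]$ one has $(c_i+1)(t-c_i)\ge 0$, i.e.\ $c_i^2\le(t-1)c_i+t$, and summing against $\lambda_i$ and using $\sum_i\lambda_i c_i=0$, $\sum_i\lambda_i=d$ gives $\sum_i\lambda_i c_i^2\le td$ directly. Combined with $\sum_i\lambda_i c_i^2=1$ this forces $t\ge1/d$, which is your contradiction. The ``main obstacle'' you flag --- deriving the three John identities from KKT, including reducing to finitely many contact points --- is indeed the substantive technical work, and your sketch of how to get the matrix part (variation in $M$), the vector part (variation in $c$), and the trace identity is the right decomposition.
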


\begin{claim}\label{claim:UBlambda}
Let $\{\lambda_i\}_{i\in[q]}$ denote the radii of one of the ellipsoids constructed during the execution of Algorithm \texttt{PinHull}. Then $\max_{i\in[q]}\{\lambda_i\}\leq \sqrt{d}(2d)^T$. Here $T=\Theta(d^2\ln(dX))$ is the number of iterations from step 4 of Algorithm  \texttt{PinHull}.
\end{claim}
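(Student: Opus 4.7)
The plan is to prove the bound by induction on the iteration count of the for-loop in Step~\ref{step:ElipLoop}, carried out afresh within each ``phase'' of \texttt{PinHull}, where by a \emph{phase} I mean a maximal sequence of iterations of Step~\ref{step:ElipLoop} between the restarts caused by the dimension-reduction step (i.e., the execution path that goes back to Step~2 after running \texttt{Aff2Lin}). At the start of each phase the ellipsoid is freshly initialized in Step~4 as the ball of diameter $\sqrt{q}$ around the origin, so all its principal radii equal $\sqrt{q}/2 \le \sqrt{d}/2$, and the claimed bound holds trivially.

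For the inductive step I would bound the growth of the maximum principal radius within a single pass through the for-loop. Each iteration alters the ellipsoid in two substeps: in Step~\ref{step:cutElipso} the current ellipsoid $\EEE$ is replaced by $\hat{\EEE}$, the minimum-volume ellipsoid enclosing the half-ellipsoid $K=\EEE\cap h$, where $h$'s bounding hyperplane passes through the center $c$ of $\EEE$; and in the subsequent substep the center is rounded to the refined grid while each principal radius is multiplied by $(1+\gamma)=1+\frac{1}{4d^2}$. For the cut step, a classical computation from the ellipsoid-method literature (see e.g.~\cite{grotschel2012geometric}) reduces the situation, via an affine change of coordinates, to a unit ball cut by an axis-aligned halfspace through the origin; the minimum-volume enclosing ellipsoid of that half-ball has one principal radius equal to $\frac{q}{q+1}$ and $q-1$ principal radii each equal to $\frac{q}{\sqrt{q^2-1}}$. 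Pulling this back through the affine reduction gives
\[
\lambda_{\max}(\hat{\EEE}) \;\le\; \frac{q}{\sqrt{q^2-1}}\,\lambda_{\max}(\EEE) \;\le\; \sqrt{2}\,\lambda_{\max}(\EEE)
\]
for $q\ge 2$ (the $q=1$ case is immediate). Combined with the $(1+\gamma)\le 2$ inflation in the snap-and-inflate substep, a single iteration multiplies $\lambda_{\max}$ by at most $2\sqrt{2}<2d$.

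Putting the pieces together, after at most $T$ iterations of the loop within one phase,
\[
\lambda_{\max} \;\le\; \frac{\sqrt{d}}{2}\cdot (2\sqrt{2})^T \;\le\; \sqrt{d}\,(2d)^T.
\]
Every ellipsoid the algorithm constructs---both the intermediate $\hat{\EEE}$ of Step~\ref{step:cutElipso} and the $\EEE$ produced by the snap-and-inflate substep that follows, as well as the initial ball at the start of each phase---lies within some phase, so the claim follows. I expect the main obstacle to be Step~\ref{step:cutElipso}: one must extract the standard formula for the minimum-volume ellipsoid enclosing a half-ellipsoid and carefully transfer the per-radius bound through the affine reduction, keeping in mind that the principal axes of $\hat{\EEE}$ are oriented differently from those of $\EEE$ (so one cannot compare radii coordinate-by-coordinate; the correct argument bounds the circumscribing ball of $\hat{\EEE}$ in the normalized frame and then pulls back through the affine map whose largest singular value is $\lambda_{\max}(\EEE)$). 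The snap-and-inflate substep is straightforward since the radii are modified by a direct multiplicative factor.
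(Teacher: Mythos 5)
Your proof is correct, but it takes a genuinely different route from the paper's. The paper invokes John's theorem (\Cref{thm:john}): for any convex body $K$, the minimal enclosing ellipsoid $\hat{\EEE}$ satisfies $\diam(\hat{\EEE})\leq d\,\diam(K)$, because $K$ contains $\hat{\EEE}$ shrunk by a factor $d$. This immediately gives the per-iteration factor of $d$ from the cut step (hence $(1+\gamma)d\leq 2d$ after the snap-and-inflate), without ever having to compute what the Löwner ellipsoid of a half-ellipsoid looks like. You instead pull out the explicit formula for the minimal enclosing ellipsoid of a half-ball (semi-axes $q/(q+1)$ and $q/\sqrt{q^2-1}$), and transport the resulting radius bound through the affine normalization using the operator-norm bound on the composite map. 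Both arguments are sound; yours gives a tighter per-iteration factor ($2\sqrt{2}$ versus $2d$), which is more than enough since the claim only asks for $(2d)^T$, while the paper's is shorter because John's theorem absorbs all the ellipsoid-geometry work into a black box. One small caveat: the inequality $\frac{\sqrt{d}}{2}(2\sqrt{2})^T\leq\sqrt{d}(2d)^T$ that you write at the end reverses when $d=1$ (since $2\sqrt{2}>2$). This is harmless because for $d=1$ the only value of $q$ that ever occurs is $q\leq 1$, where you already noted the cut halves the interval and the per-iteration factor is $<1$; but the final displayed chain of inequalities as written does not by itself cover $d=1$, so it is worth flagging the case split explicitly rather than leaving it implicit.
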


\begin{proof}
Let $\EEE$ denote the ellipsoid at the beginning of an iteration of Step~\ref{step:ElipLoop}. As in Step~\ref{step:cutElipso}, let $K$ be the intersection of $\EEE$ with the halfspace $h$ (whose defining hyperplane passes through $\EEE$'s center), and let $\hat{\EEE}$ be the minimal ellipsoid enclosing $K$. Let $\{\hat{\lambda}_i\}_{i\in[q]}$ denote the principal radii of $\hat{\EEE}$. Thus, by Theorem~\ref{thm:john} we have that 
$$2\max_{i\in[q]}\{\hat{\lambda}_i\}
=
\diam\left(\hat{\EEE}\right)
\leq
d\diam(K)
\leq
d\diam(\EEE)
=
2d\max_{i\in[q]}\{\lambda_i\}.$$

The next ellipsoid is obtained from $\hat{\EEE}$ by inflating each of its principal radii by a factor of $(1+\gamma)$. Thus, letting $\{\lambda_i^{\rm end}\}_{i\in[q]}$ denote the principal radii of the ellipsoid at the end of the iteration, we have that
$$
\max_{i\in[q]}\{\lambda_i^{\rm end}\} \leq (1+\gamma)d \max_{i\in[q]}\{\lambda_i\}\leq 2d\max_{i\in[q]}\{\lambda_i\}.
$$

Finally, recall that before entering the loop in Step~\ref{step:ElipLoop}, the initial ellipsoid is a ball of radius $\sqrt{q}\leq\sqrt{d}$. Thus, as there are at most $T$ iterations, we get that all the principal radii defined throughout the execution satisfy
$$
\max_{i\in[q]}\{\lambda_i\} \leq \sqrt{d}(2d)^T.
$$
\end{proof}

\begin{lem}\label{lem:centerShift}
Let $\EEE_1=\EEE\left(c,\{v_i\}_{i\in[d]},\{\lambda_i\}_{i\in[d]}\right)$ be an ellipsoid in $\R^d$. Let $0\leq\gamma\leq1$, let $\eta\in\R^d$ be s.t.\ $\|\eta\|_{\infty}\leq\frac{\gamma}{6d^{1.5}}\cdot\min_{i\in[d]}\{\lambda_i\}$, and let $\EEE_2=\EEE\left(c+\eta,\{v_i\}_{i\in[d]},\left\{\left(1+\gamma\right)\lambda_i\right\}_{i\in[d]}\right)$ be the ellipsoid obtained from $\EEE_1$ by shifting its center by $\eta$ and blowing up its principal radii by $(1+\gamma)$. Then $\EEE_1\subseteq\EEE_2$.
\end{lem}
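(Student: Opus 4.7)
My plan is to show $\EEE_1 \subseteq \EEE_2$ by taking an arbitrary point $p \in \EEE_1$, expressing its displacement from the new center $c+\eta$ in the orthonormal basis $\{v_i\}$, and verifying that the resulting coefficients satisfy the defining inequality of $\EEE_2$.

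Concretely, write $p - c = \sum_{i=1}^d \alpha_i v_i$ with $\sum_i (\alpha_i/\lambda_i)^2 \le 1$, and let $\eta_i = \langle \eta, v_i \rangle$ be the coordinates of $\eta$ in the same basis. Then $p - (c+\eta) = \sum_i (\alpha_i - \eta_i) v_i$, so $p \in \EEE_2$ iff
\[
\sum_{i=1}^d \left( \frac{\alpha_i - \eta_i}{(1+\gamma)\lambda_i} \right)^2 \le 1.
\]
The key estimate is a uniform bound on $|\eta_i|/\lambda_i$. Since $\{v_i\}$ is orthonormal, $|\eta_i| \le \|\eta\|_2 \le \sqrt{d}\,\|\eta\|_\infty$, and the hypothesis on $\eta$ then gives $|\eta_i| \le \frac{\gamma}{6d}\min_j \lambda_j \le \frac{\gamma}{6d}\lambda_i$ for every $i$.

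Setting $a_i = \alpha_i/\lambda_i$ and $b_i = \eta_i/\lambda_i$, we have $\sum_i a_i^2 \le 1$ and $\sum_i b_i^2 \le d \cdot (\gamma/(6d))^2 = \gamma^2/(36d)$. Expanding and applying Cauchy--Schwarz,
\[
\sum_i (a_i - b_i)^2 = \sum_i a_i^2 - 2\sum_i a_i b_i + \sum_i b_i^2 \le 1 + 2\sqrt{\sum_i a_i^2}\sqrt{\sum_i b_i^2} + \frac{\gamma^2}{36 d} \le 1 + \frac{\gamma}{3\sqrt{d}} + \frac{\gamma^2}{36 d}.
\]
Since $d \ge 1$ and $0 \le \gamma \le 1$, the right-hand side is at most $1 + 2\gamma + \gamma^2 = (1+\gamma)^2$, which yields exactly the required inequality after dividing by $(1+\gamma)^2$.

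There is no real obstacle here; the only point requiring a little care is tracking the $\sqrt{d}$ factor lost in passing from $\|\eta\|_\infty$ to $\|\eta\|_2$, which is precisely why the hypothesis carries the $d^{1.5}$ in the denominator. The constants ($1/6$ and $\sqrt{d}$) are calibrated so that the cross-term $2\sqrt{\sum a_i^2}\sqrt{\sum b_i^2}$ is comfortably absorbed by the $2\gamma$ term on the right.
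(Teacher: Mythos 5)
Your proof is correct and takes essentially the same approach as the paper: both decompose the displacement of a point of $\EEE_1$ in the orthonormal eigenbasis, shift by $\eta_i=\langle\eta,v_i\rangle$, and verify the defining inequality of $\EEE_2$, using the same key bound $|\eta_i|\le\|\eta\|_2\le\sqrt{d}\,\|\eta\|_\infty\le\frac{\gamma}{6d}\min_j\lambda_j$. The only cosmetic difference is that you bound the cross term via Cauchy--Schwarz ($\le\gamma/(3\sqrt d)$) whereas the paper bounds it term-by-term using $|\alpha_i|\le\lambda_i$ ($\le\gamma/3$); either is absorbed by the slack in the hypothesis.
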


\begin{proof}
Let $x=c+\sum_{i\in[d]}\alpha_i v_i$ be a point in $\EEE_1$ (satisfying $\sum_{i\in[d]}(\frac{\alpha_i}{\lambda_i})^2\leq1$). To show that $x\in\EEE_2$ we need to specify coefficients $\{\alpha'_i\}_{i\in[d]}$ that satisfy the following two conditions:
$$
x=c+\eta+\sum_{i\in[d]}\alpha'_i v_i 
\qquad\text{and}\qquad 
\sum_{i\in[d]}\left(\frac{\alpha'_i}{(1+\gamma)\lambda_i}\right)^2\leq1.
$$
We choose $\alpha'_i=\alpha_i-\eta_i$, where $\eta_i$ is the $i$th coefficient of $\eta$ as a linear combination of $v_1,\dots,v_d$ (the principal axes of the ellipsoid). That is, $\eta=\eta_1 v_1+\dots+ \eta_d v_d$. With this choice of $\{\alpha'_i\}$, we get that the left condition above is trivial. Indeed,
$$
c+\eta+\sum_{i\in[d]}\alpha'_i v_i = 
c+\eta+\sum_{i\in[d]}(\alpha_i-\eta_i) v_i = 
c+\sum_{i\in[d]}\alpha_i v_i = x.
$$
As for the right condition, 
observe that since $\|\eta\|_{\infty}\leq\frac{\gamma}{6d^{1.5}}\cdot\min_{i\in[d]}\{\lambda_i\}$ then for every $i\in[d]$ it holds that  
$\eta_i\leq\frac{\gamma}{6d}\cdot\min_{i\in[d]}\{\lambda_i\}$. Also recall that $\alpha_i\leq\lambda_i$ for all $i\in[d]$. We calculate:
\begin{eqnarray*}
\sum_{i\in[d]}\left(\frac{\alpha'_i}{(1+\gamma)\lambda_i}\right)^2
&=& \sum_{i\in[d]}\left(\frac{\alpha_i-\eta_i}{(1+\gamma)\lambda_i}\right)^2\\
&=& \sum_{i\in[d]}\left[\frac{\alpha^2_i}{(1+\gamma)^2\lambda^2_i}-\frac{2\cdot\alpha_i\cdot \eta_i}{(1+\gamma)^2\lambda^2_i} + \frac{\eta^2_i}{(1+\gamma)^2\lambda^2_i}\right]\\
&\leq& \sum_{i\in[d]}\left[\frac{\alpha^2_i}{(1+\gamma)^2\lambda^2_i} + \frac{2\cdot \lambda_i \cdot \frac{\gamma}{6d}\lambda_i}{(1+\gamma)^2\lambda^2_i} + \frac{(\frac{\gamma}{6d}\lambda_i)^2}{(1+\gamma)^2\lambda^2_i}\right]\\
&\leq& \sum_{i\in[d]}\left[\frac{\alpha^2_i}{(1+\gamma)^2\lambda^2_i} + \frac{\gamma}{3d} + \frac{\gamma}{6d}\right]
\;=\;
\frac{1}{(1+\gamma)^2}\sum_{i\in[d]}\left[\frac{\alpha^2_i}{\lambda^2_i} \right]+\frac{\gamma}{2}\\
&\leq&
\frac{1}{(1+\gamma)^2}+\frac{\gamma}{2}
\leq
\left(1-\frac{\gamma}{2}\right)+\frac{\gamma}{2} \leq 1.
\end{eqnarray*}
\end{proof}

Combining Claim~\ref{claim:UBlambda} with Lemma~\ref{lem:UB2LBlambda} and 
Lemma~\ref{lem:centerShift} yields the following claim.

\begin{claim}\label{claim:enclosing}
Let $S^{\rm begin},S^{\rm end}$ and $\EEE^{\rm begin},\EEE^{\rm end}$ denote the multiset $S$ and the ellipsoid $\EEE$ as they are at the beginning and at the end of one of the iterations of Step~\ref{step:ElipLoop}. If 
$\Conv\left(S^{\rm begin}\right)\subseteq \EEE^{\rm begin}$ and 
$\Vol\left(\Conv\left(S^{\rm end}\right)\right)>0$, then $\Conv\left(S^{\rm end}\right)\subseteq\EEE^{\rm end}$.
\end{claim}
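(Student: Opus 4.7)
The plan is to establish the claim by chaining two containments: $\Conv(S^{\rm end}) \subseteq \hat{\EEE} \subseteq \EEE^{\rm end}$, where $\hat{\EEE}$ is the intermediate minimum-volume ellipsoid from Step~\ref{step:cutElipso}. The first containment is essentially definitional. The second containment is the quantitative heart and will rely on Lemma~\ref{lem:centerShift}, whose hypothesis must be verified using the specific choice of $Y$ in Step~\ref{step:refinedGrid}.

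First, I would observe that $S^{\rm end}$ is obtained from $S^{\rm begin}$ by removing (in Step~\ref{step:deleteLP}) exactly those points $z$ for which $\langle x^*, z-c\rangle < 0$. Therefore every surviving point lies in the halfspace $h = \{z : \langle x^*, z-c\rangle \geq 0\}$, and of course also in $\EEE^{\rm begin}$ by the hypothesis $\Conv(S^{\rm begin}) \subseteq \EEE^{\rm begin}$. So $S^{\rm end} \subseteq \EEE^{\rm begin} \cap h = K$; since $K$ is convex, $\Conv(S^{\rm end}) \subseteq K \subseteq \hat{\EEE}$ by the definition of $\hat{\EEE}$ as an enclosing ellipsoid of $K$.

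Next I would show $\hat{\EEE} \subseteq \EEE^{\rm end}$ via Lemma~\ref{lem:centerShift}. The ellipsoid $\EEE^{\rm end}$ is obtained from $\hat{\EEE}$ by two modifications: shifting the center by $\eta = c - \hat{c}$ (the grid rounding of $\hat{c}$ to $\YYY^q$) and inflating every principal radius by the factor $(1+\gamma)$ where $\gamma = 1/(4d^2)$. Since rounding to $\YYY^q$ changes each coordinate by at most $1/(2Y)$, we have $\|\eta\|_\infty \leq 1/(2Y)$. Applying Lemma~\ref{lem:centerShift} in dimension $q$, it suffices to check that
\[
\frac{1}{2Y} \;\leq\; \frac{\gamma}{6\,q^{1.5}}\cdot \min_{i\in[q]}\{\hat\lambda_i\}.
\]
Using $q\leq d$ and $\gamma = 1/(4d^2)$, it is enough that $\min_i \hat\lambda_i \geq 12\,d^{3.5}/Y$.

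The main obstacle is the lower bound on $\min_i \hat\lambda_i$, and here I would combine three earlier tools. The hypothesis $\Vol(\Conv(S^{\rm end}))>0$ guarantees that $S^{\rm end}\subseteq\XXX^q$ contains $q+1$ affinely independent grid points, all of which lie in $\hat{\EEE}$. To apply Lemma~\ref{lem:UB2LBlambda} I need an upper bound $\Lambda$ on $\max_i \hat\lambda_i$. This comes from Claim~\ref{claim:UBlambda}, which bounds the principal radii of $\EEE^{\rm begin}$ by $\sqrt d (2d)^T$, combined with the John-theorem diameter comparison already used in the proof of that claim: $\max_i \hat\lambda_i \leq d\cdot \max_i \lambda_i^{\rm begin} \leq d^{1.5}(2d)^T$. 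Plugging this into Lemma~\ref{lem:UB2LBlambda} yields
\[
\min_{i\in[q]} \hat\lambda_i \;\geq\; \frac{1}{6\,q!\,X^q\,\bigl(d^{1.5}(2d)^T\bigr)^{q-1}}.
\]
It then remains to verify that the choice $Y=\Theta\!\bigl(d^{3.5}\,d!\,X^d\,d^{d/2}\,(2T)^{dT}\bigr)$ in Step~\ref{step:refinedGrid} makes this bound exceed $12\,d^{3.5}/Y$, i.e.\ that $Y$ dominates $72\,d^{3.5}\,q!\,X^q\,(d^{1.5}(2d)^T)^{q-1}$ for every $q\leq d$. Using $q\leq d$ and $T\gg d$, every factor in the latter expression is absorbed by the corresponding factor in the chosen $Y$, completing the verification and hence the proof.
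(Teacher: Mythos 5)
Your proof follows essentially the same approach as the paper: chain $\Conv(S^{\rm end})\subseteq K\subseteq\hat{\EEE}\subseteq\EEE^{\rm end}$, establish the first two containments set-theoretically, and establish the last one via Lemma~\ref{lem:centerShift} by combining Claim~\ref{claim:UBlambda} with Lemma~\ref{lem:UB2LBlambda} to lower-bound $\min_i\hat{\lambda}_i$ and then checking that $Y$ was chosen large enough. The only cosmetic difference is that the paper applies Claim~\ref{claim:UBlambda} directly to $\hat{\EEE}$ (its statement covers every ellipsoid constructed during the execution, including the intermediate minimum-volume ones), whereas you re-derive the bound on $\max_i\hat{\lambda}_i$ by passing through $\EEE^{\rm begin}$ and the John-theorem factor of $d$ — a valid but slightly redundant detour that only loses a polynomial factor absorbed by the $\Theta$ in $Y$.
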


\begin{proof}
Let $h,K,\hat{\EEE}$ denote 
halfspace, the convex body, and the ellipsoid from Step~\ref{step:cutElipso} (recall that $\hat{\EEE}$ is transformed into $\EEE^{\rm end}$ by shifting and inflating). By construction, and by the assumption that $\Conv\left(S^{\rm begin}\right)\subseteq \EEE^{\rm begin}$, we have that
$$S^{\rm end}=S^{\rm begin}\cap h\subseteq \EEE^{\rm begin}\cap h = K \subseteq\hat{\EEE}.$$
As $\hat{\EEE}$ is convex we get that $\Conv\left(S^{\rm end}\right)\subseteq\hat{\EEE}$.
Thus, combining Claim~\ref{claim:UBlambda} and Lemma~\ref{lem:UB2LBlambda} with the assumption that 
$\Vol\left(\Conv\left(S^{\rm end}\right)\right)>0$, we get that the the radii of $\hat{\EEE}$ satisfy
$$\min_{i\in[d]}\hat{\lambda}_i \ge \frac{1}{6\,d!\,X^d \, d^{d/2} \, (2T)^{dT}}\triangleq\hat{\lambda}_{\rm min}$$

Therefore, 
by asserting that 
$$Y\geq \frac{6d^{1.5}}{\gamma\hat{\lambda}_{\rm min}}=
144 d^{3.5}  d!\,X^d \, d^{d/2} \, (2T)^{dT},
$$
we make sure that the centers of $\EEE^{\rm end}$ and $\hat{\EEE}$ satisfy 
$$\|c-\hat{c}\|_{\infty}\leq\frac{1}{Y}\leq
\frac{\gamma}{6d^{1.5}}\cdot\min_{i\in[d]}\{\lambda_i\}.
$$ 
Thus, by Lemma~\ref{lem:centerShift}, we have that $\hat{\EEE}\subseteq\EEE^{\rm end}$, so overall $\Conv\left(S^{\rm end}\right)\subseteq\hat{\EEE}\subseteq\EEE^{\rm end}$.
\end{proof}

By induction on the number of iterations of Step~\ref{step:ElipLoop}, Claim~\ref{claim:enclosing} results in the following corollary.

\begin{cor}\label{cor:enclosing-simplified}
Let $S^{\rm end}$ and $\EEE^{\rm end}$ denote the multiset $S$ and the ellipsoid $\EEE$ as they are at the end of one of the iterations of Step~\ref{step:ElipLoop}. If $\Vol\left(\Conv\left(S^{\rm end}\right)\right)>0$ then $\Conv\left(S^{\rm end}\right)\subseteq\EEE^{\rm end}$.
\end{cor}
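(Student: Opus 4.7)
The plan is to prove the statement by induction on the iteration index $t$ of the loop in Step~\ref{step:ElipLoop}, invoking Claim~\ref{claim:enclosing} as the inductive step. Let $S^{(t)}$ and $\EEE^{(t)}$ denote the multiset $S$ and the ellipsoid $\EEE$ at the end of the $t$-th iteration, with $S^{(0)}$ and $\EEE^{(0)}$ denoting the initial values right before entering the loop. We want to show: for every $t\geq 0$, if $\Vol(\Conv(S^{(t)}))>0$ then $\Conv(S^{(t)})\subseteq \EEE^{(t)}$.

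First I would handle the base case $t=0$. By construction, $\EEE^{(0)}$ is the $q$-dimensional ball of diameter $\sqrt{q}$ around the origin of $\XXX^q$, and hence contains all of $\XXX^q$. Since $S^{(0)}\subseteq \XXX^q$, the ball contains $S^{(0)}$, and by convexity of the ball, $\Conv(S^{(0)})\subseteq \EEE^{(0)}$.

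For the inductive step, assume the claim holds for $t-1$ and suppose $\Vol(\Conv(S^{(t)}))>0$. The key observation is that points are only \emph{removed} from $S$ throughout the loop (Step~\ref{step:deleteLP}), so $S^{(t)}\subseteq S^{(t-1)}$, which implies $\Conv(S^{(t)})\subseteq \Conv(S^{(t-1)})$ and hence $\Vol(\Conv(S^{(t-1)}))\geq \Vol(\Conv(S^{(t)}))>0$. By the inductive hypothesis applied to iteration $t-1$, we obtain $\Conv(S^{(t-1)})\subseteq \EEE^{(t-1)}$. We are now in the exact setting of Claim~\ref{claim:enclosing}, taking $S^{\rm begin}=S^{(t-1)}$, $\EEE^{\rm begin}=\EEE^{(t-1)}$, $S^{\rm end}=S^{(t)}$, and $\EEE^{\rm end}=\EEE^{(t)}$. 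Applying the claim yields $\Conv(S^{(t)})\subseteq \EEE^{(t)}$, completing the induction.

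I expect no serious obstacle here; the entire argument is a bookkeeping induction, and the only subtle point is noticing that the volume hypothesis transfers backwards in time because $S$ only shrinks through the loop (in particular, positive volume at the end of iteration $t$ forces positive volume at the end of every earlier iteration, so Claim~\ref{claim:enclosing} is applicable at every step of the induction).
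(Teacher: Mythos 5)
Your proof is correct and follows exactly the route the paper intends: the paper states that Corollary~\ref{cor:enclosing-simplified} follows from Claim~\ref{claim:enclosing} by induction on the iterations of Step~\ref{step:ElipLoop}, and you have simply spelled out that induction, including the one subtle bookkeeping point that the positive-volume hypothesis propagates backwards in time because $S$ only ever shrinks.
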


\begin{claim}\label{claim:volShrink}
Every iteration of Step~\ref{step:ElipLoop} decreases the volume of the current ellipsoid multiplicatively by at least $e^{-\frac{1}{4d}}$.
\end{claim}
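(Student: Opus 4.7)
The plan is to follow the classical volume-shrinkage analysis of the ellipsoid method, while accounting for the two extra operations performed in Step~\ref{step:cutElipso} and the step immediately after: (i) replacing the center by its rounding to the grid $\YYY^q$, and (ii) inflating every principal radius by a factor $(1+\gamma)$ with $\gamma=\frac{1}{4d^2}$. The rounding is an isometry so it leaves the volume unchanged, hence the whole shrinkage argument reduces to controlling two multiplicative factors: the ratio between the minimum-volume ellipsoid $\hat{\EEE}$ enclosing $K=\EEE\cap h$ and $\EEE$ itself, and the inflation factor $(1+\gamma)^q$.

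For the first factor I would invoke the standard sharp bound from the ellipsoid method. If $\EEE$ is an ellipsoid in $\R^q$ and $h$ is a halfspace whose defining hyperplane passes through the center of $\EEE$, then the minimum-volume ellipsoid $\hat{\EEE}$ enclosing $\EEE\cap h$ satisfies
\[
\frac{\Vol(\hat{\EEE})}{\Vol(\EEE)} \;=\; \Bigl(\tfrac{q}{q+1}\Bigr)\Bigl(\tfrac{q^2}{q^2-1}\Bigr)^{(q-1)/2} \;\leq\; e^{-1/(2q)},
\]
where the equality is the well-known formula (obtained by an affine change of coordinates that reduces to the half-ball case) and the inequality follows by a routine Taylor expansion of the two factors (the leading $-1/(2q)$ term dominates the $O(1/q^2)$ correction, which I would verify explicitly once). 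For the second factor, blowing up every principal radius by $(1+\gamma)$ multiplies the volume by
\[
(1+\gamma)^q \;\leq\; e^{q\gamma} \;=\; e^{q/(4d^2)} \;\leq\; e^{1/(4d)},
\]
using $q\leq d$ throughout the execution (an invariant that is maintained because $q$ is only decreased in Step~\ref{step:findSubSpace}).

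Multiplying the two contributions, and again using $q\leq d$ so that $-\tfrac{1}{2q}\leq -\tfrac{1}{2d}$, yields
\[
\frac{\Vol(\EEE^{\rm end})}{\Vol(\EEE^{\rm begin})} \;\leq\; e^{-1/(2q)+1/(4d)} \;\leq\; e^{-1/(2d)+1/(4d)} \;=\; e^{-1/(4d)},
\]
which is the claimed bound. The main obstacle I anticipate is the Taylor-expansion step: the clean bound $e^{-1/(2(q+1))}$ that appears in many textbooks is too weak here by a factor of two in the exponent, so one has to work with the sharper asymptotic $e^{-1/(2q)+O(1/q^2)}$ and be careful about the $O(1/q^2)$ correction for small $q$ (in particular for $q=1,2$ one may need to verify the inequality directly from the closed-form formula rather than from the asymptotic). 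Everything else is routine.
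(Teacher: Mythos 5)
Your proposal is correct and follows essentially the same decomposition as the paper's proof: the translation of the center to the grid does not change the volume, the minimum-volume enclosing ellipsoid of the half-ellipsoid shrinks the volume by at least $e^{-1/(2q)}$, and the radial inflation by $(1+\gamma)$ multiplies the volume by $(1+\gamma)^q \leq e^{\gamma d}$, so that asserting $\gamma \leq \frac{1}{4d^2}$ gives the net factor $e^{-1/(2d)}\cdot e^{1/(4d)} = e^{-1/(4d)}$.

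Where you add value is the explicit flag on the ellipsoid-method constant. The paper states the shrinkage factor $e^{-1/(2q)}$ as ``the standard analysis'' without comment, but, as you note, the version most textbooks quote is $e^{-1/(2(q+1))}$, and that weaker bound does not close: with $q=d$ it only gives $-\frac{1}{2(d+1)}+\frac{1}{4d}$, which is strictly larger than $-\frac{1}{4d}$. So the sharper exponent $-1/(2q)$ is genuinely needed. It does hold --- one clean way to see it is to write the exact ratio as $\exp\bigl(-\tfrac{1}{2u}[(1+u)\ln(1+u)+(1-u)\ln(1-u)]\bigr)$ with $u=1/q$, and observe that $(1+u)\ln(1+u)+(1-u)\ln(1-u)=u^2+\tfrac{u^4}{6}+\dots \geq u^2$, which gives $\leq e^{-u/2}=e^{-1/(2q)}$ for all $q\geq 1$ without any case analysis for small $q$. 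One small correction: the textbook bound is off from the needed one by a factor $q/(q+1)$ in the exponent, not a factor of two, but this does not affect your argument.
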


\begin{proof}
Let $\EEE^{\rm begin}$ denote the ellipsoid at the beginning the iteration. As in Step~\ref{step:cutElipso}, let $K$ be the intersection of $\EEE^{\rm begin}$ with the halfspace $h$ (whose defining hyperplane passes through $\EEE^{\rm begin}$'s center), and let $\hat{\EEE}$ be the minimal ellipsoid enclosing $K$. Recall that, as in the standard analysis of the (non-private) ellipsoid method,
$$\Vol\left(\hat{\EEE}\right)\leq\Vol\left(\EEE^{\rm begin}\right)\cdot \exp\left(-\frac{1}{2q}\right)\leq\Vol\left(\EEE^{\rm begin}\right)\cdot \exp\left(-\frac{1}{2d}\right).$$
The ellipsoid $\EEE^{\rm end}$ at the end of the iteration is obtained from $\hat{\EEE}$ by inflating each of its radii by a $(1+\gamma)$ factor (and by shifting it, but this has no effect on its volume). This inflates the volume by $(1+\gamma)^q\leq(1+\gamma)^d\leq e^{\gamma d}$. Overall, 
$$\Vol(\EEE^{\rm end})\leq\Vol\left(\EEE^{\rm begin}\right)\cdot \exp\left(-\frac{1}{2d}\right)\cdot\exp(\gamma d)
\leq
\Vol\left(\EEE^{\rm begin}\right)\cdot \exp\left(-\frac{1}{4d}\right),
$$
where the last inequality follows by asserting that $\gamma\leq\frac{1}{4d^2}$
\end{proof}

\begin{definition}
Denote 
$\Gamma=\Theta
\left(
\frac{d^{12.5}}{\epsilon}
 \ln^{1.5}(d) \ln^2\left(\frac{d\ln(X)}{\beta}\right)  
\ln^{1.5}(dX)\ln^{1.5}(d\ln(X)) 
 \ln\left(\frac{d\ln(X)
 }{\beta\delta}\right)
 \sqrt{\ln\frac{1}{\delta}}
 \right)$.\footnote{The value for $\Gamma$ here is taken from Theorem~\ref{thm:vempala} after plugging in our choice for the roundness parameter $\rho$, as defined in Step~\ref{step:callDPLP} of \texttt{PinHull}.} We say that a call to the private LP algorithm in Step~\ref{step:callDPLP} {\em fails} on an input matrix $A$ if $\rho(A)\geq\rho$ and the returned vector $x^*$ satisfies $\#_{\rm violated}(A,x^*)>\Gamma$. That is, the call {\em fails} if the returned solution violates many constraints, even though the given LP is feasible with large roundness. Otherwise, we say that the call {\em succeeded}.

 Similarly, we say that a call to Algorithm~\ref{alg:affine_hulls} on input $S$ made from Step~\ref{step:findSubSpace} of Algorithm \texttt{PinHull} {\em succeeds} if the returned (affinely independent) vectors $u_1,\dots,u_k$ satisfy $\Affspan(u_1,\dots,u_k)\subseteq\Affspan(S)$ and $|\{x\in S : x\notin\Affspan(u_1,\dots,u_k)\}|\leq O(\frac{d^2}{\epsilon}\log(\frac{d}{\beta\delta}))$.
\end{definition}

\begin{definition}
Let $E$ denote the event that, throughout the execution of Algorithm \texttt{PinHull}, all calls to the private LP algorithm and to Algorithm~\ref{alg:affine_hulls} succeed, and all Laplace samples from Step~\ref{step:verifyDPLP} are at most $\frac{1}{\epsilon}\log(\frac{1}{\beta})$ in absolute value.
\end{definition}

The following observation follows from a union bound over the failure probabilities of all the executions of the private LP algorithm and Algorithm~\ref{alg:affine_hulls} (as captured by Theorems~\ref{thm:vempala} and~\ref{t:privateaffspan}), and by a standard tail bound for the Laplace distribution. Note that each of these mechanisms is applied at most $dT$ times throughout the execution.

\begin{obs}
$\Pr[E]\geq1-3\beta dT$.
\end{obs}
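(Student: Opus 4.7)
The plan is a direct union bound. I will enumerate the three independent sources of failure that together constitute the complement of $E$ and bound each of them by $\beta dT$.

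First I will bound the number of invocations of each of the three stochastic mechanisms during an execution of \texttt{PinHull}. The algorithm has an outer restart structure: each time Step~\ref{step:findSubSpace} is reached, the working dimension strictly decreases (from $q$ to $k-1$ with $k\leq q$), so there are at most $d$ outer rounds. Within each outer round, the inner loop at Step~\ref{step:ElipLoop} runs for at most $T$ iterations, and each such iteration performs exactly one call to the private LP algorithm in Step~\ref{step:callDPLP} and draws exactly one Laplace sample in Step~\ref{step:verifyDPLP}. Algorithm~\ref{alg:affine_hulls} is invoked at most once per outer round, hence at most $d$ times overall. Therefore each of the three mechanisms is used at most $dT$ times.

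Next I will bound the per-call failure probabilities. By Theorem~\ref{thm:vempala} (applied with confidence parameter $\beta$), each call to the private LP algorithm fails with probability at most $\beta$. By Theorem~\ref{t:privateaffspan}, each call to Algorithm~\ref{alg:affine_hulls} fails with probability at most $\beta$. For the Laplace samples, the standard tail bound in Equation~\eqref{eq:conlap} gives
\[
\Pr\!\left[\left\lvert \Lap(1/\epsilon)\right\rvert > \tfrac{1}{\epsilon}\ln(1/\beta)\right] \leq \beta.
\]

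Finally, applying a union bound over the at most $dT$ invocations of each of the three mechanisms yields
\[
\Pr[\neg E] \leq dT\cdot \beta + dT\cdot \beta + dT\cdot \beta = 3\beta dT,
\]
which is exactly the claimed bound. No step here is a real obstacle; the only thing worth being careful about is the bookkeeping that each mechanism indeed runs at most $dT$ times and that the confidence parameter passed to Theorems~\ref{thm:vempala} and~\ref{t:privateaffspan} is $\beta$ (rather than, say, a rescaled value), so that the per-call failure probability matches the quantity quoted above.
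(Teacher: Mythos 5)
Your proof is correct and takes essentially the same approach as the paper: a union bound over the at most $dT$ invocations of the private LP solver, the Laplace mechanism, and Algorithm~\ref{alg:affine_hulls}, each failing with probability at most $\beta$. The paper's own justification is the same one-line union-bound argument; your write-up merely spells out the bookkeeping (one LP call and one Laplace draw per inner iteration, one affine-span call per outer restart, at most $d$ restarts) and the standard Laplace tail bound, and you correctly flag the only real subtlety, namely that the confidence parameters fed into Theorems~\ref{thm:vempala} and~\ref{t:privateaffspan} must be set so that each per-call failure probability is $\beta$ (which is what the $\log(d/(\beta\delta))$ term in the definition of a successful call to Algorithm~\ref{alg:affine_hulls} reflects).
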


We proceed with the analysis assuming that Event $E$ indeed holds.

\begin{claim}\label{claim:T}
If the algorithm did not halt during the $T$ iterations of the loop in Step~\ref{step:ElipLoop}, then at the end of this loop the remaining input points, call them $S^{\rm end}$, are affinely dependent, i.e., $\Conv\left(S^{\rm end}\right)$ has zero volume in $\R^q$.
\end{claim}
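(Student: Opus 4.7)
The plan is to argue by contradiction. Suppose $\Vol(\Conv(S^{\rm end})) > 0$ in $\R^q$. Then I will derive two incompatible bounds on $\Vol(\EEE^{\rm end})$: a lower bound that comes from the fact that $\EEE^{\rm end}$ must enclose a full-dimensional convex hull of grid points, and an upper bound that comes from $T$ rounds of ellipsoid shrinkage.

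For the lower bound, note that the points in $S$ at the start of the for loop in Step~\ref{step:ElipLoop} lie in $\XXX^q$ (coordinate-selection by \texttt{Project} preserves the grid, and Steps~\ref{step:deleteLP} and~\ref{step:deleteAff} only delete points), hence $S^{\rm end} \subseteq \XXX^q$ as well. By hypothesis $\Conv(S^{\rm end})$ has positive $q$-dimensional volume, so $S^{\rm end}$ contains $q+1$ affinely independent grid points, and Lemma~\ref{lem:simplex} (applied in dimension $q\leq d$) yields $\Vol(\Conv(S^{\rm end})) \geq \frac{1}{q!\,X^q} \geq \frac{1}{d!\,X^d}$. Since the positive-volume hypothesis lets us invoke Corollary~\ref{cor:enclosing-simplified} at the end of the loop (and, by induction on iterations, at every intermediate step), $\Conv(S^{\rm end}) \subseteq \EEE^{\rm end}$, so $\Vol(\EEE^{\rm end}) \geq \frac{1}{d!\,X^d}$.

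For the upper bound, the initial ellipsoid before the loop is the ball of radius $\sqrt{q}\leq\sqrt{d}$ around the origin, whose volume is at most $\Vol(B_d)\cdot d^{d/2}\leq 6\,d^{d/2}$. By Claim~\ref{claim:volShrink}, every iteration multiplies the volume by a factor of at most $\exp(-1/(4d))$, so after $T$ iterations $\Vol(\EEE^{\rm end}) \leq 6\,d^{d/2}\cdot\exp(-T/(4d))$. Combining the two bounds gives the requirement $\exp(T/(4d)) > 6\,d^{d/2}\,d!\,X^d$, which (via Stirling) holds as long as $T = \Omega(d^2\ln(dX))$; this is exactly the value fixed in Step~\ref{step:ElipLoop}. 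The contradiction establishes that $\Conv(S^{\rm end})$ must have zero volume in $\R^q$. The only point I expect to need care with is confirming that the inductive hypothesis of Corollary~\ref{cor:enclosing-simplified} propagates through every iteration—namely, that $\Vol(\Conv(S))>0$ at the end of iteration $t$ implies the same at the beginning of iteration $t+1$ (trivially true, as $S$ is not modified between iterations except by the deletion in Step~\ref{step:deleteLP}, which we handle by applying the corollary iteration by iteration using the end-of-iteration multiset as the beginning-of-next-iteration multiset).
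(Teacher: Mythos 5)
Your proposal is correct and follows essentially the same argument as the paper: contradiction via Lemma~\ref{lem:simplex} (lower bound on the volume of $\Conv(S^{\rm end})$), Corollary~\ref{cor:enclosing-simplified} (to transfer this to a lower bound on $\Vol(\EEE^{\rm end})$), and Claim~\ref{claim:volShrink} (for the upper bound $6\,d^{d/2}e^{-T/(4d)}$), giving a contradiction for $T=\Omega(d^2\ln(dX))$. The extra care you take about $S^{\rm end}\subseteq\XXX^q$, about applying Lemma~\ref{lem:simplex} in dimension $q\leq d$, and about the propagation of the positive-volume hypothesis backward through iterations (which is automatic since $S$ only shrinks, so $\Vol(\Conv(S))$ is non-increasing) is sound but is already subsumed by the statement of Corollary~\ref{cor:enclosing-simplified}, which the paper invokes directly.
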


\begin{proof}
Assume the converse towards contradiction. By Lemma~\ref{lem:simplex}, we have that $\Vol\left(\Conv\left(S^{\rm end}\right)\right)\geq\frac{1}{d!\, X^d}$. Let $\EEE^{\rm end}$ denote the ellipsoid $\EEE$ at the end of the loop. Then, by Corollary~\ref{cor:enclosing-simplified}, we have that $\Conv\left(S^{\rm end}\right)\subseteq\EEE^{\rm end}$ and thus $\Vol\left(\EEE^{\rm end}\right)\geq\frac{1}{d!\, X^d}$.

On the other hand, the volume of the initial ellipsoid (before starting the loop), which is a ball, is at most $6 d^{d/2}$, and by Claim~\ref{claim:volShrink} the volume shrinks by a factor of at least $e^{-\frac{1}{4d}}$ with every iterations. Thus, after $T$ iterations we have that $\Vol\left(\EEE^{\rm end}\right)\leq 6 d^{d/2}\cdot e^{-\frac{T}{4d}}$. For $T\geq \Omega(d^2\ln(dX))$ this is less than $\frac{1}{d!\, X^d}$, giving the desired contradiction.
\end{proof}

\begin{lem}\label{lem:sampleComplexityPinHull}
If the size of the initial dataset satisfies
$|S|\geq \Omega(dT\Gamma)$,
then Algorithm \texttt{PinHull} returns a point in $\Conv(S)$ whenever Event $E$ occurs.
\end{lem}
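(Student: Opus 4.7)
The plan is to establish the lemma in two parts: (i) under the sample complexity bound, the multiset $S$ remains non-empty throughout the execution, and (ii) whenever Algorithm \texttt{PinHull} returns, the returned point lies in $\Conv(S_0)$, where $S_0$ denotes the initial input.

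For (i), I would bound the total number of points ever deleted from $S$. Deletions happen only in Step~\ref{step:deleteLP} inside the ellipsoid loop and in Step~\ref{step:deleteAff} at dimension reduction. Under Event $E$, any execution of Step~\ref{step:deleteLP} that did \emph{not} trigger a return in Step~\ref{step:verifyDPLP} satisfies $\#_{\rm violated}(A,x^*)\leq \hat{\#}_{\rm violated}+\frac{1}{\epsilon}\ln(1/\beta)\leq \Gamma+\frac{2}{\epsilon}\ln(1/\beta)=O(\Gamma)$, so at most $O(\Gamma)$ points are deleted per inner iteration. With at most $T$ iterations per outer round (i.e., per GOTO cycle), this gives $O(T\Gamma)$ deletions per round. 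Step~\ref{step:deleteAff} removes at most $O(\frac{d^2}{\epsilon}\log(\frac{d}{\beta\delta}))=O(\Gamma)$ additional points under Event $E$, using the guarantee of Theorem~\ref{t:privateaffspan}. By Claim~\ref{claim:T}, whenever we reach Step~\ref{step:findSubSpace} the surviving points are affinely dependent, so the returned $u_1,\dots,u_k$ satisfy $k\le q$ and the new dimension $k-1$ strictly drops. Hence there are at most $d$ outer rounds, giving total deletions $O(dT\Gamma)$, and for an appropriate constant in $|S|\geq\Omega(dT\Gamma)$ the multiset $S$ remains non-empty at every point of the execution.

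For (ii), I would first establish the invariant that if $v\in\Conv(S_{\rm current})$ at any point during the execution, then ${\rm GoAllUp}(v)\in\Conv(S_0)$. This follows by induction on the number of dimension reductions using Lemma~\ref{lem:aff2lin}: each application of ${\rm GoUp}$ lifts a convex combination from the projected space back to a convex combination over the surviving subset of points in the higher space, and the composed ${\rm GoAllUp}$ chains these liftings to reach $\Conv(S_0)$. Deletions only shrink the surviving subset of $S_0$, so the invariant is preserved. Next I case-split on where the algorithm returns. If it returns from Step~\ref{step:verifyDPLP} with $c$ being the current ellipsoid's center, I claim $c\in\Conv(S_{\rm current})$. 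Assume not; since $c\in\YYY^q$ (by the snap-to-grid in Step~\ref{step:cutElipso}) and each $z\in S_{\rm current}$ also lies on $\YYY^q$ (because $\XXX^d\subseteq\YYY^d$ initially and ${\rm Project}$ preserves grid membership by selecting coordinates), Lemma~\ref{lem:Separation} applied to $W=\{z-c:z\in S_{\rm current}\}\subseteq\YYY^q$ yields a unit vector certifying $\rho(A)\geq\rho$ for the exact parameter $\rho$ used in Step~\ref{step:callDPLP}. Event $E$ then forces the DP-DV-algorithm to return $x^*$ with $\#_{\rm violated}(A,x^*)\leq\Gamma$, hence $\hat{\#}_{\rm violated}\leq\Gamma+\frac{1}{\epsilon}\ln(1/\beta)$, contradicting the trigger of Step~\ref{step:verifyDPLP}. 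If instead the algorithm returns from Step~\ref{step:base} with the empty tuple, then $\R^0$ is a one-point space and $\Conv(S_{\rm current})=\{()\}$ as long as $S_{\rm current}\neq\emptyset$, which holds by (i). In either case, combining with the invariant gives that the returned point lies in $\Conv(S_0)$.

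The main technical subtlety is verifying that grid membership of both the points in $S_{\rm current}$ and the ellipsoid center $c$ is genuinely preserved across projections and roundings, so that Lemma~\ref{lem:Separation} applies with the same $\rho$ that is fed into the DP-DV call in Step~\ref{step:callDPLP}. Once this bookkeeping is in place, the rest of the argument is a straightforward combination of the deletion count from (i) with the Lemma~\ref{lem:aff2lin} lifting invariant.
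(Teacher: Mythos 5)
Your proposal is correct and follows essentially the same approach as the paper's proof: case-split on whether the algorithm halts in Step~\ref{step:verifyDPLP} or in Step~\ref{step:base}, use the discrete separation theorem (Lemma~\ref{lem:Separation}) plus Event~$E$ to argue by contradiction that halting in Step~\ref{step:verifyDPLP} forces the center to lie in the convex hull of the surviving points, bound total deletions (at most $dT$ executions of Step~\ref{step:deleteLP} and at most $d$ of Step~\ref{step:deleteAff}) by $O(dT\Gamma)$ so the dataset never empties, and lift back via Lemma~\ref{lem:aff2lin} and ${\rm GoAllUp}$. Your treatment is slightly more explicit than the paper's (you state the ${\rm GoAllUp}$ invariant and the at-most-$d$-rounds argument directly, and you flag the grid-membership bookkeeping behind the application of Lemma~\ref{lem:Separation}), but the underlying argument is the same.
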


\begin{proof}
Note that there are two places in which the algorithm might halt its execution: Either in Step~\ref{step:verifyDPLP} or in Step~\ref{step:base} (the base case of the algorithm).

We first show that whenever the algorithm halts in Step~\ref{step:verifyDPLP} then it returns a valid solution (a point in the convex hull of its input points). To this end, recall that in Step~\ref{step:callDPLP} we run the private LP algorithm to try and separate $c$ (the center of the current ellipsoid) from the (remaining) input points. By the discrete separation theorem (see Lemma~\ref{lem:Separation}), if $c$ is not in the convex hull of remaining input points then this LP is feasible with roundness $\rho$. In this case, assuming that Event $E$ occurs, then the private LP algorithm succeeds in finding a solution $x^*$ that violates at most $\Gamma$ constraints. Therefore, again assuming that $E$ occurs, in Step~\ref{step:verifyDPLP} we get that the noisy estimation for the number of violated constraints, $\hat{\#}_{\rm violated}$, would be small and the algorithm would {\em not} halt. This shows that if the algorithm does halt in Step~\ref{step:verifyDPLP} then it must be that the center of the current ellipsoid, $c$, belongs to the convex hull of the remaining input points in $\R^q$. Thus, by Lemma~\ref{lem:aff2lin}, the returned point ${\rm GoAllUp}(c)$ belong to the convex hull of the original input points in $\R^d$.

We next show that
if the input dataset is large enough, then
the algorithm returns a valid solution also when it halts in Step~\ref{step:base} (the base case in $\R^0$). To show this, it suffices to show that the algorithm never reaches Step~\ref{step:base} with an empty dataset. Indeed, if the algorithm reaches Step~\ref{step:base} with a non-empty dataset, then all points in this dataset are equal to the empty tuple, and thus the empty tuple $c=()$ is in the convex hull of the remaining points in $\R^0$. Again, in this case, by Lemma~\ref{lem:aff2lin}, the returned point ${\rm GoAllUp}(c)$ belong to the convex hull of the original input points in $\R^d$.

It thus remains to show that the dataset never becomes empty throughout the execution. Let us examine the steps during which input points get deleted. This can happen in the following two steps of the algorithm:
\begin{itemize}
    \item In Step~\ref{step:deleteLP}, which is executed at most $dT$ times, we delete at most $\Gamma+\frac{1}{\epsilon}\ln(\frac{dT}{\beta})$ from $S$.
    \item 
 In Step~\ref{step:deleteAff}, which is executed at most $d$ times, we delete at most $O(\frac{d^2}{\epsilon}\log(\frac{d}{\beta\delta}))$ points from $S$.
\end{itemize}
Overall, at most $O(dT\Gamma)$ pints are deleted from $S$ throughout the execution. Thus, assuming that the size of the initial dataset satisfies $|S|\geq\Omega(dT\Gamma)$, then the dataset never becomes empty throughout the execution.
\end{proof}

\begin{lem}\label{lem:privacyPinHull}
Denote 
$k=3dT=\Theta(d^3\ln(dX))$. 
Algorithm \texttt{PinHull} satisfies $\left(\hat{\epsilon},2k\delta\right)$-differential privacy for 
$\hat{\epsilon}=\sqrt{2k\ln(\frac{1}{k\delta})}\cdot\epsilon + 2k\epsilon^2$.
\end{lem}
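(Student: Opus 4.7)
The plan is to identify every private access to the input dataset $S$ made by Algorithm \texttt{PinHull}, argue that each such access is $(\epsilon,\delta)$-differentially private in isolation, and then invoke the advanced (adaptive) composition theorem for differential privacy to obtain the stated $(\hat\epsilon,2k\delta)$ guarantee.

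First I would catalogue the private mechanisms and their invocation counts. Within each execution of the main ellipsoid loop (Step~\ref{step:ElipLoop}), the only accesses to $S$ are (i) the call to the DP linear programming solver in Step~\ref{step:callDPLP}, which is $(\epsilon,\delta)$-DP by \Cref{thm:vempala}, and (ii) the noisy estimate $\hat{\#}_{\rm violated}$ in Step~\ref{step:verifyDPLP}, which is $(\epsilon,0)$-DP since adding a single input point changes $\#_{\rm violated}$ by at most $1$ and we add $\Lap(1/\epsilon)$. The loop runs for at most $T$ iterations per subspace, so it contributes at most $2T$ private accesses per restart. Between restarts, the only additional access is the call to \Cref{alg:affine_hulls} in Step~\ref{step:findSubSpace}, which is $(\epsilon,\delta)$-DP by \Cref{t:privateaffspan}. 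Since the dimension $q$ strictly decreases with each restart and starts at $d$, there are at most $d$ restarts, giving a total of at most $d\cdot 2T + d \leq 3dT = k$ private mechanisms. All remaining operations (forming the matrix $A$, deleting points, rounding centers, running $\texttt{Aff2Lin}$, updating $\EEE$, composing ${\rm GoAllUp}$) are deterministic functions of the previous private outputs and are therefore post-processing.

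Next I would argue that each of these $k$ mechanisms is $(\epsilon,\delta)$-DP as a function of the original input $S$, not only of the filtered working set at that moment. This is the point to handle with care: the current working set is obtained from $S$ by deletions that themselves depend on the history of private outputs. The standard way to treat this is to view each mechanism as ``(deterministic filter) $\circ$ (base DP mechanism)'', where the filter is fixed once the transcript of previous private outputs is fixed. For any two neighboring inputs $S,S'$ (differing in one record), running the same filter on both yields working sets that also differ in at most one record, so the base mechanism preserves its $(\epsilon,\delta)$-DP guarantee with respect to $S$ vs.\ $S'$. This is exactly the setting of adaptive composition.

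Finally, I would apply the advanced composition theorem to the $k$-fold adaptive composition: for any $\delta'>0$, the composition is $(\hat\epsilon,\,k\delta+\delta')$-DP with
\[
\hat\epsilon \;=\; \sqrt{2k\ln(1/\delta')}\cdot\epsilon \;+\; k\epsilon(e^\epsilon-1).
\]
Setting $\delta' = k\delta$ gives the claimed $2k\delta$ failure parameter, and using $e^\epsilon-1\le 2\epsilon$ (valid for $\epsilon\le 1$, which is the regime of interest) yields $\hat\epsilon \le \sqrt{2k\ln(1/(k\delta))}\cdot\epsilon + 2k\epsilon^2$, matching the statement. I expect the main obstacle to be the bookkeeping in the previous paragraph, namely carefully justifying that deletion-driven changes to the working set do not require a group privacy blow-up; everything else is a routine composition calculation.
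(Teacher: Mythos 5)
Your proposal is correct and takes essentially the same route as the paper: the paper's (terse) proof also simply identifies the at most $dT$ calls to the private LP solver, at most $dT$ Laplace queries, and at most $d$ calls to Algorithm~\ref{alg:affine_hulls}, bounds this by $k=3dT$, and invokes the advanced composition theorem of \citet{DRV10}. Your write-up additionally spells out the bookkeeping that the deletion/projection steps are transcript-dependent per-point filters (so neighboring inputs remain neighboring at each invocation) and the parameter choice $\delta'=k\delta$ with the bound $e^\epsilon-1\le 2\epsilon$, all of which the paper leaves implicit; these are accurate and fill genuine gaps in the paper's one-sentence proof rather than deviating from it.
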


\begin{proof}
Follows from standard composition theorems for differential privacy \citep{DRV10}, together with the fact that throughout the execution the algorithm accesses its input database via at most $dT$ applications of the private LP algorithm (in Step~\ref{step:callDPLP}), at most $dT$ applications of the Laplace mechanism (in Step~\ref{step:verifyDPLP}), and at most $d$ applications of Algorithm~\ref{alg:affine_hulls} (in Step~\ref{step:findSubSpace}). Each of these mechanisms satisfies $(\epsilon,\delta)$-differential privacy.
\end{proof}

By combining Lemma~\ref{lem:sampleComplexityPinHull} and Lemma~\ref{lem:privacyPinHull}, we establish the following theorem.
\begin{customthm}{E}[Point in Convex Hull]\label{thm:pich}
    Denote 
$k=3dT=\Theta(d^3\ln(dX))$. 
Algorithm \texttt{PinHull} satisfies $\left(\hat{\epsilon},2k\delta\right)$-differential privacy for 
$\hat{\epsilon}=\sqrt{2k\ln(\frac{1}{k\delta})}\cdot\epsilon + 2k\epsilon^2$. Furthermore, when given an input dataset $S\in\XXX^d$ of size at least
$$
|S|\geq\Theta(dT\Gamma)=
\Theta
\left(
\frac{d^{15.5}}{\epsilon}
 \ln^{1.5}(d) \ln^2\left(\frac{d\ln(X)}{\beta}\right)  
\ln^{2.5}(dX)\ln^{1.5}(d\ln(X)) 
 \ln\left(\frac{d\ln(X)
 }{\beta\delta}\right)
 \sqrt{\ln\frac{1}{\delta}}
 \right)
$$
then with probability at least $1-3\beta dT$ the algorithm returns a point in the convex hull of $S$.
\end{customthm}

By appropriately rescaling the parameters $\beta,\epsilon,\delta$ in this theorem we get the following result.

\begin{cor}
Let $X,d\in\N$ be parameters, let $\XXX=\Big\{\frac{a}{X} \;:\; a\in[-X,X]\cap\Z\Big\}$, and let $\XXX^d$ be a $d$-dimensional grid. There exists a computationally efficient $(\epsilon,\delta)$-differentially private algorithm that takes a dataset $S\subseteq\XXX^d$ and return a point in the convex hull of $S$ with probability at least $1-\beta$, provided that
\begin{eqnarray*}
|S|&\geq&\Theta
\left(
\frac{d^{17}}{\epsilon}
 \ln^{1.5}(d) \ln^2\left(\frac{d\ln(X)}{\beta}\right)  
\ln^{3}(dX)\ln^{1.5}(d\ln(X)) 
 \ln^2\left(\frac{d\ln(X)
 }{\beta\delta}\right)
 \right)\\
&=&\frac{d^{17}}{\epsilon}\cdot{\rm polylog}\left(d,\frac{1}{\beta},\frac{1}{\delta},X\right).
\end{eqnarray*}

\end{cor}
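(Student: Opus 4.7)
The plan is a pure parameter-tuning exercise on top of Theorem~\ref{thm:pich}. The corollary's target is $(\epsilon,\delta)$-DP with failure probability $\beta$; I will invoke \texttt{PinHull} with internal parameters $(\epsilon',\delta',\beta')$ chosen so that the theorem delivers exactly what the corollary promises, and then read off the resulting sample complexity.

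First I would solve for the internal parameters. Recalling that $k=3dT=\Theta(d^3\ln(dX))$, Theorem~\ref{thm:pich} gives $(\hat{\epsilon}',2k\delta')$-DP with $\hat{\epsilon}'=\sqrt{2k\ln(1/(k\delta'))}\cdot\epsilon'+2k(\epsilon')^2$, and failure probability at most $3\beta' dT$. So I would set
\[
\delta' \;=\; \frac{\delta}{2k}, \qquad \beta' \;=\; \frac{\beta}{3dT}, \qquad \epsilon' \;=\; \Theta\!\left(\frac{\epsilon}{\sqrt{k\ln(k/\delta)}}\right),
\]
and verify that the quadratic term $2k(\epsilon')^2$ is dominated by the linear term (by assuming without loss of generality $\epsilon\leq 1$ so that $\epsilon'\leq 1/\sqrt{k}$), which makes $\hat{\epsilon}'\leq\epsilon$. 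This is the only non-mechanical step and it is routine.

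Next I would substitute these parameters into the sample-complexity bound $|S|\geq \Theta(dT\Gamma)$ from Theorem~\ref{thm:pich}. Since $\Gamma$ scales as $1/\epsilon'$ with the rest being polylogarithmic in $d,1/\beta',1/\delta',X$, replacing $1/\epsilon'$ by $\Theta(\sqrt{k\ln(k/\delta)}/\epsilon)$ and $k=\Theta(d^3\ln(dX))$ introduces an extra $\sqrt{k}=\Theta(d^{1.5}\,\text{polylog})$ factor. Multiplying the $d^{15.5}/\epsilon'$ bound of Theorem~\ref{thm:pich} by this factor yields $d^{17}/\epsilon$ up to polylogarithms in $d,1/\beta,1/\delta,X$, matching the stated bound. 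The $\ln(1/\beta'),\ln(1/\delta')$ terms become $\ln(dT/\beta)$ and $\ln(k/\delta)$, both absorbed into the polylog factor.

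The main (and only real) obstacle is bookkeeping the polylog exponents faithfully to reach exactly the claimed expression; no new ideas are needed beyond standard advanced composition as already applied in Lemma~\ref{lem:privacyPinHull}. Efficiency is inherited verbatim from Theorem~\ref{thm:pich}, since rescaling the parameters does not affect the polynomial-time guarantee of \texttt{PinHull}.
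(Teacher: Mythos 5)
Your proposal is correct and matches the paper's intended (and unelaborated) argument: the paper simply says the corollary follows "by appropriately rescaling the parameters $\beta,\epsilon,\delta$" in Theorem~\ref{thm:pich}, and you have filled in exactly that bookkeeping — setting $\delta'=\delta/(2k)$, $\beta'=\beta/(3dT)$, $\epsilon'=\Theta(\epsilon/\sqrt{k\ln(k/\delta)})$ and tracking the extra $\sqrt{k}=\Theta(d^{1.5})$ factor in $1/\epsilon'$ that turns $d^{15.5}$ into $d^{17}$.
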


\noindent
\section*{Acknowledgment}
We thank Eden Chlamt{\'a}{\v{c}} for communicating to us the idea of the reduction in Section \ref{sec:no-margin} as it applies in a non-private setting.

\appendix

\section{Learning Subspaces}\label{sec:pacsubspace}

As an application we derive a PAC learning algorithm for subspaces in $\mathbb{R}^d$.
We use standard learning theoretic notation (see, e.g., \cite*{SSbook}).
    Let $X$ be a set called the domain and let $Y=\{0,1\}$ denote the label-set.
    For a set $Z$, let $Z^\star :=\cup_{n=0}^\infty Z^n$ denote the space of all finite sequences with elements from $Z$. An {\em hypothesis (or classifier)} is a function $h:X\to Y$. 
    An {\em example} is a pair $z=(x,y)\in X\times Y$.
    A {\em sample} $S\in (X\times Y)^\star$ is a (finite) sequence of examples.
    A {\em learning rule} is a mapping from $(X\times Y)^\star$ to $Y^X$;
    i.e., the input is a finite sample and the output is a hypothesis.
    Given a distribution $D$ over $X\times Y$ and an hypothesis $h$, 
    the {\em (population) loss} of $h$ with respect to $D$ 
    is $L_D(h) = \E_{(x,y)\sim D}1[h(x)\neq y]$.
    Given a sample $S=\{(x_i,y_i)\}_{i=1}^m$, 
    the {\rm (empirical) loss} of~$h$ with respect to $S$
    is  $L_S(h) =\frac{1}{m}\sum_{i=1}^m{1[h(x_i)\neq y_i]}$, where $1[\cdot]$ is the indicator function.

We use the following basic property of differentially private learning rules.
\begin{theorem}[Privacy $\implies$ Generalization (see e.g.\ \cite{AlonBMS20}, Theorem 3.14)]\label{t:privgen}
    Let $A$ be an $(\eps,\delta)$-differentially private learning rule, let $D$ be a distribution
    over examples, and let $m\in\mathbb{N}$. Then,
    \[
    \Pr_{S\sim D^m}\Bigl[L_D(h) \geq e^{2\eps}\Bigl(L_S(h) + \frac{\frac{20}{\eps}\log\frac{1}{\delta} }{ m}\Bigr)\Bigr]\leq O(\delta),
    \]
    where $h=A(S)$ and the \(O(\cdot)\) notation conceals a universal numerical constant.
\end{theorem}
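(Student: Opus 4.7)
The plan is to follow the standard transfer-theorem proof strategy behind such ``privacy-implies-generalization'' results. The approach combines two ingredients: (i) for any \emph{fixed} hypothesis $h$, a multiplicative Chernoff bound controls the deviation of $L_S(h)$ from $L_D(h)$; and (ii) differential privacy ensures that $h = A(S)$ behaves, for the purposes of such concentration statements, almost as if it were independent of $S$.

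First, I would establish a one-sided expectation-level stability bound, namely
\[
\E_{S,A}\bigl[L_D(A(S))\bigr] \;\leq\; e^\eps \, \E_{S,A}\bigl[L_S(A(S))\bigr] + \delta.
\]
The argument is coordinate-wise: for each $i\in[m]$, let $\tilde S^{(i)}$ denote $S$ with its $i$-th entry replaced by an independent fresh draw $(x,y) \sim D$. Since $S$ and $\tilde S^{(i)}$ differ in one entry, the $(\eps,\delta)$-DP guarantee of $A$ implies that for every bounded $f$, $\E_A[f(A(S))] \leq e^\eps \E_A[f(A(\tilde S^{(i)}))] + \delta$. Applying this to the indicator loss $f(h)=1[h(x_i)\neq y_i]$, integrating over $S \sim D^m$ and the fresh $(x,y) \sim D$, and using that $\tilde S^{(i)}$ is distributed as $D^m$ while $(x_i,y_i)$ is independent of $\tilde S^{(i)}$, transforms the right-hand side into $e^\eps \, \E[L_D(A(S))]+\delta$. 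Averaging over $i\in[m]$ yields the claim.

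Next, to upgrade to the high-probability multiplicative form, I would combine this stability idea with a Chernoff-type tail bound. For any \emph{fixed} hypothesis $h$, $L_S(h)$ is an average of $m$ i.i.d.\ Bernoullis with mean $L_D(h)$, so by the multiplicative Chernoff inequality,
\[
\Pr_{S\sim D^m}\bigl[L_D(h) \geq e^{2\eps}(L_S(h)+\tau)\bigr] \;\leq\; \exp(-c\,\eps\,\tau\,m)
\]
for a universal constant $c>0$; setting $\tau = \frac{20}{\eps m}\log\frac{1}{\delta}$ makes this much smaller than $\delta$. To transfer from a fixed $h$ to $h=A(S)$, I would consider the predicate $\phi_\tau(S,h)=1[L_D(h) > e^{2\eps}(L_S(h)+\tau)]$, note that its dependence on $S$ is only through the sample average $L_S(h)$ (sensitivity $1/m$ per coordinate), and run the same coordinate-swap DP argument on $\phi_\tau(S,A(S))$ to decouple $A(S)$ from $S$ while preserving the concentration bound.

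The main obstacle is this last step: upgrading from an additive expectation bound to a multiplicative high-probability bound with the stated constants, without incurring a group-privacy blow-up that would arise if one naively tried to replace the entire sample at once. The standard resolution is the coordinate-wise decomposition of $L_S$ together with the exchangeability of the sample, which lets each DP invocation cost only a single $(\eps,\delta)$ step while still enabling a clean concentration argument; the specific constants $e^{2\eps}$ and $20/\eps$ in the statement are precisely tuned to make this combination go through, yielding a failure probability of $O(\delta)$ as claimed.
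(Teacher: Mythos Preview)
The paper does not actually prove this theorem: it is quoted verbatim as a known result (Theorem~3.14 of \cite{AlonBMS20}) and used as a black box in the proof of \Cref{t:privsubspacelearn}. So there is no ``paper's own proof'' to compare against.

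Regarding your sketch itself: the expectation-level stability inequality in step~(i) is correct and is indeed the standard one-swap argument. The gap is in step~(ii). You correctly identify that a naive group-privacy replacement of the whole sample would blow up by $e^{m\eps}$, but the proposed fix---``coordinate-wise decomposition of $L_S$ together with exchangeability'' applied to the indicator $\phi_\tau(S,A(S))$---does not by itself yield the high-probability multiplicative bound. After a single coordinate swap, $L_{\tilde S^{(i)}}(A(\tilde S^{(i)}))$ still depends on the remaining $m-1$ shared coordinates, so you have not decoupled $A(S)$ from $S$ enough to invoke the fixed-$h$ Chernoff bound; iterating the swap $m$ times brings back the group-privacy blow-up you wanted to avoid. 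The actual proofs of this transfer theorem (e.g.\ the Bassily--Nissim--Smith--Steinke--Stemmer--Ullman line that \cite{AlonBMS20} builds on) use an additional idea---typically a ``monitor'' that runs $A$ on many disjoint sub-samples and selects the one with the largest apparent generalization gap, then applies the expectation-level bound to the monitor's output---to convert the in-expectation guarantee into the stated tail bound with only $O(\delta)$ failure probability and the $e^{2\eps}$, $20/\eps$ constants. Your outline is missing that amplification step.
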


\begin{algorithm}
\caption{Private Subspace Learner}
\label{alg:subspacelearn}
\textbf{Input:} A sequence \(S=\{(\boldsymbol{x_i},y_i)\}_{i=1}^m\) of labeled examples, where \(\boldsymbol{x_i} \in \mathbb{F}^d\), and \(y_i = 1[\boldsymbol{x_i} \in V^\star]\), with \(V^\star \subseteq \mathbb{F}^d\) being an unknown affine subspace.\\
\textbf{Output:} An hypothesis function \(h: \mathbb{F}^d \to \{0,1\}\) such that \(h(\boldsymbol{x})\) closely approximates \(1[\boldsymbol{x} \in V^\star]\). (See Theorem~\ref{t:privsubspacelearn}.)
\begin{algorithmic}[1]
\State Apply \Cref{alg:affine_hulls} to the subsequence \(\{\boldsymbol{x}_i : y_i = 1\}\) consisting of positively labeled examples.
\State \textbf{Output} the hypothesis \(h(\boldsymbol{x}) = 1[\boldsymbol{x} \in U]\), where \(U\) is the affine span\footnotemark\ of the sequence produced by \Cref{alg:affine_hulls}.
\end{algorithmic}
\end{algorithm}
\footnotetext{We use the convention that the affine span of the empty sequence is the empty set.}

\begin{customthm}{C}[Privately Learning Subspaces]\label{t:privsubspacelearn}
\Cref{alg:subspacelearn} is $(\eps,\delta)$-DP. Furthermore, if $\eps=O(1)$, say $\eps\leq 10$, then it satisfies the following.
Let $V^\star\subseteq \mathbb{F}^d$ be an affine subspace, let $D_X$ be a distribution over \(\mathbb{F}^d\) and let $D$ be a distribution over examples $(x,y)$, where $x\sim D_X$ and $y=1[\boldsymbol{x}\in V^\star]$.
Let \(S=\{(\boldsymbol{x_i},y_i)\}_{i=1}^m\) be an IID input sample drawn from $D$. 
Then, with probability at least $1-O(\delta)$ %
\[L_D(h) \leq  O\Biggl(\frac{\frac{d^2}{\eps}\log\frac{d}{\delta}}{m}\Biggr),\]
where $h$ is the hypothesis outputted by \Cref{alg:subspacelearn} on the input sequence $S\sim D^m$.
\end{customthm}
\begin{remark}
It is worth reemphasizing \Cref{rem:effectivedim}  %
in the context of \Cref{t:privsubspacelearn}.
It implies that the dimension $d$ can be substituted with $\dim(V^\star)$, which is the dimension of the space being learned. Consequently, the error rate and sample complexity scales with the dimension of the target subspace rather the dimension of the ambient space.
\end{remark}
\begin{proof}
\Cref{alg:subspacelearn} inherits the \((\eps,\delta)\)-privacy guarantees of \Cref{alg:linspan},
as follows by the post-processing property of differentially private (DP) algorithms.

For the utility, by \Cref{t:privateaffspan} we have that with probability at least $1-\delta$:
\[L_S(h) \leq O\Biggl(\frac{\frac{d^2}{\eps}\log\frac{d}{\delta}}{m}\Biggr).\]
By \Cref{t:privgen} we have that with probability at least $1-O(\delta)$:
\[L_D(h) \leq e^{2\eps}\Bigl(L_S(h) + \frac{\frac{20}{\eps}\log\frac{1}{\delta} }{ m}\Bigr). \]
Thus, by a union bound and since $\eps\leq 10$ we get that with probability at least $1-O(\delta)-\delta=1-O(\delta)$:
\[L_D(h) \leq O\Biggl(e^{2\eps}\Biggl(\frac{\frac{d^2}{\eps}\log\frac{d}{\delta}}{m} + \frac{\frac{20}{\eps}\log\frac{1}{\delta} }{ m}\Biggr)\Biggr) = O\Biggl(\frac{\frac{d^2}{\eps}\log\frac{d}{\delta}}{m}\Biggr),\]
as required.
\end{proof}

\section{A Private Version of the LP Algorithm of Dunagan and Vempala}
\label{appendix:vempala}

Consider an $n\times d$ matrix $A$ s.t.\ there exists a non-zero vector $x$ for which $Ax\geq0$. Our goal is to privately find vector $x\neq0$ for which $\langle a_i,x \rangle\geq0$ for {\em most} of the rows $a_i$ of $A$.
\cite{DS08} presented an efficient perceptron-like algorithm for solving such linear programs. A known folklore is that their algorithm can be transformed to operate in the {\em statistical queries (SQ)} model, and can therefore be transformed to preserve differential privacy via a generic result of \cite{BDMN05}. However, we are not aware of any formal statement of the resulting SQ analogue or DP analogue of the algorithm of \cite{DS08}. Hence, we include in this section a complete description of the resulting differentially private algorithm. Aside from the technical issue of keeping track of the noises added for privacy, the analysis in this section is almost identical to that of \cite{DS08}.

We use the following variant of the Gaussian mechanism of \cite{DKMMN06} to obtain noisy averages of unit vectors in $\R^d$. The straightforward way to obtain a privacy preserving estimation of an average is to add independent noise to the numerator and the denominator. The following variant of the Gaussian mechanism allows us to work with a single Gaussian noise which is added directly to the average. This will help simplify some of our arguments later on.

\begin{algorithm}[H]
\caption{\texttt{NoisyAVG}}
{\bf Input:} Multiset $V$ of unit vectors in $\R^d$, predicate $g:\R^d\rightarrow\{0,1\}$, parameters $\epsilon,\delta$.
\begin{enumerate}[rightmargin=10pt,itemsep=1pt]
\item Set $\hat{m} = |\{v\in V : g(v)=1\}| + \Lap(2/\epsilon) - \frac{2}{\epsilon}\ln(2/\delta)$. If $\hat{m}\leq 0$ then output $\bot$ and halt.
\item Denote $\sigma = \frac{4}{\epsilon\hat{m}}\sqrt{2\ln(8/\delta)}$, and let $\eta\in\R^d$ be a random noise vector with each coordinate sampled independently from $\NNN(0,\sigma^2)$. 
Return $\frac{\sum_{v\in V : g(v)=1}v}{|\{v\in V : g(v)=1\}|}+\eta$.
\end{enumerate}
\end{algorithm}

\begin{obs}[\cite{NSV16}]
Let $V$ and $g$ be s.t.\ $m=|\{v\in V : g(v)=1\}|\geq\frac{16}{\epsilon}\ln(\frac{2}{\beta\delta})$. 
With probability at least $(1-\beta)$ algorithm $\texttt{NoisyAVG}(V)$ returns $\frac{\sum_{v\in V : g(v)=1}v}{|\{v\in V : g(v)=1\}|}+\eta$ where $\eta$ is a vector whose every coordinate is sampled i.i.d.\ from $\NNN(0,\sigma^2)$ for some $\sigma\leq\frac{8}{\epsilon m}\sqrt{2\ln(8/\delta)}$
\end{obs}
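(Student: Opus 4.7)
The plan is to show that the only way \texttt{NoisyAVG} fails to match the stated guarantee is if the Laplace sample used to form $\hat{m}$ is exceptionally negative; under the assumed lower bound on $m$ this happens with probability at most $\beta$, and otherwise $\hat{m} \geq m/2$, which both prevents the $\bot$ branch in Step~1 and yields the claimed upper bound on $\sigma$.

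Concretely, I would first invoke the one-sided Laplace tail bound $\Pr[\Lap(2/\epsilon) < -t] = \tfrac{1}{2}\exp(-\epsilon t/2)$ with $t = \tfrac{2}{\epsilon}\ln(1/\beta)$. This yields that, with probability at least $1-\beta/2 \geq 1-\beta$,
\[
\hat{m} \;\geq\; m \,-\, \tfrac{2}{\epsilon}\ln(1/\beta) \,-\, \tfrac{2}{\epsilon}\ln(2/\delta) \;=\; m - \tfrac{2}{\epsilon}\ln\bigl(2/(\beta\delta)\bigr).
\]
Using the hypothesis $m \geq \tfrac{16}{\epsilon}\ln(2/(\beta\delta))$, this lower bound becomes $\hat{m} \geq m - m/8 \geq m/2 > 0$, so the algorithm does not return $\bot$. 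Plugging into the definition of $\sigma$ in Step~2 then gives
\[
\sigma \;=\; \frac{4}{\epsilon\hat{m}}\sqrt{2\ln(8/\delta)} \;\leq\; \frac{4}{\epsilon(m/2)}\sqrt{2\ln(8/\delta)} \;=\; \frac{8}{\epsilon m}\sqrt{2\ln(8/\delta)},
\]
matching the statement. On this event $\hat{m}$ is used only to set $\sigma$ (not to rescale the returned vector), so the output is precisely $\tfrac{1}{m}\sum_{v : g(v)=1} v + \eta$ with $\eta$ having i.i.d.\ $\NNN(0,\sigma^2)$ coordinates, as required.

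There is no deep obstacle here; the care needed is purely bookkeeping. The main thing to verify is that the deterministic downward shift $\tfrac{2}{\epsilon}\ln(2/\delta)$ built into $\hat{m}$ (which is present so that $\hat{m}$ is a high-probability under-estimate of the true count, as needed for the Gaussian-mechanism privacy argument in a companion lemma) is comfortably dominated by $m$, and that the Laplace-tail failure is charged entirely to the utility parameter $\beta$ rather than accumulating with $\delta$. The factor of $16$ in the hypothesis on $m$ is precisely what makes this bookkeeping go through with a clean factor of $2$ loss in $\sigma$.
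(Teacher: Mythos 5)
The paper states this observation with a citation to \cite{NSV16} and does not supply its own proof, so there is no in-paper argument to compare against. Your proof is correct and is exactly the standard verification one would expect: bound the negative tail of the $\Lap(2/\epsilon)$ sample by $\beta/2$, combine with the deterministic shift $-\tfrac{2}{\epsilon}\ln(2/\delta)$ to get $\hat{m} \geq m - \tfrac{2}{\epsilon}\ln(2/(\beta\delta)) \geq \tfrac{7m}{8} \geq \tfrac{m}{2} > 0$ under the hypothesis on $m$, conclude the algorithm does not output $\bot$, and then read off $\sigma \leq \tfrac{8}{\epsilon m}\sqrt{2\ln(8/\delta)}$ from the definition in Step~2. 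The final remark that $\hat{m}$ feeds only into $\sigma$ while the returned vector is normalized by the true count $m$ is exactly the right observation to close the argument. The Laplace tail identity and the arithmetic all check out; the argument is complete.
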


\begin{theorem}[\cite{NSV16}]\label{thm:NoisyAVG}
Algorithm \texttt{NoisyAVG} is $(\epsilon,\delta)$-differentially private.
\end{theorem}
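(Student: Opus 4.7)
The plan is to decompose \texttt{NoisyAVG} into two sequentially-composed sub-mechanisms and invoke adaptive composition of differential privacy. The first sub-mechanism releases $\hat{m}$; the second, given the value of $\hat{m}$, either outputs $\bot$ deterministically (when $\hat{m}\leq 0$) or releases the noisy average. Showing that the first is $(\epsilon/2, 0)$-DP and that the second is $(\epsilon/2, \delta)$-DP (uniformly in the value of $\hat{m}$ passed from the first) yields the claimed $(\epsilon, \delta)$-DP by basic composition.

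The first sub-mechanism is standard: the count $m := |\{v \in V : g(v) = 1\}|$ has $\ell_1$-sensitivity $1$ under add/remove neighbors on $V$, so $m + \Lap(2/\epsilon)$ is $(\epsilon/2, 0)$-DP by the Laplace mechanism, and the deterministic shift by $-\frac{2}{\epsilon}\ln(2/\delta)$ together with the halting test are post-processing.

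For the second sub-mechanism (the case $\hat{m} > 0$), the challenge is that the Gaussian noise scale $\sigma = \frac{4}{\epsilon\hat{m}}\sqrt{2\ln(8/\delta)}$ is calibrated using $\hat{m}$ rather than the true count $m$. Fix neighboring $V, V'$ with counts $m(V), m(V')$, and let $E$ be the event (over the Laplace noise that defines $\hat{m}$) that $\hat{m} \leq m(V) - 1$. Since $|m(V) - m(V')| \leq 1$, on $E$ we also have $\hat{m} \leq m(V')$, and a direct computation on the mean $\bar{v}_g = \frac{1}{m}\sum_{v \in V :\, g(v)=1}v$ of unit vectors shows that its $\ell_2$-sensitivity under this pair of neighbors is at most $2/\hat{m}$. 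Plugging this sensitivity into the standard Gaussian mechanism bound (noise scale $\sigma \geq \frac{\Delta}{\epsilon_0}\sqrt{2\ln(1.25/\delta_0)}$ suffices for $(\epsilon_0, \delta_0)$-DP) verifies that the noisy average release is $(\epsilon/2, \delta/2)$-indistinguishable across $V, V'$ on $E$. A one-sided Laplace tail bound gives $\Pr[E^c] \leq \delta/2$, which is charged to the remaining privacy slack, handled formally via the standard ``good-event'' reduction: split the output space according to $E$, invoke the indistinguishability bound on $E$, and union-bound over $E^c$.

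The main technical obstacle is that the Gaussian noise must be calibrated against the actual sensitivity, which depends on the unknown true count $m$, not on the released $\hat{m}$. The deterministic shift $-\frac{2}{\epsilon}\ln(2/\delta)$ is chosen precisely to make $\hat{m}$ a high-probability lower bound on $m$, so that $\sigma$ safely over-calibrates for both $V$ and $V'$ simultaneously; the low-probability event where this fails is absorbed into the additive slack $\delta$. Verifying the constants in $\sigma$ against the Gaussian mechanism bound is routine since $\ln(8/\delta) \geq \ln(1.25/(\delta/2))$, yielding the claimed parameters.
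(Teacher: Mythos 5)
The paper does not actually prove this statement; it imports Algorithm \texttt{NoisyAVG} and Theorem~\ref{thm:NoisyAVG} wholesale from \cite{NSV16} and offers no argument of its own. So there is no ``paper's proof'' to compare against --- your task here was to reconstruct the argument from that reference, and your sketch does capture the standard proof.

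One caveat on how you framed it. Your opening paragraph proposes a clean two-stage adaptive composition, claiming the second stage is $(\epsilon/2,\delta)$-DP ``uniformly in the value of $\hat m$ passed from the first.'' That claim is false on its face: for a fixed realized $\hat m>0$, the Gaussian release has noise scale proportional to $1/\hat m$, while the $\ell_2$-sensitivity of the average is governed by the \emph{true} counts $m(V),m(V')$; when $\hat m$ happens to exceed these counts the noise under-calibrates and the stage is not $(\epsilon/2,\delta)$-DP. Your own later paragraphs recognize this and repair it with the good-event argument (the shift by $\frac{2}{\epsilon}\ln(2/\delta)$ is chosen exactly so that $\hat m$ is, with probability $1-O(\delta)$, a lower bound on both counts, making $\sigma$ an over-calibration). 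That is the correct and essential mechanism of the proof, and it is also what \cite{NSV16} do --- but it is not a ``uniform in $\hat m$'' composition. I would rewrite the first paragraph to say outright that the second stage is private only conditioned on the high-probability event that $\hat m$ underestimates the counts, and then split the output space on that event. One further small point: with your choice of event $\hat m\leq m(V)-1$, the Laplace tail gives $\Pr[E^c]\leq \tfrac{\delta}{4}e^{\epsilon/2}$ rather than $\delta/2$, so the bound only lands where you want it for moderate $\epsilon$; a cleaner bookkeeping works with $\hat m\leq\min(m(V),m(V'))$ and notes that the sensitivity of the average between neighbors with counts $m,m'$ is at most $2/\max(m,m')$, so $\hat m\leq m(V)$ already suffices and the bad-event probability is exactly $\delta/4$. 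These are constant-tracking details; the core idea is right.
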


The number of iterations in the classic perceptron algorithm depends on the ``wiggle room'' available for a feasible solution, captured by the following definition:
\begin{definition}[\cite{DS08}]
Let $A$ be an $n\times d$ matrix whose rows are $a_1,\ldots,a_n$.
The {\em roundness} of the feasible region of $A$ is
$$
\rho(A)=\max_{x:\|x\|_2=1, Ax\geq0} \min_{a_i}  \langle \bar{a_i},x \rangle,
$$
where $\bar{a_i}$ is the unit vector along $a_i$.
\end{definition}
So if $\rho(A)\gg0$ then not only  is there  a vector $x$ s.t.\ $Ax\geq0$, but also a vector $z$ s.t.\ $\langle \bar{a_i},z \rangle\geq\rho(A)\gg0$ for every row $a_i$ of $A$.

\begin{algorithm}[H]
\caption{\texttt{PrivateLP} \label{alg:privateLP}}
{\bf Input:} Parameters $\beta,\epsilon,\delta$, and an $n\times d$ matrix $A$ satisfying $\rho(A)\geq\rho_0$.
\begin{enumerate}[rightmargin=10pt,leftmargin=37pt,itemsep=1pt]
\item Let $B=I,$ and let $\Delta=1/(500d)$.

\item Repeat at most $\Theta(\ln(\frac{1}{\beta}))$ times:\\

\vspace{-35pt}
\[\hspace{-17pt}\rotatebox{90}{\hspace{-45px}\gray{Perceptron improvement}}
\color{gray}\left[\color{black}\parbox{\textwidth}{
		\begin{enumerate}
		\item Let $y$ be a random unit vector in $\R^d$.
		\item Repeat at most $\frac{8}{\Delta^2}\ln(3\sqrt{d})$ times:
				\begin{enumerate}
				\item\label{step:break} Set $\hat{m}=|\{a\in A : \langle \bar{a},\bar{y} \rangle<-\Delta\}|+\Lap(1/\epsilon)$.\\
							If $y\neq0$ and $\hat{m}\leq \nu=
       \Theta\left(\frac{1}{\epsilon}d^{2.5}\ln(d)\ln(\frac{1}{\beta\delta})\right)$, then break and goto step~3.
				\item Define the predicate $g_y$ where $g_y(v)=1$ iff $\langle v,\bar{y} \rangle<-\Delta$.\\ 
							Let $u=\texttt{NoisyAVG}(\bar{A},g_y,\epsilon,\delta)$, where $\bar{A}$ is the same as $A$ after normalizing its rows.
				\item Set $y=y - \langle u,y \rangle u$.
				\end{enumerate}
		\end{enumerate}
}\right.
\]

\vspace{-10pt}
\item Delete from $A$ every vector $a$ s.t.\ $\langle \bar{a},\bar{y} \rangle<-\Delta$.

\newcounter{enumTemp}
\setcounter{enumTemp}{\theenumi}
\end{enumerate}

\vspace{-20pt}
\[\rotatebox{90}{\hspace{-30px}\gray{Perceptron phase}}
\color{gray}\left[\color{black}\parbox{\textwidth}{
\begin{enumerate}[rightmargin=10pt,topsep=0pt,parsep=0pt,leftmargin=20pt,itemsep=5pt]
\setcounter{enumi}{\theenumTemp}
\item Let $x=e_1=(1,0,0,\dots,0)$.
\item Repeat at most $\Theta(d^2)$ times:
		\begin{enumerate}
		\item Set $\hat{m}=|\{a\in A : \langle \bar{a},\bar{x} \rangle\leq\frac{\Delta}{24}\}|+\Lap(1/\epsilon)$. If $\hat{m}\leq \zeta=\Theta\left(\frac{d^2}{\epsilon}\ln(\frac{1}{\beta\delta})\right)$ then halt and return $Bx$. %
		\item Define the predicate $g_x$ where $g_x(v)=1$ iff $\langle v,\bar{x} \rangle\leq\frac{\Delta}{24}$, and let $u=\texttt{NoisyAVG}(\bar{A},g_x,\epsilon,\delta)$.
		\item Set $x=x+u$.
		\end{enumerate}
\setcounter{enumTemp}{\theenumi}
\end{enumerate}
}\right.
\]

\vspace{-20pt}
\begin{enumerate}[rightmargin=10pt,leftmargin=37pt,itemsep=1pt]
\setcounter{enumi}{\theenumTemp}
\item Set $A=A\left(I+\bar{y}\bar{y}^T\right)$ and $B=B\left(I+\bar{y}\bar{y}^T\right)$.\\
\gray{\% This is equivalent to replacing $A$'s and $B$'s rows by $a_i\leftarrow a_i+\langle a_i,\bar{y} \rangle\bar{y}$ and $b_i\leftarrow b_i+\langle b_i,\bar{y} \rangle\bar{y}$.}
\item If this step was reached more than $T=\Theta(d\cdot\ln(1/\rho_0)+\ln(1/\beta))$ times then FAIL. Otherwise GOTO Step~2.
\setcounter{enumTemp}{\theenumi}
\end{enumerate}
\end{algorithm}

\begin{definition}
Let $A$ be an $n\times d$ matrix with rows $a_1,\ldots,a_n$ and consider applying the perceptron phase of Algorithm \ref{alg:privateLP} to $A$. We say that the perceptron phase {\em succeeded} if the algorithm halts in Step 5(a) and returns a vector $Bx^*$ such that $|\{i: \langle \bar{a}_i,\bar{x}^* \rangle<\frac{\Delta}{24}\}|\leq2\zeta=\Theta\left(\frac{d^2}{\epsilon}\ln(\frac{1}{\beta\delta})\right)$. 
\end{definition}

\begin{lem}[\cite{BDMN05}]\label{lem:perceptron}
Let $n> 2\zeta =\Theta\left(\frac{d^2}{\epsilon}\ln(\frac{1}{\beta\delta})\right)$.
Let $A$ be an $n\times d$ matrix with rows $a_1,\ldots,a_n$ and consider applying the perceptron phase of Algorithm \ref{alg:privateLP} on $A$. Then,
\begin{enumerate}
\item W.p.\ at least $1-\beta$, whenever the algorithm halts in Step 5(a) then the perceptron phase succeeded.

\item If $\rho(A)\geq\Delta=1/(500d)$, then with probability at least $1-\beta$ the perceptron phase succeeds.
\end{enumerate}
\end{lem}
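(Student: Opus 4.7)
My proof strategy follows the classical perceptron convergence analysis, adapted to account for both the Laplace noise in the halting test and the Gaussian noise injected by \texttt{NoisyAVG} in the update step.

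For Part 1, I observe that the algorithm halts in Step 5(a) exactly when the noisy count $\hat m_t = m_t + L_t$ (with $L_t \sim \Lap(1/\epsilon)$) satisfies $\hat m_t \leq \zeta$. By the standard Laplace tail bound together with a union bound over the at most $O(d^2)$ iterations of the perceptron phase, with probability at least $1-\beta/2$ every such $|L_t|$ is at most $O(\tfrac{1}{\epsilon}\log(d/\beta))$, which is much smaller than $\zeta = \Theta(\tfrac{d^2}{\epsilon}\log\tfrac{1}{\beta\delta})$ for an appropriate hidden constant. In this event, whenever the algorithm halts, the true number of violating vectors is at most $\zeta + O(\tfrac{1}{\epsilon}\log(d/\beta)) \leq 2\zeta$, which is the success condition, proving Part 1.

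For Part 2, fix a unit vector $z^*$ witnessing $\rho(A) \geq \Delta$, i.e., $\langle \bar a_i, z^* \rangle \geq \Delta$ for every row $a_i$. I condition on an additional high-probability event $E$ that controls the Gaussian noises $\eta_t$ output by \texttt{NoisyAVG}. In any non-halting iteration, the true violating count satisfies $m_t \geq \hat m_t - |L_t| \geq \zeta/2 = \Omega(d^2/\epsilon)$, so the Gaussian scale used inside \texttt{NoisyAVG} is $\sigma_t = O(\tfrac{1}{\epsilon m_t}\sqrt{\log(1/\delta)})$, which is $O(1/d^2)$ up to polylogs. By standard Gaussian concentration and a union bound over the $O(d^2)$ iterations, $E$ also guarantees $\|\eta_t\| \leq O(\sqrt d/d^2) = O(1/d^{1.5}) \ll \Delta$ for every iteration, and $E$ occurs with probability at least $1-\beta/2$. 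The noisy-perceptron analysis under $E$ then proceeds as follows. Write $u_t = \mu_t + \eta_t$, where $\mu_t$ is the exact average of the contributing $\bar a_i$'s. Two deterministic facts hold: (i) each contributor satisfies $\langle \bar a_i, z^* \rangle \geq \Delta$, so $\langle \mu_t, z^* \rangle \geq \Delta$ and hence $\langle u_t, z^* \rangle \geq \Delta/2$; and (ii) each contributor satisfies $\langle \bar a_i, \bar x_t \rangle \leq \Delta/24$, so $\langle u_t, x_t \rangle \leq \|x_t\|(\Delta/24 + \|\eta_t\|) \leq \|x_t\|\Delta/12$. Combined with $\|u_t\| \leq 1 + \|\eta_t\| \leq 2$, this gives the twin bounds
$$\langle x_{t+1}, z^*\rangle \;\geq\; \langle x_t, z^*\rangle + \Delta/2, \qquad \|x_{t+1}\|^2 \;\leq\; \|x_t\|^2 + \|x_t\|\Delta/6 + 4.$$
A straightforward inductive quadratic-recursion argument then yields $\langle x_t, z^*\rangle \geq \Omega(t\Delta)$ and $\|x_t\|^2 = O(t)$ for all $t \leq O(1/\Delta^2)$. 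Applying Cauchy--Schwarz, the constraint $\langle x_t, z^*\rangle / \|x_t\| \leq 1$ forces the non-halting phase to terminate after at most $O(1/\Delta^2) = O(d^2)$ iterations. Hence under $E$ the algorithm must reach the halting branch of Step 5(a) within its allotted iteration budget, and by Part 1 it succeeds.

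The main technical obstacle is choosing the hidden constants so that three conditions simultaneously hold: (a) the Laplace slack $|L_t|$ is much smaller than $\zeta$, so every halt implies success; (b) in every non-halting iteration, the surviving count $m_t$ is still large enough to make the Gaussian noise $\|\eta_t\|$ negligible compared to $\Delta = 1/(500d)$; and (c) the quadratic recursion for $\|x_t\|^2$ remains valid throughout all $O(d^2)$ iterations, i.e., the noise never accumulates to the point of derailing the perceptron potential argument. All three reduce to the single quantitative observation that $\zeta \gtrsim d^2/\epsilon$ forces the Gaussian noise scale to be of order $1/d^{1.5}$, which is polynomially smaller than the margin $\Delta$, so all per-iteration perturbations to the perceptron potentials are absorbed into fixed constants.
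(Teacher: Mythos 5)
Your proof is correct and follows essentially the same strategy as the paper's: condition on a high--probability event bounding all the Laplace and Gaussian noises, then run the classical noisy--perceptron potential argument, showing that $\langle x_t,z^*\rangle$ grows linearly in $t$ while $\|x_t\|^2$ grows at most linearly in $t$, and derive a contradiction via Cauchy--Schwarz after $\Theta(1/\Delta^2)=\Theta(d^2)$ non-halting iterations. Your Part~1 argument (Laplace tail bound plus a union bound over the $O(d^2)$ iterations) is exactly the paper's.

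One technical point worth noting: in Part~2 you control the cross terms $\langle \eta_t, z^*\rangle$ and $\langle \eta_t, \bar x_t\rangle$ by the full Euclidean norm $\|\eta_t\|$, which by Gaussian norm concentration is $\tilde O(\sigma\sqrt{d})=\tilde O(1/d^{1.5})$. The paper instead observes that each such inner product is a univariate Gaussian $\NNN(0,\sigma^2\|\cdot\|^2)$ (because $z^*$ and $\bar x_t$ are unit vectors), so it bounds them by $\tilde O(\sigma)=\tilde O(1/d^2)$, avoiding the $\sqrt{d}$ loss. Your looser bound still satisfies $\|\eta_t\|\ll\Delta=1/(500d)$ asymptotically, since the ratio scales as $O(1/\sqrt d)$, but it requires a substantially larger implicit threshold on $d$ to absorb the constants. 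The paper itself does use a norm bound where one is genuinely needed (for $\langle\eta^i,\eta^i\rangle\leq 1/4$), so this is not a structural difference, just a place where you gave away a factor of $\sqrt d$ that could be reclaimed by exploiting the one-dimensional Gaussian tail directly. Other than that, the two proofs are the same.
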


\begin{proof}
Part 1 of the lemma follows from a simple union bound on the magnitude of the Laplace noises sampled in Step 5(a) throughout the phase. Specifically, with probability at least $1-\beta$, all of these samples are at most $O(\frac{1}{\epsilon}\log(\frac{d}{\beta}))\ll\zeta$ in absolute value, in which case Part 1 of the lemma holds. We continue to prove Part 2 of the lemma under the assumption that this bound on the Laplace random variables indeed holds.

Let $x^i$ denote the vector $x$ as it is after the $i^\text{th}$ update of Step~5(c), and let $u^i$ be the corresponding vector s.t.\ $x^{i+1}=x^i+u^i$.
Also let $m^i=|\{a\in A : \langle \bar{a},\bar{x}^i \rangle\leq\frac{\Delta}{24}\}|$, and ${\rm avg}^i=\left(\frac{1}{m^i}\sum_{a\in A: \langle \bar{a},\bar{x}^i \rangle\leq\frac{\Delta}{24}}\bar{a}\right)$.

By our assumption on the noise magnitude in Step~5(a), in every iteration that does not halt in Step~5(a) we have that $m^i\geq\zeta/2$. Therefore, by the properties of algorithm \texttt{NoisyAVG} (and a union bound over the $O(d^2)$ iterations), with probability at least $1-\beta$, in all of those iterations we also have that $u^i={\rm avg}^i + \eta^i$ where $\eta^i$ is a vector whose every coordinate is sampled i.i.d.\ from $\NNN(0,\sigma^2)$ for some $\sigma\leq\frac{8}{\epsilon m^i}\sqrt{2\ln(8/\delta)}\leq\sigma^*=\Theta\left(\frac{1}{d^2\sqrt{\ln(\frac{1}{\beta\delta})}}\right)$. We proceed with the analysis assuming that this is the case.

Let $z$ be a unit vector s.t.\ $\min_{a_i}  \langle \bar{a_i},z \rangle=\rho(A)\geq1/(500d)$, and consider the potential function $\frac{\langle x,z \rangle}{\|x\|_2}$. The numerator increases linearly (w.h.p.) with the number of steps:

\begin{eqnarray*}
\langle x^{i+1},z \rangle &=& \langle x^i+u^i,z \rangle \\
&=& \langle x^i,z \rangle + \langle u^i,z \rangle\\
&=& \langle x^i,z \rangle + \langle {\rm avg}^i, z \rangle + \langle \eta^i,z \rangle\\
&\geq& \langle x^i,z \rangle + \frac{1}{500d} + \langle \eta^i,z \rangle,
\end{eqnarray*}
where the last inequality is because ${\rm avg}^i$ is the average of vectors $\bar{a_i}$ satisfying $\langle \bar{a_i},z \rangle\geq\rho(A)\geq\frac{1}{500d}$.
Since $z$ is a unit vector and since every coordinate of $\eta^i$ is distributed as $\NNN(0,\sigma^2)$, we have that $\langle \eta^i,z \rangle\sim\NNN(0,\sigma^2)$. Using our bound on $\sigma$, we get that $\Pr[\langle \eta^i,z \rangle <-\frac{1}{1000d}]\leq O\left(\frac{\beta}{d^2}\right)$. Thus, with probability at least $1-\beta$, every step increases $\langle x,z \rangle$ by at least $\frac{1}{1000d}$. We proceed with the analysis assuming that this is the case.

We now show that the denominator of our potential function increases much more slowly. Let us analyze its square:
\begin{eqnarray*}
 (\|x\|_2)^2 &=& \langle x^{i+1},x^{i+1} \rangle\\
&=& \langle x^i+{\rm avg}^i+\eta^i,x^i+{\rm avg}^i+\eta^i \rangle \\
&=& \langle x^i,x^i \rangle + \langle {\rm avg}^i,{\rm avg}^i \rangle +  \langle \eta^i,\eta^i \rangle  +  2\langle x^i,\eta^i \rangle + 2\langle {\rm avg}^i,\eta^i \rangle +  2\langle x^i,{\rm avg}^i \rangle \\
&=& \langle x^i,x^i \rangle + \langle {\rm avg}^i,{\rm avg}^i \rangle +  \langle \eta^i,\eta^i \rangle  +  2\langle x^i,\eta^i \rangle + 2\langle {\rm avg}^i,\eta^i \rangle +  2\|x^i\|_2 \langle \bar{x}^i,{\rm avg}^i \rangle\\
&\leq& \langle x^i,x^i \rangle + \langle {\rm avg}^i,{\rm avg}^i \rangle +  \langle \eta^i,\eta^i \rangle  +  2\langle x^i,\eta^i \rangle + 2\langle {\rm avg}^i,\eta^i \rangle + \frac{\|x^i\|_2}{6000d}\\
&\leq& \langle x^i,x^i \rangle + 1 +  \langle \eta^i,\eta^i \rangle  +  2\langle x^i,\eta^i \rangle + 2\langle {\rm avg}^i,\eta^i \rangle+ \frac{\|x^i\|_2}{6000d},
\end{eqnarray*}
where the first inequality is because ${\rm avg}^i$ is the average of vectors $\bar{a}_i$ satisfying $\langle x^i,\bar{a}_i \rangle\leq\frac{\Delta}{24}=\frac{1}{12000d}$, and the last inequality is because $\langle {\rm avg}^i,{\rm avg}^i \rangle\leq1$ by the triangle inequality (${\rm avg}^i$ is the average of unit vectors). Let us examine each of the terms in the above expression:

\begin{itemize}
    \item $\langle {\rm avg}^i,\eta^i \rangle$ is distributed as $\NNN(0,\sigma^2\|{\rm avg}^i\|^2_2)$, where $\sigma\|{\rm avg}^i\|_2\leq\sigma$. By standard tail bounds for the normal distribution, with probability at least $1-\beta$, in all of the iterations we have that $\langle {\rm avg}^i,\eta^i \rangle\leq\frac{1}{4}$.

    \item $\langle \eta^i,\eta^i \rangle$ is the sum of the squares of $d$ independent samples from $\NNN(0,\sigma^2)$, i.e., $\frac{1}{\sigma^2}\langle \eta^i,\eta^i \rangle$ is distributed according to the chi-squared distribution with $d$ degrees of freedom. By standard tail bounds for the chi-squared distribution, for every $y$ it holds that 
$\Pr[\frac{1}{\sigma^2}\langle \eta^i,\eta^i \rangle \geq d+2\sqrt{dy}+2y ]\leq\exp(-y)$. Thus, by our bound on $\sigma$, with probability at least $(1-\beta)$, in all of the $O(d^2)$ iterations we have that 
$\langle \eta^i,\eta^i \rangle \leq\frac{1}{4}$.

\item $\langle x^i,\eta^i \rangle$ is distributed as $\NNN(0,\sigma^2\|x^i\|^2_2)$. Hence, with probability at least $1-\beta$, for every iteration $1\leq i\leq R=\Theta(d^2)$ we have that $\langle x^i,\eta^i \rangle\leq\sigma\|x^i\|_2\sqrt{2\ln(R/\beta)}\leq\frac{\|x^i\|_2}{12000d}$, where the inequality follows by our bound on $\sigma$.
\end{itemize}
So, with probability at least $1-O(\beta)$ we have
\begin{eqnarray*}
(\|x\|_2)^2 &\leq& \langle x^i,x^i \rangle + 2 +  \frac{2\|x\|_2}{6000d}\\
&\leq&\langle x^i,x^i \rangle + 3 +  \left(\frac{2\|x\|_2}{6000d}\right)^2\\
&=& \langle x^i,x^i \rangle + 3 +  \frac{\|x\|^2_2}{9000000d}, 
\end{eqnarray*}
where the second inequality follows from the fact that $\forall v\in\R$ we have $v\leq1+v^2$. 
All in all, with probability at least $1-O(\beta)$, for every iteration $i$ we have that 
$$\langle x^{i+1},z \rangle \geq \langle x^i,z \rangle + \frac{1}{1000d} \qquad \text{and} \qquad \langle x^{i+1},x^{i+1} \rangle \leq \left(1+\frac{1}{9000000d^2}\right)\langle x^i,x^i \rangle + 3 $$
Hence, for every $i\leq9000000d^2$ we have that
$$\langle x^i,z \rangle \geq \langle x^0,z \rangle + \frac{i}{1000d} \qquad \text{and} \qquad \|x^i\|=\sqrt{\langle x^i,x^i \rangle} \leq \sqrt{e\langle x^0,x^0 \rangle + 3ie} $$
After $9000000d^2$ iterations we get that $\frac{\langle x,z \rangle}{\|x\|_2}>1$, which cannot happen.
\end{proof}

\begin{lem}[\cite{BFKV98}] \label{lem:bfkv98}
Let $A$ be an $n\times d$ matrix with rows $a_1,\ldots,a_n$, and let $z$ be any unit vector such that $Az\geq0$. 
Let $\nu=   \Theta\left(\frac{1}{\epsilon}d^{2.5}\ln(d)\ln(\frac{1}{\beta\delta})\right)$.
Then, applying the perceptron improvement phase of Algorithm \ref{alg:privateLP} on $A$ returns a vector $y$ such that
\begin{enumerate}
	\item W.p. $1-\beta$, whenever the algorithm breaks in Step 2(b)i it holds that $|\{i: \langle \bar{a}_i,\bar{y} \rangle<-\Delta\}|\leq \nu$.
	\item W.p. $\frac{1}{8}-\beta$, the algorithm breaks in Step~2(b)i and returns a vector $y$ such that  $\langle z,\bar{y} \rangle\geq\frac{1}{2\sqrt{d}}$.
\end{enumerate}
\end{lem}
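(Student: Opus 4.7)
The plan is to adapt the analysis of \cite{BFKV98,DS08} for the perceptron-improvement phase, carefully tracking the Gaussian noise introduced by \texttt{NoisyAVG}. Part~1 is immediate from a union bound: over the at most $O(\Delta^{-2}\log d)$ iterations of step~2(b)i, all Laplace samples have absolute value at most $O(\frac{1}{\eps}\log(\frac{\log d}{\beta\Delta}))$ with probability $\geq 1-\beta$, which is much smaller than $\nu = \Theta(\frac{1}{\eps}d^{2.5}\log(d)\log(\frac{1}{\beta\delta}))$. Hence, after a small adjustment of the constants hidden in $\nu$, whenever $\hat m \leq \nu$ the true count is also bounded by $\nu$.

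For Part~2, first invoke the standard fact that a uniformly random unit vector $y^{(0)}\in\R^d$ satisfies $\langle z, y^{(0)}\rangle \geq 1/\sqrt d$ with probability at least $\tfrac{1}{4}$ for any fixed unit vector~$z$. Condition on this event and let $y^{(0)},y^{(1)},\ldots$ denote the iterates. At any iteration that does not trigger the break, the true count is $\Omega(\nu)$, so \texttt{NoisyAVG} returns $u^{(i)} = {\rm avg}^{(i)} + \eta^{(i)}$ with per-coordinate standard deviation $\sigma = \tilde O(1/d^{2.5})$, where ${\rm avg}^{(i)}$ is the true average of unit vectors $\bar a_j$ with $\langle \bar a_j,\bar y^{(i)}\rangle<-\Delta$. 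A single union bound then yields, with probability $\geq 1-\beta$ across all $O(d^2\log d)$ iterations, the simultaneous estimates $|\langle \bar y^{(i)},\eta^{(i)}\rangle|\leq \Delta/4$, $|\langle z,\eta^{(i)}\rangle|\leq \tilde O(1/d^{2.5})$, and $\|\eta^{(i)}\|_2 \leq o(1)$.

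With these estimates in hand, I would track two quantities: $\|y^{(i)}\|_2$ and $\langle z, y^{(i)}\rangle$. The identity $\|y^{(i+1)}\|_2^2 = \|y^{(i)}\|_2^2 - \langle u^{(i)},y^{(i)}\rangle^2\bigl(2-\|u^{(i)}\|_2^2\bigr)$, combined with $\langle u^{(i)},\bar y^{(i)}\rangle \leq -\tfrac{3\Delta}{4}$ and $\|u^{(i)}\|_2\leq 1+o(1)$, shows that $\|y^{(i)}\|_2$ decreases by a factor $(1-\Omega(\Delta^2))^{1/2}$ per step. For the alignment with~$z$, the update is $\langle z, y^{(i+1)}\rangle - \langle z, y^{(i)}\rangle = -\langle u^{(i)},y^{(i)}\rangle\langle z, u^{(i)}\rangle$; since $\langle z,{\rm avg}^{(i)}\rangle \geq 0$ (because $Az\geq 0$ and ${\rm avg}^{(i)}$ is a convex combination of normalized rows of $A$), the only negative contribution comes from the noise and is at most $O(\|y^{(i)}\|_2/d^{2.5})$. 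Summing this geometrically over iterations gives total decrease $\tilde O(1/d^{1.5}) \ll 1/(2\sqrt d)$, so $\langle z, y^{(i)}\rangle \geq 1/(2\sqrt d)$ throughout.

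Finally, if the algorithm never broke, then after $T_0 = \Theta(\Delta^{-2}\log d)$ iterations one would have $\|y^{(T_0)}\|_2 < 1/(3\sqrt d)$, contradicting $\|y^{(T_0)}\|_2 \geq |\langle z, y^{(T_0)}\rangle| \geq 1/(2\sqrt d)$. Hence the algorithm must break at some iteration $i^\ast$, and since $\|y^{(i^\ast)}\|_2 \leq 1$, we conclude $\langle z, \bar y^{(i^\ast)}\rangle = \langle z, y^{(i^\ast)}\rangle/\|y^{(i^\ast)}\|_2 \geq 1/(2\sqrt d)$. The main obstacle is the coordinated bookkeeping of the noise terms across all iterations; the key enabler is that the choice $\nu = \tilde\Theta(d^{2.5}/\eps)$ makes $\sigma$ small enough that the three noise inequalities above hold simultaneously with only a $\beta$ failure probability.
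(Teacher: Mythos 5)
Your approach is essentially the same as the paper's: Part~1 is a union bound on the Laplace samples, and Part~2 tracks $\|y^{(i)}\|_2$ (geometric decrease) and $\langle z,y^{(i)}\rangle$ (nearly preserved up to noise) to force a break, exactly as the paper does by induction on the two quantities $\langle y^i,y^i\rangle\leq(1-\Delta^2/8)^i$ and $\langle y^i,z\rangle\geq\frac{1}{\sqrt d}-i\sigma\sqrt{8\ln(1/\beta)}$.

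One arithmetic slip is worth flagging: you claim the total drift in $\langle z,y^{(i)}\rangle$ is $\tilde O(1/d^{1.5})\ll 1/(2\sqrt d)$. In fact, with the per-step drift bounded by $O\bigl(\|y^{(i)}\|_2\cdot\sigma\cdot\mathrm{polylog}\bigr)$ and $\sigma=\tilde O(1/d^{2.5})$, summing the geometric series $\sum_i\|y^{(i)}\|_2=O(1/\Delta^2)=O(d^2)$ gives total drift $\tilde O(d^2\cdot d^{-2.5})=\tilde O(1/\sqrt d)$, which is the \emph{same} order as the target $1/(2\sqrt d)$, not negligible compared to it. The paper handles this by explicitly asserting $\sigma\leq\frac{1}{2t\sqrt{8d\ln(1/\beta)}}$ so that the cumulative drift is at most $\frac{1}{2\sqrt d}$ exactly; your $\nu=\Theta(\frac{1}{\epsilon}d^{2.5}\ln d\ln\frac{1}{\beta\delta})$ is chosen precisely so that this constant works out, but the ``$\ll$'' obscures a nontrivial constant check. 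Also, you cite probability $1/4$ for $\langle z,y^{(0)}\rangle\geq 1/\sqrt d$ where the paper and the lemma statement use $1/8$; a small point, but it should match what the stated success probability $\frac{1}{8}-\beta$ requires.

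Otherwise the structure, the use of the identity $\|y^{(i+1)}\|_2^2=\|y^{(i)}\|_2^2-\langle u^{(i)},y^{(i)}\rangle^2(2-\|u^{(i)}\|_2^2)$, the bound $\langle u^{(i)},\bar y^{(i)}\rangle\leq -\Omega(\Delta)$ from the update rule plus noise control, the observation $\langle z,{\rm avg}^{(i)}\rangle\geq0$ since $Az\geq0$, and the final contradiction and normalization step are all faithful to the paper's argument.
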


\begin{proof}
We assume that every sample from $\Lap(1/\epsilon)$ in Step~2b(i) satisfies $\Lap(1/\epsilon)\leq
O\left(\frac{1}{\epsilon}\ln(\frac{d}{\beta})\right)\ll\nu=
       \Theta\left(\frac{1}{\epsilon}d^{2.5}\ln(d)\ln(\frac{1}{\beta\delta})\right)$. Since there are at most $\frac{8}{\Delta^2}\ln(3\sqrt{d}) = O(d^2\log d)$ iterations, this happens with probability at least $1-\beta$. In particular, if the algorithm breaks in Step~2b(i) then we obtained a vector $y$ s.t.\ $|\{i: \langle \bar{a_i},\bar{y} \rangle<-\Delta\}|\leq O(\nu).$

Let $y^i$ denote the vector $y$ as it is after the $i^\text{th}$ update of step~2b(iii), and let $u^i$ be the corresponding vector $u$ s.t.\ $y^{i+1}=y^i-\langle u^i,y \rangle u^i$.
Also let $m^i=|\{a\in A : \langle \bar{a},\bar{y}^i \rangle<-\Delta\}|$, and ${\rm avg}^i=\left(\frac{1}{m^i}\sum_{a\in A: \langle \bar{a},\bar{y}^i \rangle<-\Delta}\bar{a}\right)$. 
By our assumption on the noise magnitude in Step~2b(i), in every iteration that does not break in Step~2b(i) we have that $m^i\geq\Omega(\nu)$. Therefore, by the properties of algorithm \texttt{NoisyAVG}, with probability at least $1-\beta$, in all of those iterations we also have that $u^i={\rm avg}^i + \eta^i$ where $\eta^i$ is a vector whose every coordinate is sampled i.i.d.\ from $\NNN(0,\sigma^2)$ for some $\sigma\leq O\left(\frac{1}{\epsilon m^i}\sqrt{\ln(1/\delta)}\right)\leq O\left(\frac{1}{d^{2.5}\ln(d)\sqrt{\ln(\frac{1}{\beta})}}\right)$. We proceed with the analysis assuming that this is the case.

Let $z$ be any unit vector such that $Az\geq0$.
A standard computation shows that for a random unit vector $y$ it holds that $\langle z,y \rangle\geq\frac{1}{\sqrt{d}}$ with probability at least $1/8$. We continue with the analysis assuming that this is the case. By induction on $i$, we now show that (w.h.p.) for every $i$ we have that
\begin{enumerate}[leftmargin=100pt]
	\item[(1)] $\quad \langle y^i,y^i \rangle \leq \left(1 - \frac{\Delta^2}{8}\right)^i $
	\item[(2)] $\quad \langle y^i,z \rangle \geq \frac{1}{\sqrt{d}}-i\sigma\sqrt{8\ln(1/\beta)} $
\end{enumerate}

The base case follows since $y^0$ is a unit vector, so $\langle y^0,y^0 \rangle=1$, and by our assumption on $y^0$ and $z$ we have that $\langle y^0,z \rangle \geq \frac{1}{\sqrt{d}}$.
Now assume that (1) and (2) hold in the $i^{\text{th}}$ iteration. 
We first analyze $\langle y^{i+1},z \rangle$:
$$
\langle y^{i+1},z \rangle = \langle y^i-\langle y^i,u^i \rangle u^i ,z \rangle = \langle y^i ,z \rangle - \langle y^i ,u^i \rangle \langle  u^i,z \rangle
\geq \frac{1}{\sqrt{d}}-i\sigma\sqrt{8\ln(1/\beta)} - \langle y^i ,u^i \rangle \langle  u^i,z \rangle
$$
Note that 
$$\langle y^i,u^i \rangle = \langle y^i,{\rm avg}^i \rangle + \langle y^i,\eta^i \rangle = \|y^i\|_2\cdot \langle \bar{y}^i,{\rm avg}^i \rangle + \langle y^i,\eta^i \rangle < 
-\|y^i\|_2\cdot\Delta  + \langle y^i,\eta^i \rangle
,$$
\medskip
since $\langle \bar{y}^i,{\rm avg}^i \rangle<-\Delta$ by construction. Observe that $\langle y^i,\eta^i \rangle\sim\NNN(0,\|y^i\|_2^2\sigma^2)$, and hence with probability at least $1-\beta$ we have that 
\begin{equation}\label{eq:yiui}
\langle y^i,u^i \rangle < - \|y^i\|_2 \cdot \Delta  + \sigma\|y^i\|_2\sqrt{2\ln(1/\beta)} 
\leq
-\frac{\Delta}{2}\cdot\|y^i\|_2,
\end{equation}
where the last inequality follows by asserting that $\sigma\leq \frac{\Delta}{2\sqrt{2\ln(1/\beta)}}$. 
So $\langle y^i,u^i \rangle$ is negative. 
Hence, if $\langle  u^i,z \rangle$ is positive then we get that $\langle y^{i+1},z \rangle\geq\langle y^i ,z \rangle$ as desired.
Otherwise, assuming that $\langle  u^i,z \rangle$ is negative, we upper bound $|\langle y^i,u^i \rangle\cdot \langle u^i,z \rangle|$.
First note that $|\langle y^i,u^i \rangle|\leq2$ because $|\langle y^i,{\rm avg}^i \rangle|\leq1$ (since $\|y^i\|_2\leq1$ and $\|{\rm avg}^i\|_2\leq1$) and because $|\langle y^i,\eta^i \rangle|\leq1$ w.p. $1-\beta$.
 We now show that $\langle  u^i,z \rangle$ cannot be ``too negative'':
$\langle u^i,z \rangle = \langle {\rm avg}^i,z \rangle + \langle \eta^i,z \rangle \geq \langle \eta^i,z \rangle$, where the last inequality is because ${\rm avg}^i$ is the average of vectors $a$ satisfying $\langle a,z \rangle\geq0$. Observe that $\langle \eta^i,z \rangle\sim\NNN(0,\sigma^2)$, and therefore, with probability at least $1-\beta$ we have that 
$\langle u^i,z \rangle\geq-\sigma\sqrt{2\ln(1/\beta)}$. In this case, we have that
$\langle y^{i+1},z \rangle \geq \frac{1}{\sqrt{d}}-(i+1)\sigma\sqrt{8\ln(1/\beta)}$. So property (2) holds w.p.\ $1-\beta$.

We next analyze $\langle y^{i+1},y^{i+1} \rangle$:
\begin{eqnarray*}
\langle y^{i+1},y^{i+1} \rangle &=& \langle \left(y^i-\langle u^i,y^i \rangle u^i\right),\left(y^i-\langle u^i,y^i \rangle u^i\right) \rangle \\
&=& \langle y^i,y^i \rangle - 2\langle u^i,y^i \rangle^2 + \langle u^i,y^i \rangle^2\langle u^i,u^i \rangle
\end{eqnarray*}
Note that $\langle u^i,u^i \rangle=\langle {\rm avg}^i,{\rm avg}^i \rangle + 2\langle {\rm avg}^i,\eta^i \rangle + \langle \eta^i,\eta^i \rangle\leq 1 + 2\langle {\rm avg}^i,\eta^i \rangle + \langle \eta^i,\eta^i \rangle$, where the last inequality is because $\langle {\rm avg}^i,{\rm avg}^i \rangle\leq1$ by the triangle inequality (${\rm avg}^i$ is the average of unit vectors). Observe that $\langle {\rm avg}^i,\eta^i \rangle \sim\NNN(0,\|{\rm avg}^i\|_2^2\sigma^2)=\NNN(0,\sigma'^2)$ for some $\sigma'\leq\sigma$, and that $\frac{1}{\sigma^2}\langle \eta^i,\eta^i \rangle$ is distributed according to the chi-squared distribution  with $d$ degrees of freedom. Hence, with probability at least $1-2\beta$ we get that
$\langle u^i,u^i \rangle\leq1+2\sigma\sqrt{2\ln(1/\beta)}+5d\sigma^2\ln(1/\beta)<\frac{3}{2}$. So,
\begin{eqnarray*}
\langle y^{i+1},y^{i+1} \rangle &\leq& \langle y^i,y^i \rangle - \frac{1}{2}\langle u^i,y^i \rangle^2,
\end{eqnarray*}
and by Inequality~(\ref{eq:yiui}),
\begin{eqnarray*}
\langle y^{i+1},y^{i+1} \rangle &\leq& \langle y^i,y^i \rangle - \frac{1}{2}\cdot\left(\frac{\Delta}{2}\cdot\|y^i\|_2\right)^2
=
\|y^i\|_2^2\cdot\left( 1 - \frac{\Delta^2}{8} \right).
\end{eqnarray*}
So property (1) holds with high probability.\\

Using property (1), after $t=\frac{8}{\Delta^2}\ln(3\sqrt{d})$ steps we get that $\|y^t\|_2\leq\frac{1}{3\sqrt{d}}$. Using property (2), asserting that $\sigma\leq \frac{1}{2 t \sqrt{8d\ln(1/\beta)} }$, after $t$ steps we also have that $\langle y^i,z \rangle \geq \frac{1}{2\sqrt{d}}$, so $\frac{\langle y^i,z \rangle}{\|y^t\|_2}>1$, which cannot happen. Thus, assuming all the above mentioned events occur, the algorithm must break and return a vector $y$ s.t.\ $|\{i: \langle \bar{a_i},\bar{y} \rangle<-\Delta\}|\leq O(\nu)$. In addition, for every $i$ we have that 
\begin{eqnarray*}
\langle \bar{y}^i,z \rangle &=& \frac{1}{\|y^i\|_2} \langle y^i,z \rangle \\
&\geq& \left(1 - \frac{\Delta^2}{8}\right)^{-i/2} \left( \frac{1}{\sqrt{d}}-i\sigma\sqrt{8\ln(1/\beta)} \right)\\
&\geq& \frac{1}{2\sqrt{d}}.
\end{eqnarray*}
\end{proof}

\begin{obs}\label{obs:beta8th}
Let $A$ be an $n\times d$ matrix with rows $a_1,\ldots,a_n$, and let $z$ be any unit vector such that $Az\geq0$. 
Let $\nu=     \Theta\left(\frac{1}{\epsilon}d^{2.5}\ln(d)\ln(\frac{1}{\beta\delta})\right)$ as in the proof of Lemma \ref{lem:bfkv98}.
Then, running Step~2 on $A$ results in a vector $y$ such that
\begin{enumerate}
	\item W.p. $1-\beta$ we have that $|\{i: \langle \bar{a}_i,\bar{y} \rangle<-\Delta\}|\leq \nu$.
	\item W.p. $\frac{1}{8}-\beta$ we have that $\langle z,\bar{y} \rangle\geq\frac{1}{2\sqrt{d}}$.
\end{enumerate}
\end{obs}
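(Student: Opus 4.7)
The observation is essentially a restatement of Lemma~\ref{lem:bfkv98} that removes the conditioning on ``whenever the algorithm breaks'' appearing in its first item. The only additional ingredient I need is the outer repetition in Step~2 of Algorithm~\ref{alg:privateLP}, which runs the perceptron improvement block $R=\Theta(\ln(1/\beta))$ times, each time sampling a fresh random unit vector $y^0$; this amplification ensures that with high probability some outer iteration does break in Step~2(b)i.

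I plan to obtain Item~2 as a direct corollary of Item~2 of Lemma~\ref{lem:bfkv98} applied to the first outer iteration alone. With probability at least $\tfrac{1}{8}-\beta$, over the choice of the first $y^0$ and the noises sampled within that iteration, the first iteration breaks in Step~2(b)i with a vector $y$ satisfying $\langle z,\bar{y}\rangle\geq\tfrac{1}{2\sqrt{d}}$. Whenever this event occurs the algorithm exits Step~2 immediately with this $y$, so the vector returned by Step~2 inherits the alignment. Note that I do not attempt to boost the $\tfrac{1}{8}-\beta$ probability using the outer loop, since an earlier iteration might break with small violations but without being aligned with $z$, pre-empting the useful later iterations.

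For Item~1 I plan a two-part argument combined by a union bound. Part~A shows that with probability at least $1-\beta/2$, \emph{some} outer iteration terminates by breaking in Step~2(b)i: Item~2 of Lemma~\ref{lem:bfkv98} guarantees that each individual outer iteration breaks with probability at least $\tfrac{1}{8}-\beta/(3R)$ after rescaling its confidence parameter, so the probability that none of the $R$ iterations breaks is at most $(7/8+\beta/(3R))^R$, which is at most $\beta/2$ once the constant hidden in $R=\Theta(\ln(1/\beta))$ is sufficiently large. Part~B shows that, conditional on some outer iteration breaking, the vector $y$ with which Step~2 exits satisfies $|\{i:\langle\bar{a}_i,\bar{y}\rangle<-\Delta\}|\leq\nu$: this is exactly Item~1 of Lemma~\ref{lem:bfkv98} applied to the iteration that broke, and a union bound over the at most $R$ candidates, each with failure probability $\beta/(2R)$ from its Laplace noise, costs an additional $\beta/2$. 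Combining Parts~A and~B yields the stated bound with total failure probability $\beta$.

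The main obstacle is purely bookkeeping: selecting the confidence parameters handed to each invocation of Lemma~\ref{lem:bfkv98} so that the union bound over all $R$ outer iterations still collapses to a total failure probability of $\beta$. No new mathematical idea beyond Lemma~\ref{lem:bfkv98} and a standard amplification lemma is required.
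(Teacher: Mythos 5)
Your proposal is correct and matches what the paper implicitly intends (the paper states this as an Observation without proof, leaving exactly the amplification you supply): Lemma~\ref{lem:bfkv98} is a statement about a single run of the perceptron‐improvement block, and the $\Theta(\ln(1/\beta))$ outer repetitions in Step~2 are used precisely to make the ``break'' event happen with probability $1-O(\beta)$, so that the conditional Item~1 of the lemma becomes the unconditional Item~1 of the observation (with the small $\nu$ inflation absorbed by the $\Theta$), while Item~2 is \emph{not} boosted because the first break pre‐empts later iterations, which is exactly your reason for citing the lemma only on the first iteration and exactly why the paper keeps the $\tfrac{1}{8}-\beta$ figure.
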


Recall that in Step~3 we delete from $A$ every row $a_i$ such that $\langle \bar{a}_i,\bar{y} \rangle<-\Delta$. The above observation shows that w.h.p.\ the number of deleted rows is small.

\begin{lem}
Suppose that $\rho,\Delta\leq\frac{1}{500d}$. Let $A'$ be obtained from $A$ by one
iteration of the algorithm (one on which the problem was not solved). Let $\rho'$ and $\rho$ denote the roundness of $A'$ and $A$, respectively. Then,
\begin{enumerate}
    \item[(a)] $\rho'\geq\left(1-\frac{1}{498d}\right)\rho$.
    \item[(b)] With probability at least $1/8-\beta$ we have $\rho'\geq\left(1+\frac{7}{75d}\right)\rho$.
\end{enumerate}
\end{lem}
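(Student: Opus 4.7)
The plan is to choose a tuned roundness witness for the transformed matrix and exploit the spectral structure of $M=I+\bar y\bar y^T$. Let $z^*$ be a unit vector with $Az^*\geq 0$ achieving $\min_i\langle \bar{a}_i,z^*\rangle=\rho$, let $\alpha=\langle z^*,\bar y\rangle$ (we may assume $\alpha\geq 0$, since replacing $\bar y$ with $-\bar y$ leaves $M$ unchanged), and let $\gamma_i=\langle \bar{a}_i,\bar y\rangle$, which satisfies $\gamma_i\geq -\Delta$ for the rows surviving Step~3 of the algorithm. Since $M$ has eigenvalues $1$ on $\bar y^\perp$ and $2$ on $\bar y$, one computes $\|Ma_i\|=\|a_i\|\sqrt{1+3\gamma_i^2}$ and $\|z^*+s\bar y\|^2=1+2s\alpha+s^2$ for any $s\in\R$. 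I will use the candidate witness $z'=(z^*+s\bar y)/\sqrt{1+2s\alpha+s^2}$ with $s=(\rho-\alpha)/2$; since $Mz'\propto z^*+(\alpha+2s)\bar y = z^*+\rho\bar y$, a direct calculation gives
\[
\langle \bar{a}_i',z'\rangle \;=\; \frac{\langle \bar{a}_i,z^*\rangle+\rho\gamma_i}{\sqrt{1+3\gamma_i^2}\,\sqrt{D(\alpha)}} \;\geq\; \frac{\rho(1+\gamma_i)}{\sqrt{1+3\gamma_i^2}\,\sqrt{D(\alpha)}},
\]
where $D(\alpha)=1-3\alpha^2/4+\alpha\rho/2+\rho^2/4=1-(3\alpha+\rho)(\alpha-\rho)/4$ satisfies $D(\alpha)\leq 1$ for $\alpha\geq \rho$ and $D(\alpha)\leq 1+\rho^2/3$ otherwise (maximum attained at $\alpha=\rho/3$).

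The remaining task is to bound $g(\gamma):=(1+\gamma)/\sqrt{1+3\gamma^2}$ from below on $[-\Delta,1]$. Its derivative is $(1-3\gamma)/(1+3\gamma^2)^{3/2}$, so $g$ is increasing on $[-\Delta,1/3]$ and decreasing on $[1/3,1]$; comparing $g(-\Delta)=(1-\Delta)/\sqrt{1+3\Delta^2}$ with $g(1)=1$ shows the minimum is $g(-\Delta)\geq 1-\Delta-\tfrac{3}{2}\Delta^2$. For \textbf{part (a)}, combining this with $\sqrt{D(\alpha)}\leq 1+\rho^2/6$ gives
\[
\rho'/\rho \;\geq\; \frac{1-\Delta-(3/2)\Delta^2}{1+\rho^2/6},
\]
and substituting $\Delta,\rho\leq 1/(500d)$ and absorbing the $O(1/d^2)$ lower-order terms yields $\rho'/\rho\geq 1-1/(498d)$.

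For \textbf{part (b)}, invoke \Cref{obs:beta8th} to obtain $\alpha\geq 1/(2\sqrt d)$ with probability at least $1/8-\beta$. In this regime, $-3\alpha^2/4\leq -3/(16d)$ dominates the positive corrections $\alpha\rho/2+\rho^2/4=O(1/d^{3/2})$ in $D(\alpha)$, so $\sqrt{D(\alpha)}\leq 1-3/(32d)+O(1/d^{3/2})$ and $1/\sqrt{D(\alpha)}\geq 1+3/(32d)-O(1/d^{3/2})$. Multiplying by the numerator bound from part~(a) gives $\rho'/\rho\geq (1-1/(498d))(1+3/(32d)-O(1/d^{3/2}))\geq 1+7/(75d)$ for $d$ large enough.

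The main obstacle is tightening the numerical constant in part~(b): the naive combination yields an improvement of roughly $(3/32-1/498)/d\approx 0.0917/d$, slightly below the claimed $7/(75d)\approx 0.0933/d$, so one must either (i) optimize $s$ as a function of $\alpha$ rather than fixing $s=(\rho-\alpha)/2$, or (ii) switch to the alternative witness $z'=M^{-1}z^*/\|M^{-1}z^*\|$ for part~(b) and separately handle rows with large $\gamma_i$ via the Cauchy--Schwarz-type bound $\gamma_i\leq \rho\alpha+\sqrt{(1-\alpha^2)(1-\rho^2)}$ implied by $\langle \bar{a}_i,z^*\rangle\geq \rho$. Feasibility of $z'$ (i.e., $A'z'\geq 0$) is automatic since the displayed lower bound is non-negative.
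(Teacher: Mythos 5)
Your proof takes essentially the same route as the paper: you normalize the rows $a_i'=Ma_i$ via $\|M\bar a_i\|=\sqrt{1+3\gamma_i^2}$, you use the tuned witness $z'\propto z^*+\tfrac{1}{2}(\rho-\alpha)\bar y$ (the paper's $\gamma$ is your $s$), you observe $Mz'\propto z^*+\rho\bar y$, you arrive at the same ratio $\rho(1+\gamma_i)/(\sqrt{1+3\gamma_i^2}\,\sqrt{D(\alpha)})$, and you split into the cases $\alpha$ unrestricted (part (a), where $D\leq1+\rho^2/3$ at the vertex $\alpha=\rho/3$) and $\alpha\geq\frac{1}{2\sqrt d}$ via \Cref{obs:beta8th} (part (b)). Your analysis of $g(\gamma)=(1+\gamma)/\sqrt{1+3\gamma^2}$ over $[-\Delta,1]$ is also exactly what the paper uses.

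The interesting part is the numerical concern you raise for part (b). You are right that the bare combination of your intermediate bounds gives a factor of roughly $1+\left(\tfrac{3}{32}-\tfrac{1}{500}\right)\tfrac{1}{d}-O(d^{-3/2})\approx1+\tfrac{0.0918}{d}$, which is strictly below $1+\tfrac{7}{75d}\approx1+\tfrac{0.0933}{d}$. What you may not have noticed is that the paper's own final displayed inequality
\[
(1-\Delta)\left(1-\tfrac{3\Delta^2}{2}\right)\left(1-\tfrac{\rho^2}{8}-\tfrac{\rho}{8\sqrt d}+\tfrac{3}{32d}\right)\;\geq\;1+\tfrac{7}{75d}
\]
suffers from the identical problem: for $\Delta=\rho=\tfrac{1}{500d}$ the left side is at most $(1-\tfrac{1}{500d})(1+\tfrac{3}{32d})<1+\tfrac{0.092}{d}$ for every $d\geq1$, since $\tfrac{3}{32}-\tfrac{1}{500}<\tfrac{7}{75}$ and all higher-order corrections are negative. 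So the gap you flagged is a genuine (if minor) slip in the paper's constant, not a defect of your approach. Neither of your proposed fixes is necessary: simply state part (b) with a slightly smaller constant (e.g.\ $1+\tfrac{1}{12d}$, which does follow from the same bounds and still suffices for the downstream $T=\Theta(d\ln(1/\rho_0)+\ln(1/\beta))$ bound in \Cref{lem:boundIterations}), or use the next term in the Taylor expansion of $(1-u)^{-1/2}$ around $u=\tfrac{3}{16d}$ to recover a slightly larger constant. You should not have left part (b) open — replacing $7/75$ by a safer constant and completing the arithmetic is all that was needed.
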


\begin{proof}

Let $a_i$, $i=1,\dots,n$ be the rows of $A$ at the beginning of some iteration. 
Let $z$ be a unit vector satisfying $\rho=\min_{i}  \langle \bar{a_i},z \rangle$, and let $y$ be the vector obtained after Step~3 of the algorithm. For $i\in[n]$ let $\Delta_i=\langle \bar{a_i},\bar{y} \rangle$. Recall that some of $A$'s rows might get deleted in Step~3, and let us assume for simplicity that the remaining rows are $1,2,\dots,n'$ for some $n'\leq n$.

 Using these notations, after Step~3, for every $i\in[n']$ it holds that
$$
\langle \bar{a_i},\bar{y} \rangle = \Delta_i \geq -\Delta.
$$

As in the statement of the lemma, let $A'$ be the matrix obtained after the rescaling step. That is, $A'$ is the matrix with $n'$ rows of the form
$$
a'_i=a_i + \langle a_i,\bar{y} \rangle\cdot \bar{y}.
$$
Denote $\gamma=\frac{1}{2}(\rho-\langle \bar{y},z \rangle)$,
and define
$$
z'=z+\gamma\bar{y}.
$$
We have
\begin{equation}\label{eq:rho-tag}
\rho'=
\max_{\substack{x:\|x\|_2=1,\\A'x\geq0}} \;\min_{i\in[n']} \; \langle \bar{a'_i},x \rangle\geq 
\min_{i\in[n']} \; \langle \bar{a'_i},\bar{z'} \rangle 
= \min_{i\in[n']}  \frac{\langle \bar{a'_i}, z' \rangle }{\|z'\|}.
\end{equation}
To lower bound $\rho'$, we first show that $\langle \bar{a'_i}, z' \rangle$ cannot be too small:\footnote{
Note that 
$\|\bar{a}_i + \langle \bar{a}_i,\bar{y} \rangle\cdot \bar{y}\|=\sqrt{\Big\langle\bar{a}_i + \langle \bar{a}_i,\bar{y} \rangle\cdot \bar{y} \;\;,\;\; \bar{a}_i + \langle \bar{a}_i,\bar{y} \rangle\cdot \bar{y}\Big\rangle}=\sqrt{
\langle \bar{a}_i,\bar{a}_i \rangle
+2\langle \bar{a}_i,\bar{y} \rangle^2
+\langle \bar{a}_i,\bar{y} \rangle^2\cdot \langle \bar{y},\bar{y} \rangle}
=\sqrt{
1
+3\langle \bar{a}_i,\bar{y} \rangle^2}$
}
\begin{eqnarray*}
\langle \bar{a'_i}, z' \rangle &=& 
\left\langle \left(\frac{a_i + \langle a_i,\bar{y} \rangle\cdot \bar{y}}{\|a_i + \langle a_i,\bar{y} \rangle\cdot \bar{y}\|}\right) , z' \right\rangle \\[1em] 
&=& \left\langle \left(\frac{\bar{a}_i + \langle \bar{a}_i,\bar{y} \rangle\cdot \bar{y}}{\|\bar{a}_i + \langle \bar{a}_i,\bar{y} \rangle\cdot \bar{y}\|}\right) , z' \right\rangle \\
&=&\frac{ \Big\langle \bar{a}_i + \langle \bar{a}_i,\bar{y} \rangle\cdot \bar{y} \;\;,\;\; z+\gamma\bar{y} \Big\rangle}{\sqrt{
1
+3\langle \bar{a}_i,\bar{y} \rangle^2}}\\[1em] 
&=&\frac{ 
\langle \bar{a}_i, z \rangle
+
\gamma\langle \bar{a}_i,\bar{y} \rangle
+
\langle \bar{a}_i,\bar{y} \rangle\langle \bar{y},z \rangle
+
\gamma\langle \bar{a}_i,\bar{y} \rangle\langle \bar{y},\bar{y} \rangle
}{\sqrt{
1
+3\Delta_i^2}}\\[1em] 
&=&\frac{ 
\langle \bar{a}_i, z \rangle
+
\langle \bar{a}_i,\bar{y} \rangle\langle \bar{y},z \rangle
+
2\gamma\langle \bar{a}_i,\bar{y} \rangle
}{\sqrt{
1
+3\Delta_i^2}}\\[1em] 
&\geq&\frac{ 
\rho
+
\langle \bar{a}_i,\bar{y} \rangle\langle \bar{y},z \rangle
+
2\gamma\langle \bar{a}_i,\bar{y} \rangle
}{\sqrt{
1
+3\Delta_i^2}}\\[1em] 
&=&\frac{ 
\rho
+
\langle \bar{a}_i,\bar{y} \rangle\langle \bar{y},z \rangle
+
(\rho-\langle \bar{y},z \rangle)\langle \bar{a}_i,\bar{y} \rangle
}{\sqrt{
1
+3\Delta_i^2}}
\\[1em] 
&=&\rho\cdot\frac{ 
1
+
\langle \bar{a}_i,\bar{y} \rangle}{\sqrt{
1
+3\Delta_i^2}}
=\rho\cdot\frac{ 
1
+
\Delta_i}{\sqrt{
1
+3\Delta_i^2}}
\geq\rho\cdot\frac{ 
1
-
\Delta}{\sqrt{
1
+3\Delta^2}},
\end{eqnarray*}
where the last inequality follows from from $\Delta_i\in[-\Delta,1]$ and from the fact that the minimum of the function $f(x)=\frac{1+x}{\sqrt{1+3x^3}}$ over $x\in[-\Delta,1]$ is obtained for $x=-\Delta$.
Plugging this in Inequality~(\ref{eq:rho-tag}) we get that
\begin{equation}\label{eq:rho-tag2}
\rho'\geq\rho\cdot\frac{ 
1
-
\Delta}{\sqrt{
1
+3\Delta^2}\cdot\|z'\|}.
\end{equation}
We next show that $\|z'\|$ cannot be too large. First observe that
\begin{eqnarray}\label{eq:z-tag}
\|z'\|^2=\|z+\gamma\bar{y}\|^2=\|z\|^2+\gamma^2\|\bar{y}\|^2+2\gamma \langle \bar{y},z \rangle
=1+\gamma^2+2\gamma\langle \bar{y},z \rangle
=1+\frac{\rho^2}{4}+\frac{\rho}{2}\langle \bar{y},z \rangle-\frac{3}{4}\langle \bar{y},z \rangle^2,
\end{eqnarray}
where the last equality follows from the definition of $\gamma$. Now, viewing $\|z'\|^2$ as a quadratic polynomial in $\langle \bar{y},z \rangle$, we see that it is maximized when $\langle \bar{y},z \rangle=\frac{\rho}{3}$. This shows that 
$$
\|z'\|^2\leq1+\frac{\rho^2}{3}.
$$
Plugging this into Inequality~(\ref{eq:rho-tag2}) we get that
$$
\rho'\geq\rho\cdot\frac{ 
1
-
\Delta}{\sqrt{
1
+3\Delta^2}\cdot\sqrt{1+\frac{\rho^2}{3}}}.
$$
Using the elementary inequality $\frac{1}{\sqrt{1+x}}\geq1-\frac{x}{2}$ for all $x>-1$ we get
\begin{eqnarray*}
\rho'&\geq& \rho(1-\Delta)\left(1-\frac{3\Delta^2}{2}\right)\left(1-\frac{\rho^2}{6}\right)\\[1em]
&\geq&\rho\left(1-\Delta-\frac{3\Delta^2}{2}-\frac{\rho^2}{6}\right)\\[1em]
&\geq&\rho\left(1-\frac{1}{498d}\right),
\end{eqnarray*}
where the last inequality follows since $\Delta,\rho\leq\frac{1}{500d}$. This establishes Part (a) of the lemma: $\rho'$ is never too small compared to $\rho$. 

Next, we establish Part (b) of the lemma. That is, we need to show that with constant probability $\rho'$ is in fact noticeably larger than $\rho$. To this end, recall that by Observation~\ref{obs:beta8th}, with probability at least $\frac{1}{8}-\beta$ it holds that $\langle \bar{y},z \rangle\geq\frac{1}{2\sqrt{d}}$. In this case, by Equality~(\ref{eq:z-tag}) we have
$$
\|z'\|^2=1+\frac{\rho^2}{4}+\frac{\rho}{2}\langle \bar{y},z \rangle-\frac{3}{4}\langle \bar{y},z \rangle^2\leq 
1+\frac{\rho^2}{4}+\frac{\rho}{4\sqrt{d}}-\frac{3}{16d},
$$
where the last inequality is due to the fact that the polynomial $p(x)=1+\frac{\rho^2}{4}+\frac{\rho}{2}x-\frac{3}{4}x^2$ is decreasing for every $x\geq\frac{\rho}{3}$, together with the assumption that $\langle \bar{y},z \rangle\geq\frac{1}{2\sqrt{d}}>\frac{1}{3\cdot 32d}\geq\frac{\rho}{3}$. Plugging this into Inequality~(\ref{eq:rho-tag2}) and using again the elementary inequality mentioned above, we have
\begin{eqnarray*}
\rho'&\geq& \rho(1-\Delta)\left(1-\frac{3\Delta^2}{2}\right)\left(1-\frac{\rho^2}{8}-\frac{\rho}{8\sqrt{d}}+\frac{3}{32d}\right)\\[1em]
&\geq&\rho\left(1+\frac{7}{75d}\right),
\end{eqnarray*}
where the last inequality follows from the assumption that $\Delta,\rho\leq\frac{1}{500d}$.
This proves both parts of the lemma.
\end{proof}

We can now bound the number of iterations. Let $A_0$ denote the initial input matrix and let $\rho_0 = \rho(A_0)$ denote the roundness of its feasible region. 

\begin{lem}\label{lem:boundIterations}
Consider applying Algorithm~\ref{alg:privateLP} on an input matrix $A$ satisfying $\rho(A)\geq\rho_0$.
With probability at least $1-O(\beta)$ the perceptron phase succeeds in at most $T=\Omega(d\cdot\ln(1/\rho_0)+\ln(1/\beta))$ iterations.  
\end{lem}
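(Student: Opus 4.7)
The plan is to track the multiplicative evolution of the roundness $\rho^{(t)} := \rho(A)$ across the outer iterations of Algorithm~\ref{alg:privateLP}, and to show that after $T = \Omega(d \ln(1/\rho_0) + \ln(1/\beta))$ iterations the roundness exceeds the threshold $\Delta = 1/(500d)$, at which point Lemma~\ref{lem:perceptron}(2) guarantees that the perceptron phase returns a valid solution with probability at least $1-\beta$.

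By the previous lemma (whose per-iteration high-probability guarantees I union-bound over all $T$ iterations by rescaling $\beta$ to $\beta/T$), each iteration $t$ satisfies: (a) deterministically, $\rho^{(t+1)} \ge (1 - 1/(498d))\,\rho^{(t)}$; and (b) independently across $t$ and with probability at least $1/8 - \beta \ge 1/9$, $\rho^{(t+1)} \ge (1 + 7/(75d))\,\rho^{(t)}$. Let $X_t \in \{0,1\}$ indicate that case (b) occurs in iteration $t$. Since the randomness used in the improvement phase is fresh in each outer iteration, the $X_t$'s are independent Bernoullis of mean at least $1/9$, so a standard Chernoff bound yields $\Pr[\sum_t X_t < T/18] \le \exp(-\Omega(T)) \le \beta$ whenever $T = \Omega(\ln(1/\beta))$.

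Conditioning on $G := \sum_t X_t \ge T/18$ and taking logarithms, I plan to use the bounds $\ln(1 + 7/(75d)) \ge 7/(80d)$ and $\ln(1 - 1/(498d)) \ge -1.1/(498d)$ (both valid for $d$ sufficiently large) to obtain
\[
\ln \rho^{(T+1)} - \ln \rho_0 \;\ge\; \tfrac{T}{18}\cdot \tfrac{7}{80d} \;-\; \tfrac{17T}{18}\cdot \tfrac{1.1}{498d} \;\ge\; \tfrac{cT}{d}
\]
for an absolute constant $c > 0$; that is, the positive log-growth from the constant fraction of good iterations outstrips the worst-case loss from the remaining iterations. Taking $T$ a sufficiently large multiple of $d\ln(1/\rho_0) + \ln(1/\beta)$ so that $cT/d \ge \ln(1/(500d\rho_0))$ therefore forces $\rho^{(t^\star)} \ge \Delta$ for some $t^\star \le T$. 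At iteration $t^\star$, Lemma~\ref{lem:perceptron}(2) applies and the perceptron phase succeeds with probability $\ge 1-\beta$. A final union bound over the per-iteration guarantees of the previous lemma, the Chernoff event, and the perceptron-phase success event yields the claimed overall failure probability $O(\beta)$.

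The main obstacle I anticipate is the arithmetic verification that case (b)'s positive log-growth dominates case (a)'s negative log-growth with the specific constants $7/75$ and $1/498$ inherited from the previous lemma; once that is settled, the rest is a standard Chernoff-plus-multiplicative-progress analysis, together with careful bookkeeping of independence across iterations and of the failure probabilities being union-bounded.
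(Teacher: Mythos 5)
Your approach is essentially the same as the paper's: track the multiplicative evolution of the roundness via indicator variables $X_t$, apply a concentration bound to show a constant fraction of iterations achieve the $(1+\frac{7}{75d})$ improvement, and conclude that after $T = \Theta(d\ln(1/\rho_0) + \ln(1/\beta))$ iterations the roundness exceeds $\Delta = 1/(500d)$ so that Lemma~\ref{lem:perceptron}(2) applies. The arithmetic you flag as the main risk does check out: with the bounds $\ln(1 + \frac{7}{75d}) \geq \frac{7}{80d}$ and $\ln(1 - \frac{1}{498d}) \geq -\frac{1.1}{498d}$, the net drift $\frac{1}{18}\cdot\frac{7}{80d} - \frac{17}{18}\cdot\frac{1.1}{498d} \approx \frac{0.0028}{d}$ is positive, and the paper's own bookkeeping is of the same flavor (with slightly different thresholds).

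However, there is one genuine flaw in your reasoning that the paper explicitly addresses. You assert that the $X_t$'s are independent Bernoullis because the random unit vector $y$ is drawn afresh each outer iteration. This is not correct: the matrix $A$ and the rescaling matrix $B$ entering iteration $t$ depend on the outcomes of iterations $1,\dots,t-1$ (rows are deleted in Step~3, and $A,B$ are rescaled in Step~6), so the conditional distribution of $X_t$ is history-dependent. What holds is only the conditional bound $\Pr[X_t = 1 \mid X_1,\dots,X_{t-1}] \geq 1/8 - \beta$ for every history. The paper notes exactly this (``Strictly speaking, these random variables are not i.i.d.'') and applies the Azuma--Hoeffding inequality rather than a plain Chernoff bound. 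Your conclusion is salvageable either by doing the same, or by observing that $\sum_t X_t$ stochastically dominates an independent $\mathrm{Binomial}(T, 1/8-\beta)$ and applying Chernoff to the latter; but as written, the independence claim is wrong and the step ``standard Chernoff bound'' is a gap.
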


\begin{proof}
It suffices to show that with probability at least $1-\beta$, in at most $T$ iterations we  have that $\rho\geq\frac{1}{500d}$. Let $X_i$ be a random variable
for the $i$'th iteration, with value $0$ if $\rho$ grows by a factor of $(1 + \frac{7}{75d})$ or more
and value 1 otherwise. Strictly speaking, these random variables are not i.i.d. Still, for every fixture of $X_1,\dots,X_{i-1}$ it holds that $\Pr[X_i|X_1,\dots,X_{i-1}]\leq\frac{7}{8}+\beta$. Thus, by the Azuma–Hoeffding inequality we have
$$
\Pr\left[\sum_{i=1}^T X_i\geq T\left(\frac{7}{8}+\beta\right)+z\sqrt{T}\right]\leq\exp\left(-\frac{z^2}{2}\right).
$$
Choosing $z=\frac{\sqrt{T}}{256}$ and asserting that $\beta\leq\frac{1}{256}$ this gives
$$
\Pr\left[\sum_{i=1}^T X_i\geq \frac{7T}{8}+\frac{T}{128}\right]\leq\exp\left(-\frac{T}{2\cdot 256^2}\right).
$$
This probability is at most $\beta$ whenever $T\geq\Omega(\ln(1/
\beta))$. 
Analyzing $\rho_T$ in the case that $\sum_{i=1}^T X_i\leq \frac{7T}{8}+\frac{T}{128}$ we have
\begin{eqnarray}
\rho_T&\geq&\rho_0\left(1+\frac{7}{75d}\right)^{\frac{T}{8}-\frac{T}{128}}\left(1-\frac{1}{498d}\right)^{\frac{7T}{8}+\frac{T}{128}}\\
&\geq&\rho_0\cdot e^{\frac{7}{150d}\cdot\frac{15T}{128}}
\cdot e^{-\frac{2}{498d}\cdot\frac{113T}{128}}\\
&\geq&\rho_0\cdot e^{\frac{T}{1000d}}.
\end{eqnarray}
This is at least $\frac{1}{500d}$ whenever $T=\Omega(d\cdot\ln(1/\rho_0))$. In summary, with probability at least $1-\beta$, in at most $T=\Omega(d\cdot\ln(1/\rho_0)+\ln(1/\beta))$ iterations, $\rho$ grows to at least $\frac{1}{500d}$, at which point the perceptron phase succeeds with probability at least  $1-\beta$.
\end{proof}

\begin{theorem}
Let $n> N=\Theta\left(\frac{d^{3.5}}{\epsilon}\ln(d)\ln(\frac{1}{\beta})\ln(\frac{1}{\beta\delta})\ln(\frac{1}{\rho_0})\right)$ and denote $T=\Theta(d\cdot\ln(1/\rho_0)+\ln(1/\beta)) = O(d\ln(1/\rho_0)\ln(1/\beta))$.
Let $A_0$ be an $n\times d$ matrix with rows $a_1,\ldots,a_n$ such that $\rho(A)\geq\rho_0$, and consider applying Algorithm \ref{alg:privateLP} on $A_0$. Then with probability at least $1-O(\beta\cdot T)$ the algorithm halts during its first $T$ iterations and returns a vector $x^*$ such that $|\{i: \langle \bar{a}_i,\bar{x}^* \rangle< \frac{\Delta}{24}\cdot 2^{-T} \}|\leq N$.
\end{theorem}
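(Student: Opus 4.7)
The plan is to stitch together the three building blocks already established — the perceptron-phase analysis (Lemma~\ref{lem:perceptron}), the roundness-boosting analysis (culminating in Lemma~\ref{lem:boundIterations}), and the deletion bound from Observation~\ref{obs:beta8th} — and then translate the guarantees obtained on the \emph{rescaled} matrix back to the original matrix $A_0$. First, I will invoke Lemma~\ref{lem:boundIterations} to conclude that with probability $\geq 1-O(\beta)$, within at most $T = \Theta(d\ln(1/\rho_0)+\ln(1/\beta))$ outer iterations the current matrix reaches roundness at least $\Delta = 1/(500d)$, at which point (by Lemma~\ref{lem:perceptron} part 2) the perceptron phase of that iteration succeeds with probability $\geq 1-\beta$ and halts in Step~5(a), returning $Bx$ for some nonzero $x$ with $|\{i:\langle \bar a_i^{(t)},\bar x\rangle < \Delta/24\}|\le 2\zeta$ on the then-current rescaled matrix $A_t$. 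Here $A_t$ denotes $A_0$ after $t\le T$ rescalings; equivalently, its rows are $a_i^{(t)} = a_i \cdot B$ where $B = M_1 M_2 \cdots M_t$ and $M_j = I+\bar y_j \bar y_j^{T}$. By a further union bound over the $T$ iterations, Observation~\ref{obs:beta8th} gives that with probability $\geq 1-O(\beta T)$, Step~3 of each outer iteration deletes at most $\nu=\Theta(\tfrac{d^{2.5}}{\epsilon}\log d\log(1/\beta\delta))$ rows (after adjusting confidence parameters by a factor of $T$ inside the log).

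Next comes the key translation step. Each $M_j$ is symmetric positive definite with eigenvalues $\{1,\ldots,1,2\}$, hence for any vector $v$ one has $\|v\|_2 \leq \|M_j v\|_2 \leq 2\|v\|_2$. Iterating $t\le T$ times gives the two inequalities
\begin{equation*}
\|a_i^{(t)}\|_2 \;\geq\; \|a_i\|_2 \qquad\text{and}\qquad \|Bx\|_2 \;\leq\; 2^t\|x\|_2 \;\leq\; 2^T\|x\|_2.
\end{equation*}
For every surviving row $i$ for which the perceptron phase did not fail, i.e.\ $\langle \bar a_i^{(t)},\bar x\rangle \geq \Delta/24$, we therefore obtain
\begin{equation*}
\langle \bar a_i,\overline{Bx}\rangle \;=\; \frac{a_i(Bx)}{\|a_i\|\,\|Bx\|} \;=\; \frac{a_i^{(t)} x}{\|a_i\|\,\|Bx\|} \;\geq\; \frac{\Delta}{24}\cdot\frac{\|a_i^{(t)}\|\,\|x\|}{\|a_i\|\,\|Bx\|} \;\geq\; \frac{\Delta}{24}\cdot 2^{-T},
\end{equation*}
which is exactly the margin promised by the theorem. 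Setting $x^* = Bx$, the rows of $A_0$ that can fail the conclusion $\langle\bar a_i,\bar x^*\rangle \geq \Delta 2^{-T}/24$ are precisely: (i) rows deleted in one of the at most $T$ executions of Step~3, of which there are at most $T\nu$; and (ii) rows that survived but were violated by $x$ on the final matrix $A_t$, of which there are at most $2\zeta = \Theta(\tfrac{d^2}{\epsilon}\log(1/\beta\delta))$. Summing and plugging in $T = \Theta(d\ln(1/\rho_0)+\ln(1/\beta))$ gives total bad count at most $T\nu + 2\zeta = O(N)$, matching the statement.

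The main obstacle is really just the translation step, in which one has to be careful that the rescaling is applied by \emph{right-multiplication} of the rows (so $a_i^{(t)} = a_i B$) while $x^*=Bx$ is obtained by \emph{left-multiplication} of a column vector, and to verify that the only norm distortion going against us is in the denominator $\|Bx\|\le 2^T\|x\|$ — the numerator is preserved exactly as $a_i^{(t)} x$, and the normalization of $a_i$ only helps because rescaling never shrinks row norms. A secondary bookkeeping point is the union bound: the algorithm calls Lemma~\ref{lem:bfkv98} once per iteration and Lemma~\ref{lem:perceptron} once in the iteration where the perceptron phase succeeds, so the $O(\beta T)$ overall failure probability absorbs all per-iteration $\beta$'s and the $\log T$ overhead that this introduces inside $\nu$ and $\zeta$ is swallowed into the $\log(1/\beta\delta)$ factors already present in the definition of $N$.
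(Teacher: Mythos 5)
Your proof is correct and follows essentially the same structure as the paper's: Lemma~\ref{lem:boundIterations} for the iteration bound, Lemma~\ref{lem:perceptron} for the perceptron-phase guarantee, Observation~\ref{obs:beta8th} for the deletion count, and then a translation of the margin on the rescaled matrix back to $A_0$ at a $2^{-T}$ loss. The only stylistic difference is in the translation step: the paper bounds the per-rescaling loss via an explicit inner-product computation showing $\|\bar a_i + \langle \bar a_i,\bar y\rangle\bar y\|=\sqrt{1+3\langle\bar a_i,\bar y\rangle^2}$ and $\|B_t\bar v\|\le 2$, then chains these $t$ times, whereas you obtain the same two inequalities $\|a_i B\|\ge\|a_i\|$ and $\|Bx\|\le 2^T\|x\|$ in one shot from the eigenvalue structure of $M_j=I+\bar y_j\bar y_j^T$ (eigenvalues $\{1,\ldots,1,2\}$) and submultiplicativity of the spectral norm — a marginally cleaner packaging of the same estimate, not a different route.
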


\begin{proof}
By Lemma~\ref{lem:boundIterations}, with probability at least $1-O(\beta)$ the perceptron phase succeeds during the first $T$ iterations of the algorithm. When it succeeds, it returns a vector $Bx^*$ such that there are at most $2\zeta=\Theta\left(\frac{d^2}{\epsilon}\ln(\frac{1}{\beta\delta})\right)$ rows $a_i$ in the {\em current matrix} for which
$\langle \bar{a}_i,\bar{x}^* \rangle<\frac{\Delta}{24}$. There are two differences between the current matrix and the original input matrix:
\begin{enumerate}
    \item Throughout the execution, some rows are deleted from the matrix (in Step~3). By Observation~\ref{obs:beta8th}, with probability at least $1-\beta\cdot T$, in each iteration we delete at most $\nu=
       \Theta\left(\frac{1}{\epsilon}d^{2.5}\ln(d)\ln(\frac{1}{\beta\delta})\right)$ rows from the matrix. So overall there could be at most $\nu T$ rows that are missing from the current matrix.

\item The rows of the input matrix get rescaled with each iteration (in Step~6).
\end{enumerate}

To address Item~2 above, we will show that if $x^*$ satisfies a ``rescaled constraint'' from the current matrix then $Bx^*$ satisfies the corresponding ``original'' constraint from the input matrix (albeit with an exponentially smaller ``wiggle room''). Let $A_0$ denote the input matrix and let $A_t$ denote the matrix obtained after the $t$th rescaling step. Also let $B_t=\left(I+\bar{y}_t\bar{y}_t^T\right)$, where $y_t$ is the vector used for the $t$th rescaling step. Note that with these notations, the matrix B resulting from the $t$th rescaling iteration (in Step~6) can be written as $B=B_1B_2\cdots B_t$.

Now fix $t$, let $a$ be a row from $A_t$, and and let $a^{\rm rescaled}=a+\langle a,\bar{y}_t \rangle\bar{y}_t$ be its corresponding ``rescaled row'' in $A_{t+1}$. 
First observe that $\|a^{\rm rescaled}\|_2\geq\|a\|_2$, as can be seen by the following calculation:

\begin{eqnarray*}
\langle a^{\rm rescaled},a^{\rm rescaled} \rangle
&=& \Big\langle a+\langle a,\bar{y}_t \rangle\bar{y}_t\;,\;a+\langle a,\bar{y}_t \rangle\bar{y}_t \Big\rangle\\
&=& \langle a,a \rangle + 2\cdot \langle a,\bar{y}_t \rangle^2
+ \langle a,\bar{y}_t \rangle^2 \cdot \langle \bar{y}_t,\bar{y}_t \rangle^2\\
&=& \langle a,a \rangle + 3\cdot \langle a,\bar{y}_t \rangle^2\\
&\geq& \langle a,a \rangle.
\end{eqnarray*}

We next show that for any vector $v\in\R^d$ it holds that 
$\left\langle a, B_t v\right\rangle=\left\langle a^{\rm rescaled},v\right\rangle$. To this end, note that $B_t$ is the matrix whose $\ell$th row is 
    \begin{eqnarray*}
    b_{\ell}&=&   e_{\ell}+\bar{y}_t[\ell]\cdot\bar{y}_t.
    \end{eqnarray*}
    Thus, $B_t \, v$ is the vector whose $\ell$th entry is
    $$
    v[\ell]+\bar{y}_t[\ell]\cdot\langle \bar{y}_t,v \rangle.
    $$
    And so,
    \begin{eqnarray*}
    \langle a , B_t\, v \rangle &=& 
    a[1]\Big(v[1]+\bar{y}_t[1]\cdot\langle \bar{y}_t,v \rangle\Big)+\dots+
    a[d]\Big(v[d]+\bar{y}_t[d]\cdot\langle \bar{y}_t,v \rangle\Big)\\[1em]
    &=&\langle a,v \rangle+
    \langle a,\bar{y}_t \rangle\cdot\langle \bar{y}_1,v \rangle\\[1em]
    &=&\left\langle \Big(a+\langle a,\bar{y}_t \rangle\bar{y}_t\Big) ,v \right\rangle
    =
    \langle a^{\rm rescaled},v\rangle.
    \end{eqnarray*}
Furthermore,
\begin{eqnarray*}
 \left\langle \frac{a}{\|a\|_2} , \frac{B_t\, v}{\|B_t\, v\|_2} \right\rangle 
    &=&\left\langle \frac{a}{\|a\|_2} , \frac{B_t\, \bar{v}}{\|B_t\, \bar{v}\|_2} \right\rangle \\[1em]
    &=& \frac{ \langle a^{\rm rescaled},\bar{v}\rangle}{\|a\|_2\cdot\|B_t\, \bar{v}\|_2}\\[1em]
    &\geq&
    \frac{ \langle a^{\rm rescaled},\bar{v}\rangle}{\|a^{\rm rescaled}\|_2\cdot\left(\|\bar{v}\|_2 + \langle \bar{y}_t,\bar{v} \rangle \|\bar{y}_t\|\right)}\\[1em]
    &\geq&
    \frac{ \langle a^{\rm rescaled},\bar{v}\rangle}{\|a^{\rm rescaled}\|_2\cdot2}
    = \frac{1}{2}\cdot \left\langle \frac{a^{\rm rescaled}}{\|a^{\rm rescaled}\|_2} , \frac{ v}{\|v\|_2} \right\rangle   
\end{eqnarray*}

This shows that every rescaling operation loses at most a factor 2 in the resulting ``wiggle room''. Specifically, let $x^*$ be the vector identified by a successful iteration of the perceptron phase, say w.r.t.\ the matrix $A_t$ for some $t\leq T$. Let $a^{(t)}$ be a row in $A_t$ for which $\langle a^{(t)},\bar{x}^* \rangle\geq \frac{\Delta}{24}$, and let $a^{(0)},a^{(1)},\dots,a^{(t-1)}$ denote the corresponding ``unscaled'' rows in $A_0,A_1,\dots,A_{t-1}$, respectively. Then,

\begin{eqnarray*}
\left\langle \frac{a^{(0)}}{\|a^{(0)}\|_2} , \frac{B x^*}{\|B x^*\|_2} \right\rangle 
    &=& \left\langle \frac{a^{(0)}}{\|a^{(0)}\|_2} , \frac{B_1\cdots B_t x^*}{\|B_1\cdots B_t x^*\|_2} \right\rangle \\
    &\geq&
     \frac{1}{2}\cdot \left\langle \frac{a^{(1)}}{\|a^{(1)}\|_2} , \frac{B_2\cdots B_t x^*}{\|B_2\cdots B_t x^*\|_2} \right\rangle \geq\cdots
     \geq
     \frac{1}{2^t}\cdot \left\langle \frac{a^{(t)}}{\|a^{(t)}\|_2} , \frac{x^*}{\|x^*\|_2} \right\rangle\\
     &\geq&2^{-T}\cdot \frac{\Delta}{24}.
\end{eqnarray*}
\end{proof}

\begin{theorem}
Denote 
$k=\Theta\left(d^{3}\ln(d)\ln^2(\frac{1}{\beta})\ln(\frac{1}{\rho_0})\right)$. 
Algorithm~\ref{alg:privateLP} satisfies $\left(\hat{\epsilon},2k\delta\right)$-differential privacy for 
$\hat{\epsilon}=\sqrt{2k\ln(\frac{1}{k\delta})}\cdot\epsilon + 2k\epsilon^2$.
\end{theorem}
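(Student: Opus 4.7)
The plan is to enumerate the $(\epsilon,\delta)$-DP data accesses performed by Algorithm~\ref{alg:privateLP} and apply the advanced composition theorem of \cite{DRV10}. Every interaction with the input occurs either through Laplace noise added to a count of sensitivity $1$ (which is $(\epsilon,0)$-DP), or through a call to \texttt{NoisyAVG}, which by \Cref{thm:NoisyAVG} is $(\epsilon,\delta)$-DP; thus every such invocation is $(\epsilon,\delta)$-DP.

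First I would count the DP invocations performed within one outer iteration (Steps~2--6). The perceptron improvement phase consists of $\Theta(\ln(1/\beta))$ restarts of an inner loop that runs at most $\tfrac{8}{\Delta^2}\ln(3\sqrt{d}) = O(d^2\log d)$ times, each iteration performing one Laplace query (Step~2(b)i) and one \texttt{NoisyAVG} call (Step~2(b)ii). The perceptron phase contributes an additional $O(d^2)$ DP invocations (Steps~5(a) and~5(b)). Summing, one outer iteration invokes DP mechanisms at most $O(d^2\log(d)\ln(1/\beta))$ times.

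Next I would multiply by the number of outer iterations, which is bounded in Step~7 (and justified by \Cref{lem:boundIterations}) by $T=\Theta(d\ln(1/\rho_0)+\ln(1/\beta)) = O(d\ln(1/\rho_0)\ln(1/\beta))$. This yields a total count of at most
\[
T\cdot O(d^2\log(d)\ln(1/\beta)) = O\bigl(d^3\ln(d)\ln^2(1/\beta)\ln(1/\rho_0)\bigr) = k
\]
adaptively-chosen $(\epsilon,\delta)$-DP invocations.

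Finally, I would apply adaptive advanced composition \citep{DRV10}: composing $k$ mechanisms each $(\epsilon,\delta)$-DP yields an $(\hat\epsilon, k\delta+\delta')$-DP algorithm with $\hat\epsilon=\sqrt{2k\ln(1/\delta')}\cdot\epsilon + 2k\epsilon^2$. Setting $\delta'=k\delta$ produces exactly the claimed $(\hat\epsilon,2k\delta)$-DP guarantee. There is essentially no further obstacle; the only subtle point is that later queries are chosen adaptively based on earlier noisy answers, but this is the exact setting in which advanced composition applies, so no additional argument is needed.
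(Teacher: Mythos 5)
Your proposal is correct and follows the same approach as the paper: enumerate the $(\epsilon,\delta)$-DP data accesses, count them to be $O(k)$, and invoke advanced composition \citep{DRV10} with $\delta'=k\delta$. In fact your accounting is slightly more careful than the paper's one-sentence proof, which mentions only the \texttt{NoisyAVG} calls and omits the Laplace queries in Steps~2(b)i and~5(a); you correctly note that the Laplace mechanism is $(\epsilon,0)$-DP (hence $(\epsilon,\delta)$-DP), so it composes identically, and the constant factor from including both types of queries is absorbed into the $\Theta$ defining $k$.
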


\begin{proof}
Follows from standard composition theorems for differential privacy \citep{DRV10}, together with the fact that throughout the execution the algorithm accesses its input database via at most $k$ applications %
of \texttt{NoisyAVG}, each of which satisfies $(\epsilon,\delta)$-differential privacy.
\end{proof}

By appropriately rescaling the parameters $\beta,\epsilon,\delta$ in Algorithm~\ref{alg:privateLP} we get the following result.

\begin{theorem}
Denote $T=\Theta(d\cdot\ln(1/\rho_0)+\ln(1/\beta))$. 
There exists a computationally efficient $(\epsilon,\delta)$-differentially private algorithm whose input is an $n\times d$ matrix $A$ and its output is a vector $x^*\in\R^d$ such that the following holds. Let $a_1,\ldots,a_n$ denote the rows of $A$. If $\rho(A)\geq\rho_0$, then with probability at least $1-\beta$ the returned vector $x^*$ satisfies 
\begin{eqnarray*}
\left|\left\{i: \langle \bar{a}_i,\bar{x}^* \rangle<\frac{\Delta}{24}\cdot 2^{-T}\right\}\right|&\leq& O\left(
\frac{d^5}{\epsilon}
 \ln^{1.5}(d) \ln^2\left(\frac{d\ln(\frac{1}{\rho_0})}{\beta}\right)  \ln^{1.5}\left(\frac{1}{\rho_0}\right) \ln\left(\frac{d\ln(\frac{1}{\rho_0})}{\beta\delta}\right)
 \sqrt{\ln\frac{1}{\delta}}
 \right)\\
 &=&\frac{d^5}{\epsilon}\cdot{\rm polylog}\left(d,\frac{1}{\beta},\frac{1}{\delta},\frac{1}{\rho_0}\right).
\end{eqnarray*}
\end{theorem}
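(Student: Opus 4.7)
The plan is to derive this theorem from the two preceding theorems — the utility theorem for Algorithm~\ref{alg:privateLP} and the privacy theorem stating that Algorithm~\ref{alg:privateLP} is $(\hat{\epsilon},2k\delta_0)$-DP with $k=\Theta(d^3\ln(d)\ln^2(1/\beta_0)\ln(1/\rho_0))$ and $\hat\epsilon=\sqrt{2k\ln(1/(k\delta_0))}\,\epsilon_0+2k\epsilon_0^2$ — simply by rescaling the \emph{internal} parameters $(\epsilon_0,\delta_0,\beta_0)$ passed to Algorithm~\ref{alg:privateLP} so that the composed privacy loss equals the target $(\epsilon,\delta)$ and so that the overall failure probability is at most $\beta$.

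First I will set $\beta_0=\Theta(\beta/T)$ with $T=\Theta(d\ln(1/\rho_0)+\ln(1/\beta))$, so that the union bound over the $O(T)$ ``high-probability'' events inside the preceding utility theorem collapses to probability at most $\beta$. Substituting this $\beta_0$ into the formula $k=\Theta(d^3\ln d \ln^2(1/\beta_0)\ln(1/\rho_0))$ gives $k=d^3\cdot\mathrm{polylog}(d,1/\beta,1/\rho_0)$; in particular $k$ stays polylogarithmic in all the relevant quantities. Next, to drive the composed $(2k\delta_0)$ down to the target $\delta$, I set $\delta_0=\delta/(2k)$. For the $\hat\epsilon$ equation I will use the standard advanced-composition inversion: choosing $\epsilon_0=\Theta\!\bigl(\epsilon/\sqrt{k\ln(1/\delta)}\bigr)$ makes the leading term $\sqrt{2k\ln(1/(k\delta_0))}\epsilon_0\le \epsilon/2$, while the quadratic correction $2k\epsilon_0^2$ is of order $\epsilon^2/\ln(1/\delta)\le \epsilon/2$ for $\epsilon\le 1$, giving $\hat\epsilon\le\epsilon$ as required.

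Having fixed the internal parameters, I will plug them back into the utility bound of the preceding theorem, which states that the number of violated constraints is at most
\[
N=\Theta\!\Bigl(\tfrac{d^{3.5}}{\epsilon_0}\ln(d)\ln(1/\beta_0)\ln(1/(\beta_0\delta_0))\ln(1/\rho_0)\Bigr).
\]
The key algebraic observation is that $1/\epsilon_0=\Theta(\sqrt{k\ln(1/\delta)}/\epsilon)=\Theta(d^{1.5}\sqrt{\ln(1/\delta)}/\epsilon)\cdot\mathrm{polylog}$, so the $d^{3.5}/\epsilon_0$ factor becomes $d^{5}\sqrt{\ln(1/\delta)}/\epsilon$ times polylog terms. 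Unpacking $\ln(1/\beta_0)=\ln(T/\beta)$ and $\ln(1/(\beta_0\delta_0))=\ln(Tk/(\beta\delta))$ and multiplying through produces precisely the $\ln^{1.5}(d)$, $\ln^2(d\ln(1/\rho_0)/\beta)$, $\ln^{1.5}(1/\rho_0)$, $\ln(d\ln(1/\rho_0)/(\beta\delta))$, and $\sqrt{\ln(1/\delta)}$ factors in the claimed bound.

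The only nontrivial step is the algebraic bookkeeping that converts the preceding bounds into the precise polylog exponents stated — in particular, tracking how $k\approx d^3\,\mathrm{polylog}$ contributes one factor of $d^{1.5}$ and one $\sqrt{\ln(1/\delta)}$ through the advanced-composition inversion, and verifying that $\ln(1/\beta_0)$ and $\ln(1/(\beta_0\delta_0))$ only add polylogarithmic overhead. There is no new probabilistic or geometric content here: correctness (utility, privacy, running time) is inherited directly from the two preceding theorems, and the algorithm is the same Algorithm~\ref{alg:privateLP} instantiated with the rescaled parameters.
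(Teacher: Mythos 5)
Your proposal is correct and matches the paper's own (extremely terse) argument: the paper simply states ``By appropriately rescaling the parameters $\beta,\epsilon,\delta$ in Algorithm~\ref{alg:privateLP} we get the following result,'' and your explicit choices $\beta_0=\Theta(\beta/T)$, $\delta_0=\delta/(2k)$, $\epsilon_0=\Theta(\epsilon/\sqrt{k\ln(1/\delta)})$, together with the bookkeeping that converts $d^{3.5}/\epsilon_0$ into $d^5\sqrt{\ln(1/\delta)}/\epsilon$ via $\sqrt{k}\approx d^{1.5}\cdot\mathrm{polylog}$, is exactly what that sentence is eliding. The only unstated (and minor) point is that the hypothesis $n>N$ in the preceding utility theorem is without loss of generality, since when $n\le N$ the claimed bound holds trivially.
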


\section{When the LP is not fully dimensional}
\label{app:fulld}

\begin{lem} \label{lem:cramer}
Consider the linear program
\[ b^T y = -1 \]
\[ A^T y \ge 0 \]
\[ y_i\ge 0, \;\;  \forall i\in I \subset [m]  \]
where $b$ is an integer vector in ${\cal R}^m$,
$A$ is an $m\times d$ integer matrix, and $y$ is an $m$-dimensional vector of variables.
We assume that $m\gg d$.
Let $U$ be an upper bound on the absolute value of $b_i$ and $A_{ij}$ for $i\in [m]$ and  $j\in [d]$.
If this LP is feasible then it has a  feasible solution $z\in {\cal R}^m$ 
such that 
1) $z$ has at most $d+1$ nonzero coordinates, and (2) 
$|z_i|\le ((d+1)U)^{d+1}$ for all $i\in [m]$.
\end{lem}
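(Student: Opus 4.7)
The plan is to convert the system to standard equality form with non-negative variables, invoke existence of a basic feasible solution, and then apply Cramer's rule. First I would split each free variable $y_i$ for $i\notin I$ as $y_i = y_i^+ - y_i^-$ with $y_i^+,y_i^-\ge 0$, and introduce $d$ slack variables $s_1,\ldots,s_d\ge 0$ to turn $A^T y\ge 0$ into the equalities $A^T y - s = 0$. The result is a standard-form feasibility system with non-negative variables and exactly $d+1$ equality constraints (one from $b^Ty=-1$, $d$ from the slackened inequalities). Feasibility of the original system is equivalent to feasibility of the standard-form system, so if the original is feasible, the standard-form polyhedron $\{x\ge 0 : Mx=c\}$ is non-empty, and by the basic LP fact that such a polyhedron has a vertex, a basic feasible solution $\tilde z$ exists.

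For part (1), any BFS of a standard-form system with $d+1$ equalities has at most $d+1$ nonzero coordinates, namely those indexing a set of linearly independent columns of $M$ that can be completed to a basis. Each nonzero original coordinate $z_i$ forces at least one of the corresponding standard-form variables ($y_i^+$, $y_i^-$, or $y_i$ itself for $i\in I$) to be nonzero, so the number of nonzero $z_i$'s is at most the number of nonzero coordinates in $\tilde z$, which is at most $d+1$.

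For part (2), I apply Cramer's rule to the basis. Let $B$ be the invertible $(d+1)\times(d+1)$ submatrix of the standard-form constraint matrix consisting of the basis columns, and let $c = (-1,0,\ldots,0)^T$ be the right-hand side. Each basic variable equals $\det(B_j)/\det(B)$, where $B_j$ is $B$ with its $j$th column replaced by $c$. Since $B$ has integer entries ($\pm A_{ij}$, $\pm b_i$, and $\pm 1$ from slacks) and $\det(B)\neq 0$, we have $|\det(B)|\ge 1$. For the numerator, every entry of $B_j$ is bounded in absolute value by $U$ (using $U\ge 1$, which we may assume since otherwise $A,b$ are zero and the system is infeasible), so the Leibniz expansion yields
\[
|\det(B_j)| \;\le\; (d+1)!\cdot U^{d+1} \;\le\; ((d+1)U)^{d+1},
\]
giving the desired bound $|z_i|\le ((d+1)U)^{d+1}$.

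The main delicate point is the bookkeeping between standard-form coordinates and original coordinates in part (1): one must verify that no original coordinate can ``hide'' behind a combination of standard-form nonzeros that do not contribute to it, which is handled by the simple observation that $y_i^+ - y_i^- \neq 0$ requires at least one of $y_i^\pm$ nonzero. A minor secondary issue is the assumption $U\ge 1$, which is automatic when the system is feasible since $b^Ty=-1$ cannot hold with $b\equiv 0$.
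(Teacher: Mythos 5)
Your proof is correct and follows essentially the same route as the paper: identify a basic solution with few nonzero coordinates, then bound each coordinate via Cramer's rule on an integer square system with entries bounded by $U$. The paper works directly with a minimal face of the original polyhedron (implicitly treating it as a vertex, where $m$ linearly independent tight constraints, all but at most $d+1$ of the form $y_i=0$, pin down the point), whereas you first pass to standard equality form and invoke a basic feasible solution of $\{x\ge 0 : Mx=c\}$. Your route is in fact a bit cleaner: a standard-form polyhedron never contains a line and hence always has a vertex, so you sidestep the pointedness question that arises when the free variables $y_i$, $i\notin I$, create a lineality space. The paper's footnote acknowledges the non-vertex case but does not say how $z$ should be chosen there; your conversion handles it automatically.

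Two small points worth tightening. First, if $\mathrm{rank}(M)<d+1$ the basis matrix $B$ is $r\times r$ with $r=\mathrm{rank}(M)$ after dropping redundant equality rows, not $(d+1)\times(d+1)$; the same estimate $|\det(B_j)|\le r!\,U^r\le((d+1)U)^{d+1}$ still applies (using $U\ge1$). Second, for $i\notin I$ you conclude $|z_i|\le((d+1)U)^{d+1}$ from $z_i=y_i^+-y_i^-$, but bounding each term separately would only give $2((d+1)U)^{d+1}$. You should add the observation that at a BFS at most one of $y_i^+,y_i^-$ can be basic, since their columns in $M$ are negatives of each other and hence linearly dependent; thus $z_i$ equals $\pm$ a single basic variable (or zero) and inherits the Cramer bound directly.
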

\begin{proof}
We pick $z$ on the flat of minimum dimension on the boundary of the polyhedron defined by this LP.
This flat is defined by at most $m$ of the constraints, and it contains all points that satisfy these $m$ constraints with equality.\footnote{This flat is a vertex defined by $m$ constraints if the polyhedron has a vertex, and otherwise it may be some unbounded flat of higher dimension completely contained in the polyhedron.}
Since only $d+1$ of the constraints are not of the form 
$y_i\ge 0$ it follows
that $z$ has at most $d+1$ non-zero coordinates.

 By Cramer's law,
 each nonzero coordinate of $z$ is a  quotient  of  determinants of square submatrices of dimension at most $d+1$, consisting of numbers from $A$ and $b$. Since the entries in $A$ and $b$ are integers, the absolute value of the denominator of this quotient is at least $1$, Since the absolute value of these integers is at most $U$, the absolute value of the numerator is at most $(d+1)! U^{d+1} \le ((d+1)U)^{d+1}$. So the lemma follows.
\end{proof}

\begin{lem} \label{lem:Peta2}
The linear program
\begin{eqnarray*}
 A_1 x & \le & b_1   \\
 A_2 x & = & b_2 \;\;\; (P) \\
 x & \ge & 0  
\end{eqnarray*}
is feasible if the linear program 
\begin{eqnarray*}
A_1 x & \le & b_1 \\
 A_2 x & = & b_2 + \eta_2 \;\;\; (P^\eta) \\
 x & \ge& 0 
\end{eqnarray*}
is feasible.
Here $A_1$ is an $m_1\times d$ matrix, $A_2$ is an $m_2\times d$,
$b_1$ is a vector of length $m_1$, and $b_2$ and $\eta_2$ are vectors of length $m_2$. $x$ is a vector of $d$ variables. 
The entries of $A_1$, $A_2$, $b_1$, and $b_2$ are integers of absolute value at most $U$.
The entries of $\eta_2$ are positive and smaller than $\frac{1}{2(d+1)((d+1)U)^{d+1}}$.
 We assume that $d\ll m=m_1+m_2$.
\end{lem}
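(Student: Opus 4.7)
The plan is to mimic the proof of \Cref{lem:Peps} in this mixed primal setting (inequalities $A_1 x \le b_1$ together with equalities $A_2 x = b_2$) and establish the contrapositive: if $(P)$ is infeasible, then so is $(P^\eta)$.

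Adjoin the trivial objective $\max 0 \cdot x$ to $(P)$. Its LP dual is
\[
(D):\quad \min\bigl\{ b_1^\top y_1 + b_2^\top y_2 \;:\; A_1^\top y_1 + A_2^\top y_2 \ge 0,\ y_1 \ge 0 \bigr\},
\]
where $y_1\in\R^{m_1}$ is non-negative (dual to the inequality block) and $y_2\in\R^{m_2}$ is free (dual to the equality block). Because $(y_1,y_2)=(0,0)$ is feasible for $(D)$ and $(P)$ is infeasible, LP duality forces $(D)$ to be unbounded. Rescaling an unboundedness direction, we obtain $(y_1,y_2)$ satisfying $y_1\ge 0$, $A_1^\top y_1 + A_2^\top y_2 \ge 0$, and $b_1^\top y_1 + b_2^\top y_2 = -1$.

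I would then apply \Cref{lem:cramer} to this system, viewing the combined vector $y=(y_1,y_2)\in\R^{m_1+m_2}$, using the stacked matrix with rows from $A_1$ on top of rows from $A_2$, right-hand side $(b_1,b_2)$, and $I=\{1,\dots,m_1\}$ indexing the sign-constrained coordinates. This produces a feasible point $z=(z_1,z_2)$ with at most $d+1$ nonzero coordinates and $|z_i|\le ((d+1)U)^{d+1}$ for every $i$. Evaluating the dual of $(P^\eta)$ (the same feasibility constraints, with right-hand side shifted by $\eta_2$ in the second block) at this $z$ yields
\[
b_1^\top z_1 + (b_2+\eta_2)^\top z_2 \;=\; -1 + \eta_2^\top z_2.
\]
Since $z_2$ has at most $d+1$ nonzero coordinates of magnitude at most $((d+1)U)^{d+1}$ and every entry of $\eta_2$ is smaller than $\frac{1}{2(d+1)((d+1)U)^{d+1}}$, we get $|\eta_2^\top z_2|<\tfrac12$, so the dual objective at $z$ is below $-\tfrac12$. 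The dual feasibility constraints are homogeneous in $y$, so scaling $z$ by $c\to\infty$ drives the objective to $-\infty$; hence the dual of $(P^\eta)$ is unbounded and so $(P^\eta)$ is infeasible, which is the contrapositive.

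The main obstacle is mostly bookkeeping. One has to verify that \Cref{lem:cramer} applies when $y_2$ is free while $y_1$ is sign-constrained, but that lemma is already stated for an arbitrary non-negativity-index set $I\subset[m]$, so this causes no trouble. The remaining ingredient is tying the bound on $\|\eta_2\|_\infty$ to the sparsity and coordinate bounds on $z$ coming out of \Cref{lem:cramer}; the chosen threshold $\frac{1}{2(d+1)((d+1)U)^{d+1}}$ is calibrated exactly so that $|\eta_2^\top z_2|\le\tfrac12$, leaving $-\tfrac12$ headroom to conclude unboundedness of the perturbed dual.
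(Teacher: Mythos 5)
Your proposal is correct and follows essentially the same route as the paper's own proof: pass to the contrapositive, take the LP dual of $(P)$ (with $y_1\ge 0$ for the inequality block and $y_2$ free for the equality block), use feasibility at the origin plus infeasibility of $(P)$ to conclude the dual is unbounded, extract a sparse, coordinate-bounded point $z$ of the recession polyhedron via Lemma~\ref{lem:cramer} with $I=\{1,\dots,m_1\}$, bound $|\eta_2^\top z_2|<\tfrac12$ using the stated threshold on $\|\eta_2\|_\infty$, and scale along the homogeneous ray to show the dual of $(P^\eta)$ is unbounded. The only cosmetic difference is that your write-up makes the bound $|\eta_2^\top z_2|<\tfrac12$ explicit as an inequality, whereas the paper's display compresses it; the underlying argument is identical.
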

\begin{proof}
Suppose $(P)$ is not feasible.
Add to $(P)$ an artificial objective 
\[\max \;\; 0\cdot x   \]
The dual to $(P)$ is
\begin{eqnarray*}
 \lefteqn{ \hspace*{-0.2in} \min \;  b^T y}  \\
 A^T y & \ge & 0  \;\;\;  (D) \\
 y_1 & \ge & 0  
\end{eqnarray*}
where $A$ is a matrix of $m$ rows consisting of the $m_1$ rows of $A_1$ and then the $m_2$ rows of $A_2$,
$b$ is the concatenation of $b_1$ and $b_2$,
$y$ is a vector of $m$ dual variables, and $y_1$ is a vector of the $m_1$ first variables in $y$ (duals of the first $m_1$ inequalities in $(P)$).

Clearly $(D)$ is feasible ($y=0$ is a solution), and therefore, since we assume that $(P)$ is infeasible, then by LP duality $(D)$ must be unbounded.
It follows that the polyhedron defined by
\begin{eqnarray*}
  b^T y & = & -1  \\
 A^T y & \ge & 0   \\
 y_1 & \ge & 0  
\end{eqnarray*}
is not empty.

By Lemma \ref{lem:cramer}, there is a point $z$ on the boundary of the feasible region of this LP, s.t.\ 
(1) At most $d+1$ of the coordinates of $z$ are not zero, and (2)
$|z_i|\le ((d+1)U)^{d+1}$ for all $i\in [m]$. So from our definition of $\eta_2$ follows that
\[
\left(b + \binom{\bf 0}{\eta_2}\right)^T z =
b^T z + 1/2 \le -1/2 \ .
\]
Here ${\bf 0}$ is a zero vector of $m_1$ coordinates.
This implies that
\[
\left(b + \binom{\bf 0}{\eta_2}\right)^T c\cdot z =
c(b^T z + 1/2) \le -c/2 \ .
\]
 for any constant $c\ge 0$.
So we conclude that the dual 
$D^\eta$ of $P^\eta$ (with a zero objective added)
given by
\begin{eqnarray*}
 \lefteqn{ \hspace*{-0.2in} \min \;  \left(b + \binom{\bf 0}{\eta_2}\right)^T y }  \\
 A^T y & \ge & 0  \;\;\;  (D^\eta) \\
 y_1 & \ge & 0  
\end{eqnarray*}
is also unbounded. Thus $P^\eta$ is infeasible. 
\end{proof}

\bibliographystyle{plainnat}

\end{document}